\newtheorem{theorem}{Theorem}
\def\BibTeX{{\rm B\kern-.05em{\sc i\kern-.025em b}\kern-.08em
    T\kern-.1667em\lower.7ex\hbox{E}\kern-.125emX}}
\begin{document}

\title{Channel-Aware Distributed Transmission Control and Video Streaming in UAV Networks}

\author{

\IEEEauthorblockN{Masoud Ghazikor\IEEEauthorrefmark{1}, Keenan Roach\IEEEauthorrefmark{2}, Kenny Cheung\IEEEauthorrefmark{2}, Morteza Hashemi\IEEEauthorrefmark{1}}
        \\
        \IEEEauthorblockA{
        \IEEEauthorrefmark{1}Department of Electrical Engineering and Computer Science, University of Kansas, \\
        \IEEEauthorrefmark{2}Universities Space Research Association (USRA), \\
        }
}

\maketitle
\thispagestyle{plain}
\pagestyle{plain}
\begin{abstract}
In this paper, we study the problem of distributed transmission control and video streaming optimization for unmanned aerial vehicles (UAVs) operating in unlicensed spectrum bands. We develop a rigorous cross-layer analysis framework that \emph{jointly} considers three inter-dependent factors: (i) in-band interference introduced by ground-aerial nodes at the physical (PHY) layer, (ii) limited-size queues with delay-constrained packet arrival at the medium access control (MAC) layer, and (iii) video encoding rate at the application layer. First, we formulate an optimization problem
to maximize the average throughput by optimizing the fading threshold (transmission policy). To this end, we jointly analyze the queue-related packet loss probabilities  (i.e., buffer overflow and time threshold event) as well as the outage probability due to the low signal-to-interference-plus-noise ratio (SINR). We introduce the Distributed Transmission Control (DTC) algorithm that maximizes the average throughput by adjusting transmission policies to balance the trade-offs between packet drop from queues vs. transmission errors due to low SINRs.
Second, we incorporate the video distortion model to develop distributed peak signal-to-noise ratio (PSNR) optimization for video streaming. The formulated optimization incorporates two cross-layer parameters, specifically the fading threshold and video encoding rate. To tackle this problem, we develop the Joint Distributed Video Transmission and Encoder Control (JDVT-EC) algorithm that enhances the average PSNR for all nodes by fine-tuning transmission policies and video encoding rates to balance the trade-offs between packet loss and lossy video compression distortions. Through extensive numerical analysis, we thoroughly examine the proposed algorithms and demonstrate that they are able to find the optimal transmission policies and video encoding rates under various scenarios. Notably, our approach improves the average throughput by 1.7\% to 51.65\% compared to various baselines, including the selfish and random policies. Additionally, we demonstrate an average PSNR increase of 0.24 dB and 1.7 dB compared to separately optimizing the fading threshold and video encoding rate, respectively.
\end{abstract}

\begin{IEEEkeywords}
UAVs, Unlicensed spectrum, Distributed transmission control, Distributed video streaming optimization.
\end{IEEEkeywords}

\section{Introduction}
Over the past few years, unmanned aerial vehicle (UAV)-based wireless communication has shown promising applications due to ease of deployment, high agility, ability to perform different tasks, and the high probability of establishing Line-of-Sight (LoS) links with ground-aerial nodes. In general, wireless networks with mobile and dynamic topologies, such as UAV networks, require efficient resource allocation and network optimization strategies to handle interference, queue dynamics, and network congestion. Traditional centralized techniques (e.g., ~\cite{yin2015joint, Nishiyama-2017-OFDM}) work well in \emph{infrastructure-based} settings, but they are typically ineffective when no central infrastructure is available for coordination across distributed nodes. This scenario is common when multiple ground and aerial nodes operate in an \emph{infrastructure-less} mode using unlicensed spectrum bands, as shown in Fig. \ref{system_model}. In this case, unlicensed bands are shared among distributed nodes that are more susceptible to interference, thereby creating an interference-limited network with ground and aerial nodes~\cite{Chintareddy-2023-Collaborative}.

\begin{figure}[t]
\includegraphics[width=0.9\linewidth]{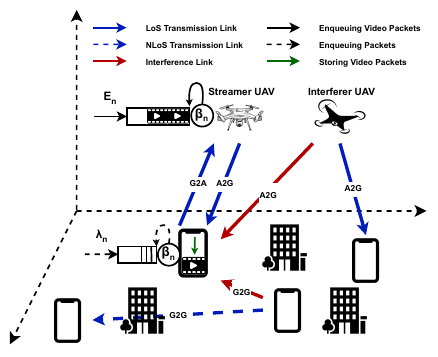}
\centering
\caption{System model including aerial and ground nodes operating in unlicensed spectrum bands. The proposed system model captures in-band interference, limited-size queues with delay-constrained traffic, and video streaming between the streamer UAV and its associated ground node.
}
\label{system_model}
\end{figure}

In the context of infrastructure-less networks, distributed optimization solutions have been proposed in prior works (see, for example, ~\cite{scutari2008optimal,yang2016distributed,guan2011distributed} and references therein) to enable wireless nodes to make local decisions (transmit power, precoding matrices, sub-channel selection, etc.) based on network state information. However, UAV networks introduce unique challenges, such as dynamic link conditions, distinct interference patterns compared with ground networks, and stringent performance requirements, that necessitate developing distributed optimization solutions that are specifically tailored for their operation scenarios. For instance, a common deployment scenario for an infrastructure-less, interference-limited UAV network is video streaming services in areas with limited physical infrastructure, such as disaster-affected areas. In fact, the capability of UAVs for video streaming plays a crucial role in public safety applications~\cite{Khan-2024-Efficient}, as they can be deployed in search and rescue operations to efficiently cover large areas and provide live videos to enable quick decision-making and resource allocation strategies~\cite{Yuan-2017-ComProSe}.

Hence, developing \emph{distributed transmission control and video streaming policies} is crucial to ensure high quality of experience (QoE), particularly for UAVs that depend on time-sensitive video packets and command-and-control (C2) data~\cite{Mohit-2023-UAV, Badnava-2022-QoE}. Although extensive research has been conducted on UAV networks~\cite{Bithas-2023-Generalized,Bithas-2020-Channel,Cui-2019-Data} (more details in Section \ref{Related Work}), development of a comprehensive system model and problem formulation for distributed transmission policies that account for, interference effects, queue dynamics, and video streaming requirements is still lacking.

To address this gap, we develop distributed transmission policies for integrated ground-aerial networks by jointly considering: (i) interference levels in unlicensed spectrum bands and their impact on the signal-to-interference-plus-noise ratio (SINR), (ii) transmission queue states in terms of buffer size and queuing delay, and (iii) video encoding rate optimization. By considering the interplay of these interrelated components, we explore how UAVs and ground-level nodes achieve an optimal policy by adjusting the fading threshold and video encoding rate in a distributed manner to achieve optimal average throughput and peak signal-to-noise ratio (PSNR). 

In this paper, we formulate two optimization problems to maximize the average throughput and PSNR performance, which is an objective metric for measuring distortion in video streaming applications, and has been widely used in the literature due to various reasons such as being easy to calculate, having a clear physical meaning, and being mathematically convenient for optimization~\cite{Zhou-2004-Image, Xiaoqing-2005-Congestion, Tian-2016-Interference, BadnavaCH:24a}. To solve the formulated problems, we first propose the Distributed Transmission Control (DTC) algorithm that determines the optimal fading threshold set $\boldsymbol{\beta}^{\star}$ in the environment to maximize the average throughput for all nodes. Next, to solve the video streaming problem, we develop the Joint Distributed Video Transmission and Encoder Control (JDVT-EC) algorithm that optimizes the average PSNR for all nodes by adjusting fading thresholds and video encoding rates to balance the trade-offs between packet loss and lossy video compression distortions. Table \ref{proposed_algorithms} provides a summary of the proposed algorithms, including their objective functions and optimization parameters. Through comprehensive numerical results, we compare the performance of our proposed algorithms with several baselines, confirming the effectiveness of our method. In summary, the main contributions of this paper are as follows. 

\begin{itemize}
\item \emph{Cross-Layer optimization framework:}  We develop a comprehensive cross-layer optimization framework that explicitly integrates transmission errors due to low SINR, queue-related packet loss probabilities (i.e., buffer overflow and time threshold losses), and video compression distortions into a unified problem formulation. 
Our framework is designed to optimize the average throughput and PSNR by adjusting fading thresholds and video encoding rates for an integrated aerial-ground network operating in the unlicensed spectrum bands.

\item \emph{Distributed algorithms for transmission control and video streaming:} Using consensus-based distributed algorithm and coordinate descent optimization, we develop two algorithms: (i) Distributed Transmission Control (DTC) algorithm, which dynamically adjusts fading thresholds to maximize the average throughput by mitigating trade-offs between low-SINR transmission errors and queue packet losses, and (ii) Joint Distributed Video Transmission and Encoder Control (JDVT-EC) algorithm, which optimally balances packet loss probabilities and video distortions by jointly adjusting fading thresholds and video encoding rates. These algorithms provide the optimal fading threshold and video encoding rate for maximizing the average throughput and PSNR across all aerial and ground nodes.

\item \emph{Extensive performance evaluation:} 
We conduct extensive numerical evaluations to demonstrate the efficacy of our proposed algorithms compared with several baselines. In particular, we achieve up to $51.65\%$ higher average throughput and up to $1.7$ dB improvement in average PSNR compared to several baselines, such as selfish policy, random policy, or separately optimized parameters. Our numerical results demonstrate that the proposed cross-layer and distributed algorithms achieve enhanced trade-offs between different factors that contribute to the packet losses and video distortion.
\end{itemize}
\noindent The rest of this paper is organized as follows. In Section \ref{Related Work}, the related work is presented. Section \ref{System Model} describes the system model, including wireless channels, queuing, and video streaming models. In Section \ref{Bi-Directional Link Performance}, we formulate the distributed transmission control problem in terms of queue and interference analysis for throughput optimization. Section \ref{Proposed Solution} investigates the distortion-based PSNR optimization problem. In Section \ref{Numerical Results}, extensive numerical results are provided, followed by the conclusion in Section \ref{Conclusion}. 

\begin{table}
\centering
\caption{Overview of Proposed Optimization Algorithms}
\resizebox{\columnwidth}{!}{
\begin{tabular}{l c c}
    \toprule
    Algorithm & Decision Variable & Objective Function \\
    \midrule
    DTC (Alg. \ref{alg:DTC}) & Fading Threshold & Average Throughput \\
    JDVT-EC (Alg. \ref{alg:JDVT-EC}) & Fading Threshold $\&$ Encoding Rate & Average PSNR \\
    \midrule
    \multicolumn{3}{l}{Sub-algorithms of JDVT-EC} \\
    \quad DVTC (Alg. \ref{alg:DVTC}) & Fading Threshold & Average PSNR \\
    \quad DVEC (Alg. \ref{alg:DVEC}) & Encoding Rate & Average PSNR \\
    \bottomrule
\end{tabular}
}
\label{proposed_algorithms}
\end{table}

\section{Related Work} \label{Related Work}
We discuss related works by dividing our analysis into three main categories: distributed transmission control, interference-limited UAV networks, and video streaming $\&$ PSNR-distortion optimization.

\textbf{Distributed Transmission Control.} In this part, we review those related works that are focused on developing distributed transmission control policies. \citeauthor{Guan-2016-ToTransmit}~\cite{Guan-2016-ToTransmit} proposed a distributed transmission policy for adjusting fading thresholds in interference-limited networks. They developed a game-theoretic solution to maximize throughput by considering transmission errors in the ground-to-ground (G2G) channel and queuing delays in the transmission queue. Furthermore, the authors in~\cite{Tian-2016-Interference} proposed a cross-layer framework by considering transmission error, queuing delay, and encoding rate in a cross-layer manner for the G2G channel. They formulated a distributed optimization problem to maximize PSNR and solved it using game theory. The authors in~\cite{Yi-2018-Transmission} proposed a framework for managing delay-sensitive packet transmissions to base stations (BSs) by considering packet priorities. A non-cooperative delay-dependent prioritized queuing game was formulated to optimize distributed transmission scheduling. In addition to these works, several other works have studied distributed transmission control~\cite{Guan-2016-ToTransmit, Yi-2018-Transmission, Tian-2018-QoS}, especially for video packet transmission \cite{Tian-2016-Interference, Ye-2018-Quality}. However, these studies do not consider aerial networks, which present completely different physical channel and interference models. In this paper, we develop distributed cross-layer algorithms that explicitly capture the characteristics of both ground and aerial links.

\textbf{Interference-limited UAV Networks.} In this category of related works, \citeauthor{Hellaoui-2019-Towards}~\cite{Hellaoui-2019-Towards} proposed a cellular-based UAV control framework in which a BS transmits C2 messages toward a UAV in the presence of interference nodes. They introduced an optimal sub-carrier allocation algorithm to compensate for the low throughput due to interference nodes. The authors in~\cite{Shen-2020-Multi} considered a system model with interference channels such that multiple UAVs communicate with ground nodes, while managing interference due to LoS links. Furthermore, ~\cite{Kim-2019-Impact} analyzed the impact of interference on UAVs by considering different channel fading for LoS and NLoS links. A closed-form outage probability expression is derived, revealing the existence of an optimal UAV height that minimizes outage. On top of these studies, extensive amounts of work have focused on different aspects of interference-limited UAV communications~\cite{Hellaoui-2019-Towards, Shen-2020-Multi, Kim-2019-Impact, Mei-2020-Cooperative, Kim-2018-Outage}. While these studies have explored interference-limited UAV communications, they do not account for \emph{distributed decision-making} across ground-aerial nodes. This paper aims to address this gap.

\textbf{Video Streaming $\&$ PSNR-Distortion Optimization.} In the context of video streaming and PSNR-distortion optimization, the authors in~\cite{He-2019-QoE} addressed the uplink bandwidth allocation problem for UAVs performing real-time video streaming. A distributed allocation mechanism is proposed to optimize video quality, allowing each UAV to independently adjust its video encoding rate. 
Using a self-learning algorithm, UAVs iteratively update their bandwidth strategies until reaching an equilibrium. 
More recently,~\cite{Kesong-2024-Joint} proposed the joint video coding and channel transmission optimization for UAVs. This study developed an integrated delay-power-rate-distortion model to capture the correlation between video coding and channel transmission and minimize end-to-end distortions and UAV power consumption. Furthermore, the authors in~\cite{Shen-2024-UAV} proposed a UAV-assisted video multicast streaming for non-orthogonal multiple access (NOMA) networks to improve aggregate PSNR for cell-edge users. A graph-based model is developed to optimize UAV placement, multicast group association, and subchannel allocation. In addition to these studies, there are additional studies on UAV-based video streaming \cite{He-2019-QoE, Yu-2021-Multi, Zhan-2021-Joint, Liao-2022-QoE} and PSNR-distortion optimization for general networks (i.e., not necessarily UAV networks)~\cite{Tian-2016-Interference, Kesong-2024-Joint, Shen-2024-UAV, Tang-2021-QoE, BadnavaCH:24a}. Compared with our paper, these prior works did not specifically focus on distributed optimization for \emph{interference-limited} ground-aerial networks.

Overall, while prior studies have provided significant insights on distributed algorithms, interference-limited UAV networking, and video streaming optimization, our work advances these efforts by introducing a unified framework that integrates and bridges the gap between these three components. In particular, this paper develops distributed transmission control and video streaming policies that jointly consider \textbf{(i)} in-band interference in unlicensed spectrum bands at the PHY layer, \textbf{(ii)} the transmission queue condition in terms of buffer overflow and maximum queuing delay at the MAC layer, and \textbf{(iii)} the level of the video encoding rate at the application layer. This paper extends our preliminary results in \cite{ghazikor2024interference} in several directions. In the preliminary work, we only investigated the distributed transmission control policy by adjusting the fading threshold. In this paper, we extend the distributed transmission control policy to the distributed video transmission and encoder control policy, which jointly adjusts the fading threshold and video encoding rate to maximize the average PSNR over distributed ground-aerial nodes using a cross-layer scheme at application, MAC, and PHY layers. Furthermore, we have significantly extended our numerical evaluations to investigate throughput and PSNR optimization, as well as the spatial performance of the proposed algorithms.

\section{System Model} \label{System Model}
Our envisioned system model is shown in Fig. \ref{system_model}, which consists of ground and aerial nodes communicating over unlicensed spectrum bands to establish ground-to-air (G2A), air-to-ground (A2G), ground-to-ground (G2G), and air-to-air (A2A) links. The spectrum band is divided into a set of $\boldsymbol{F}$ frequency channels, such that
$\boldsymbol{N}$ denotes the set of communication sessions that share the same spectrum band, and $n \in \boldsymbol{N}$ represents the individual session between the source node and streamer UAV. We assume that the source node communicates with the {streamer UAV} (left UAV in Fig. \ref{system_model}), while one of the interferer ground nodes communicates with the interferer UAV (right UAV in Fig. \ref{system_model}). Each node dynamically adjusts its transmission policy and encoding rate to optimize throughput and video quality. To this end, we incorporate the impacts of three key components, namely (i) wireless channel models, which capture the characteristics of different links, (ii) queue model, which accounts for finite buffer sizes and delay-sensitive data traffic, and (iii) video streaming model, which controls encoding rates to balance video distortion and network congestion. Unlike existing UAV communication models (as presented in Section \ref{Related Work}), our system model explicitly integrates these components to dynamically find optimal fading thresholds and video encoding rates based on channel conditions, queue states, and interference levels. This cross-layer approach enables aerial and ground nodes to adjust their transmission policies and encoding rates in a distributed manner.

\begin{table}
    \centering
    \caption{Key Symbols}
    \resizebox{\columnwidth}{!}{%
    \label{tab:key_symbols}
    \begin{tabular}{lc}
        \toprule
        Definition & Symbol \\
        \midrule
        LoS probability environment parameters & $\zeta$, $v$, $\mu$\\
        Source node index & $n$\\
        Interferer node index & $m$\\
        Node index & $i$\\
        Elevation angle & $\theta_i$\\
        Square root of single-slope path loss & $\hat{h}_i^f$\\
        Channel fading coefficient & $\Tilde{h}_i^f$\\
        Probability Density  Function & $\mathrm{f}_{X}(x)$ \\
        Cumulative Density Function & $\mathcal{F}_{X}(x)$ \\ 
        LoS and NLoS path loss exponents & $\alpha_{L}$, $\alpha_{N}$\\
        Reference and node-destination distances & $d_0$, $d_i$\\
        Number of sub-channels & $|\boldsymbol{F}|$\\
        LoS and NLoS Rician factors & $K_{L}$, $K_{N}$\\
        Channel fading threshold & $\beta_i$\\
        Time threshold and time slot duration & $T^{th}_n$, $T^{slt}_n$\\
        Average incoming packet rate & $\lambda_n$\\
        Normalized buffer capacity & $\Tilde{b}_n$\\
        Transmission power & $P_i$\\
        SINR threshold & $\gamma_{th}$\\
        Thermal noise power & $\sigma^2$\\
        Operating frequency and bandwidth & $f$, $W$\\
        Noise temperature & $T$\\
        Boltzmann constant & $k$\\
        Sensitivity parameter & $s_n$\\
        Video encoding rate & $E_n$\\
        Average video packet length & $L_n$\\
        Rate-distortion parameters & $D_{0}$, $E_{0}$, $\theta_0$\\
        Video pixel bit-depth & $p$\\ 
        \bottomrule
    \end{tabular}
    }
\end{table}

\subsection{Channel Model}
\noindent 
\textbf{Distance-Based LoS Probability Model.}
In our previous studies~\cite{Ghazikor-2023-Exploring}, we focused on G2A and A2G channels, deriving LoS probability $\mathbb{P}_L(\theta_i)$ and Rician factor $K(\theta_i)$ according to the elevation angle $\theta_i$ \cite{Kim-2018-Outage}. However, this approach is not suitable for A2A channels since, as the altitude of the nodes increases, $\theta_i$ decreases, leading to reduced values of $\mathbb{P}_L(\theta_i)$ and $K(\theta_i)$ for A2A channels. In reality, we expect an increase in both $\mathbb{P}_L(\theta_i)$ and $K(\theta_i)$ at higher altitudes. Therefore, we present a complete LoS probability model based on distance $d_i$, covering all G2G, G2A, A2G, and A2A channels. Thus, $\mathbb{P}_L(d_i)$ can be expressed as follows \cite{Mohammed-2021-Line, Kim-2019-Impact}:
\begin{align*}
\mathbb{P}_{L}(d_i) = 
\begin{cases}
    \Bigl(1-e^{(-\frac{z_{i}^2}{2\zeta^2})}\Bigl)^{d_i\sqrt{v\mu}}       & \text{$z_{i} = z_{u}$}\\
    \Bigl(1-\frac{\sqrt{2\pi}\zeta}{d_i^{V}}\left|Q(\frac{z_{i}}{\zeta})-Q(\frac{z_{u}}{\zeta})\right|\Bigl)^{d_i^{H}\sqrt{v\mu}}          & \text{$z_{i} \neq z_{u}$}.
\end{cases}
\end{align*}
Here, $\zeta$, $v$, and $\mu$ represent environmental parameters, and $Q(.)$ denotes the $Q$-function. Furthermore, $d_i^{H} = \sqrt{(x_{i}-x_{u})^2 + (y_{i}-y_{u})^2}$ and $d_i^{V} = \sqrt{(z_{i}-z_{u})^2}$ denote the horizontal and vertical distances between the transmitter node $i$ and the receiver node $u$, respectively. Hence, the total distance between node $i$ and a designated receiver is determined by $d_i = \sqrt{{d_i^H}^2 + {d_i^V}^2}$ for $\forall{i} = \left\{n,\boldsymbol{m}\right\}$, where the indices $n \in \boldsymbol{N}$ and $\boldsymbol{m} \subset \boldsymbol{N}$ are the source node and the set of interferer nodes within a specified area, respectively.

\noindent
\textbf{Single-Slope Path Loss Model.}
With the transmit power $P_t$, the received power $P_r$ is expressed as $P_r = P_t |h_n^f|^2$, where $h_n^f$ denotes the channel gain of sub-channel $f \in \boldsymbol{F}$. Also, $h_n^f$ can be defined as $h_n^f = \Tilde{h}_n^f \hat{h}_n^f$, where $\Tilde{h}_n^f$ and $\hat{h}_n^f$ are the fading coefficient and the square root of the path loss, respectively. By the single-slope path loss model~\cite{Ren-2011-Modelling}, we have:
\begin{align}
\hat{h}_n^f = \sqrt{C(\frac{d_0}{d_n})^{\alpha(d_n)}} \quad \text{if} \enskip d_n \ge d_0, 
\end{align}
where $C = \frac{\lambda^2}{16\pi^2d_0^2}$ serves as a constant factor in which $\lambda = \frac{c}{f}$ represents the wavelength. Furthermore, $d_0$ and $d_n$ express the reference distance and the distance between the source and destination nodes, respectively. Furthermore, the path loss exponent is defined as $\alpha(d_n) = \alpha_{L}\mathbb{P}_{L}(d_n) + \alpha_{N}(1 - \mathbb{P}_{L}(d_n))$, where $\alpha_{L}$ and $\alpha_{N}$ denote the path loss exponents for LoS and NLoS links, respectively \cite{Azari-2018-Ultra}.

\subsection{Queuing Model}
We assume each node has a limited-size queue, where the packet arrival process ($\lambda_i$) follows a Poisson distribution. Each node either transmits the packet to its destination or keeps it in its queue. This transmission decision is determined based on channel conditions and queue states. For instance, if two or more nodes choose the same channel to transmit packets simultaneously, there would be in-band interference and degraded SINR values. Packets may be dropped from the queue due to two primary events: (i) buffer overflow and (ii) time threshold loss that occurs if a packet remains in the queue beyond a time threshold $T_n^{\text{th}}$. The probabilities of packet loss due to buffer overflow and time threshold are, respectively, denoted by $P_n^{\text{ov}}(\beta_n)$ and $P_n^{\text{dly}}(\beta_n)$.  These queue-related losses directly impact video quality, as they increase overall distortion, as detailed in Section~\ref{Proposed Solution}.

To develop distributed transmission policies, it is important to capture the interplay between queue dynamics and transmission strategies. In particular, a higher fading threshold $\beta_n$ reduces transmission errors but increases the probability of packet loss due to buffer overflow and exceeding time threshold $T_n^{\text{th}}$. Our proposed DTC optimization framework balances this trade-off by dynamically adjusting $\beta_n$ based on queue states and interference conditions in order to improve the average throughput. Furthermore, our proposed JDVT-EC algorithm jointly optimizes with respect to the fading threshold $\beta_n$ and video encoding rate $E_n$, as described next.

\subsection{Video Streaming Model}
In our UAV network, we aim to design an efficient video streaming policy that follows a cross-layer framework, where decisions made at the application, MAC, and PHY layers impact the overall streaming quality. 
In particular, each streaming node encodes video packets at the bitrate $E_n$, which impacts video distortions and network congestion. A higher encoding rate enhances the PSNR by reducing video distortions, but at the cost of increased network traffic and interference for other nodes, which in turn, can increase packet losses from the queues as controlled by adjusting the fading threshold $\beta_n$. To ensure high-quality video streaming, we consider an adaptive encoding strategy where the UAV streaming node adjusts the encoding rate $E_n$ based on link conditions and available transmission resources.

Once video frames are encoded, they are packetized and added to a finite-size transmission queue. As mentioned, each packet is subject to three types of losses: \textbf{(i)} buffer overflow loss, \textbf{(ii)} time threshold loss, and \textbf{(iii)} transmission error loss if the SINR is less than the threshold $\gamma_{th}$. To mitigate these losses, our proposed \emph{Joint Distributed Video Transmission and Encoder Control (JDVT-EC) algorithm} dynamically adjusts both the video encoding rate $E_n$ and transmission policy characterized in terms of fading threshold $\beta_n$.
The detailed mathematical formulation of the video distortion model and PSNR optimization are presented in Section \ref{Proposed Solution}.

With the system model established, we now define the optimization framework that enables distributed nodes to optimally adjust their fading thresholds and video encoding rates. First, we formalize the distributed transmission control for adjusting fading thresholds in Section \ref{Bi-Directional Link Performance}, followed by the joint optimization of fading thresholds and encoding rates in Section \ref{Proposed Solution}.

\section{Distributed Transmission Control} \label{Bi-Directional Link Performance}
In this section, we characterize the throughput performance in terms of constituent queuing and interference components. Next, we present the Distributed Transmission Control (DTC) algorithm to optimize the average throughput in a distributed manner.

\begin{figure*}[t]
    \centering
    \begin{minipage}{.33\linewidth}
        \includegraphics[width=6cm, trim={30 0 80 40}, clip]{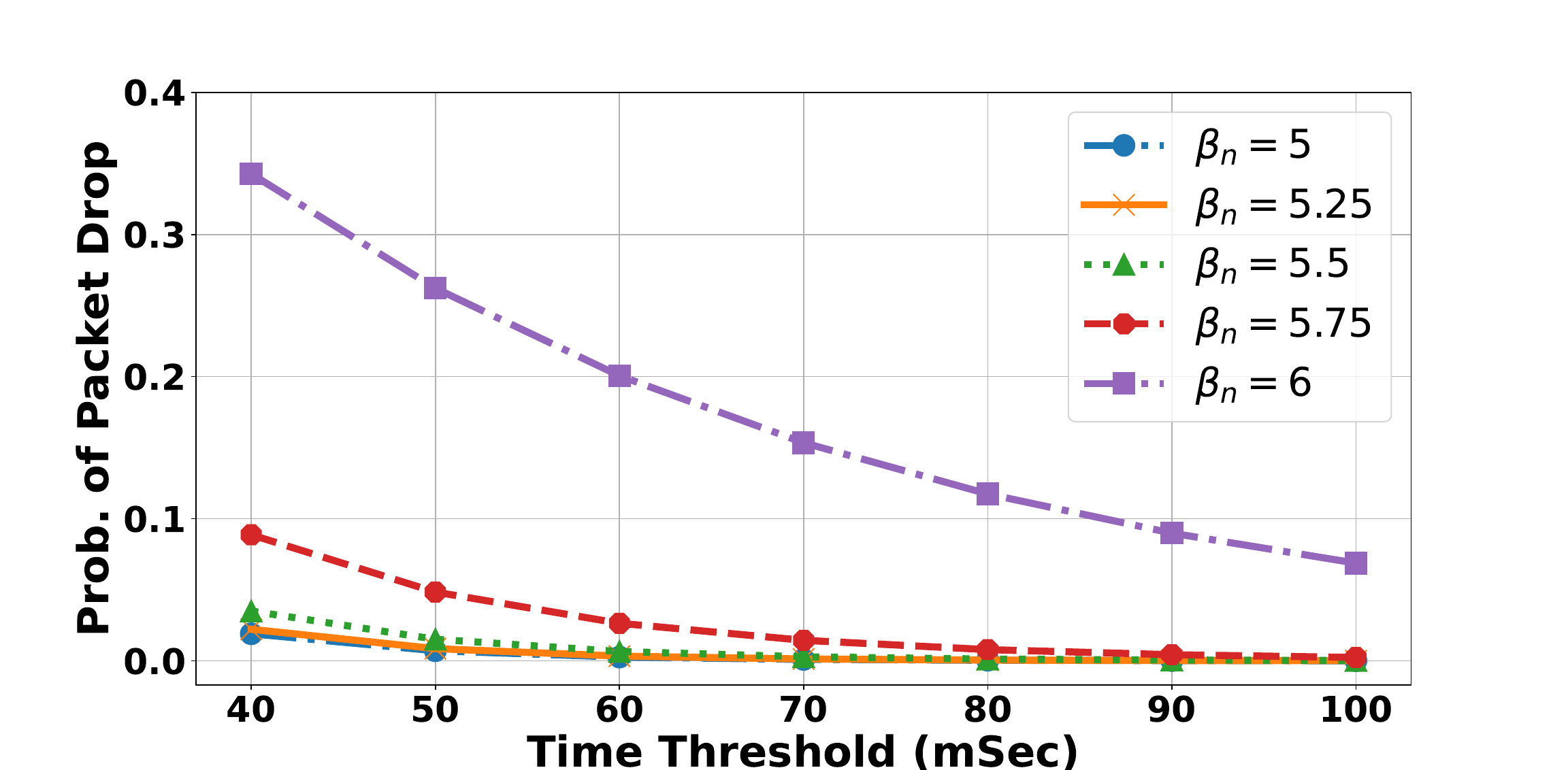}
        \caption{$\mathbb{P}_n^{dly}(\beta_n)$ vs. $T^{th}_n$ by Different $\beta_n$.}
        \label{Pdly_Tth}
    \end{minipage}%
    \begin{minipage}{.33\linewidth}
        \includegraphics[width=6cm, trim={30 0 80 40}, clip]{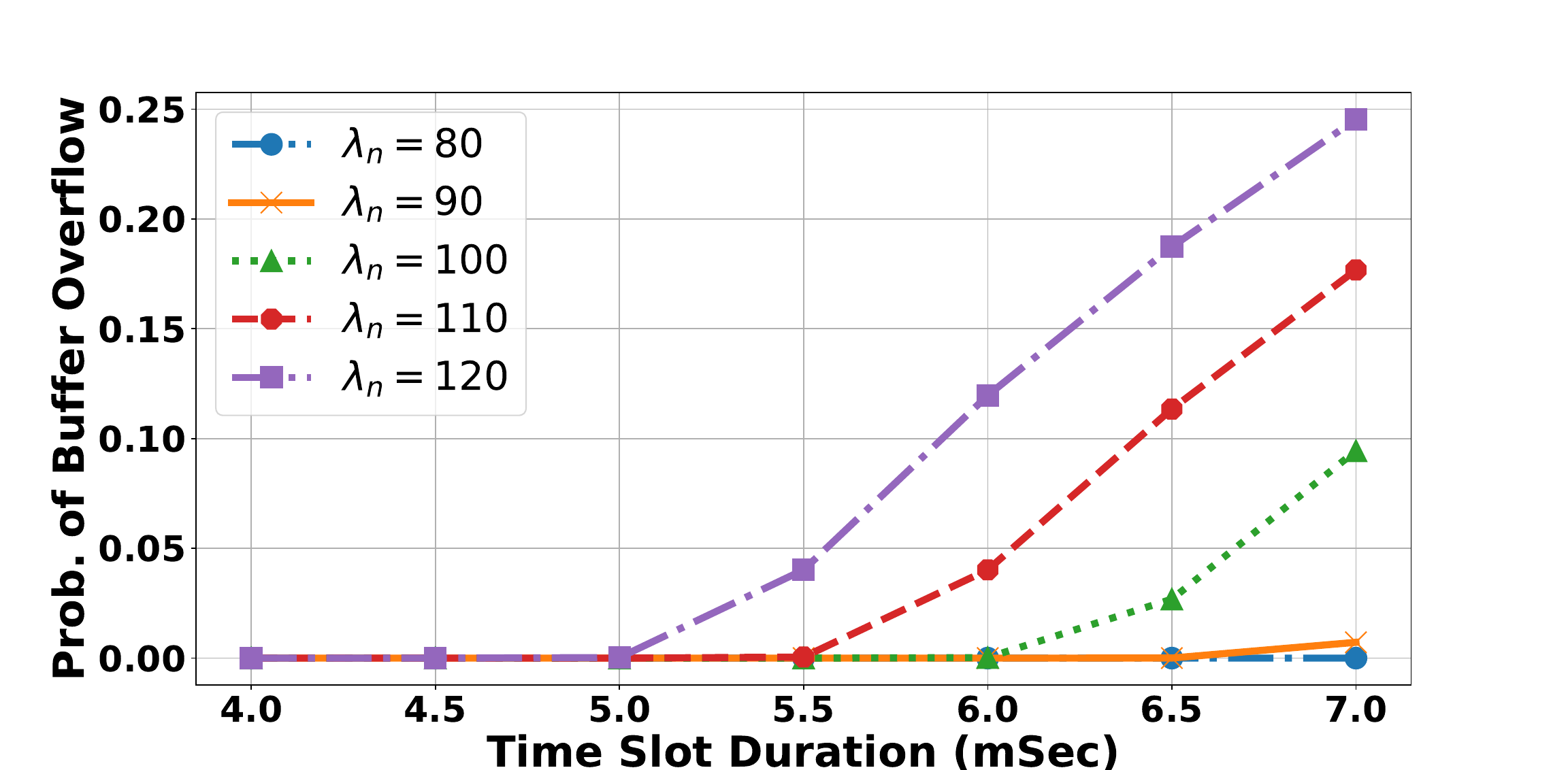}
        \caption{$\mathbb{P}_n^{ov}(\beta_n)$ vs. $T^{slt}_n$ by Different $\lambda_n$.}
        \label{Pov_Tslt}
    \end{minipage}%
    \begin{minipage}{.33\linewidth}
        \includegraphics[width=6cm, trim={30 0 80 40}, clip]{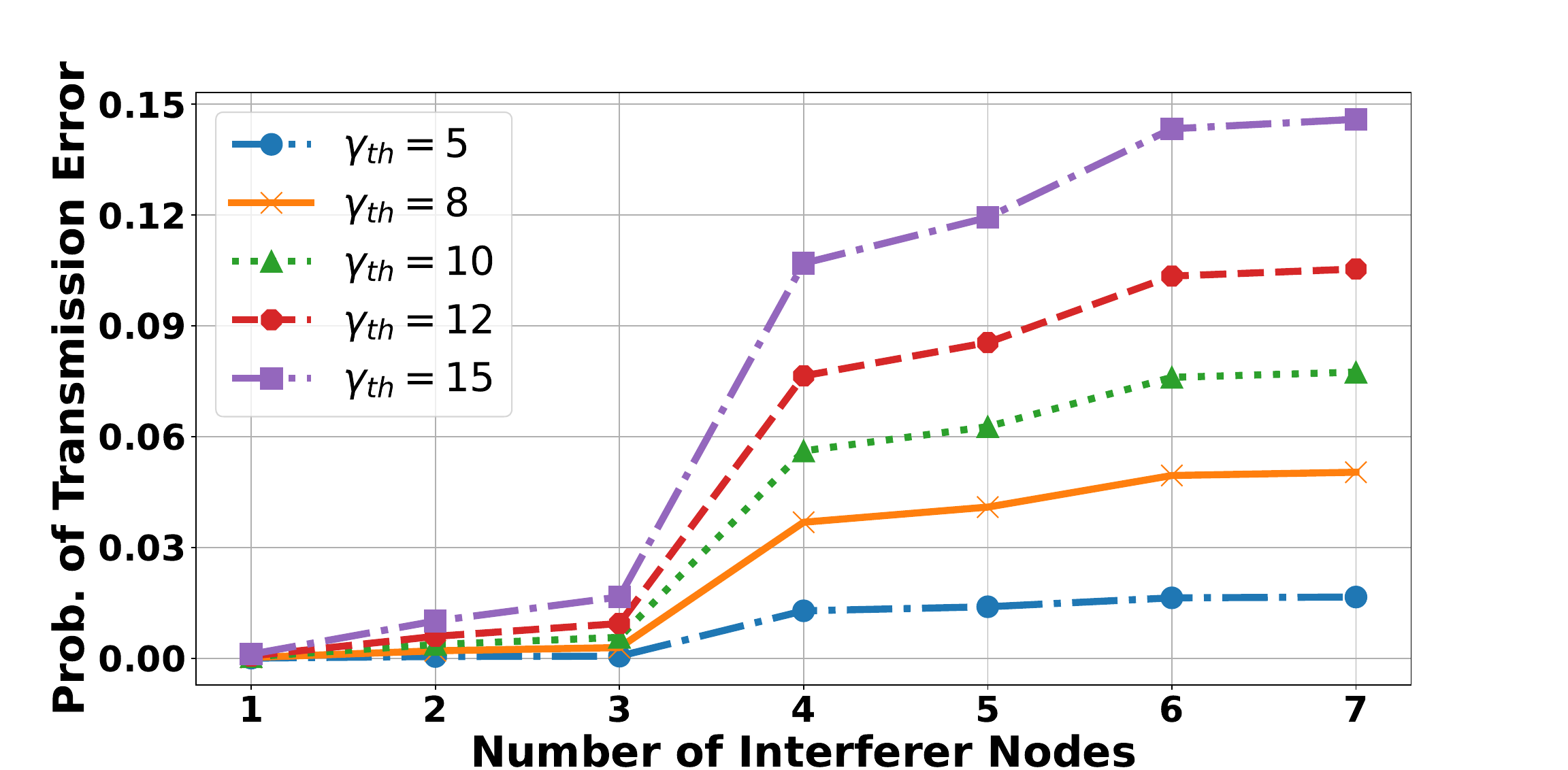}
        \caption{$\mathbb{P}_n^{err}(\boldsymbol{\beta})$ vs. \# of Interferers by Different $\gamma_{th}$.}
        \label{Perr_Nodes}
    \end{minipage}%
\end{figure*}

\subsection{Channel Transmission Policy}
In the context of a block fading channel model, the variable $\Tilde{h}_{n}^f$ follows Rician or Rayleigh distributions, depending on whether it corresponds to LoS or NLoS channels, respectively \cite{Kumar-2023-Outage}. Here, we only discuss the Rician fading model for brevity~\cite{Ghazikor-2023-Exploring}. Hence, the probability density function (PDF) of the Rician distribution for the LoS channel is given by \cite{Zhou-2018-Coverage}:
\begin{align}
\mathrm{f}_{\Tilde{h}_{n}^f}(x) = xe^{-\frac{x^2+b^2}{2}}I_0(xb), 
\end{align}
where $I_0$ denotes the modified Bessel function of the first kind with order zero. Also, the parameter $b = \sqrt{2K(d_n)}$ is defined by the Rician factor $K(d_n) = K_{N}e^{\ln(\frac{K_{L}}{{K_{N}}})\mathbb{P}_{L}(d_n)^2}$, in which $K_{L}$ and $K_{N}$ are determined based on the condition that $\mathbb{P}_{L}(d_n)$ is equal to one and zero, respectively \cite{Kim-2019-Impact}. 

The source node transmits its packet to the destination node over the best frequency channel $f^* = {\arg\max}_{f \in \boldsymbol{F}} \ \ \Tilde{h}_n^f \hat{h}_n^f$ if the fading coefficient is greater than the fading threshold $\beta_n$, i.e., $\Tilde{h}_n^{f^*} \ge \beta_n > 0$; otherwise, the source node enqueues the packet for later transmission \cite{Guan-2016-ToTransmit}. Thus, the cumulative distribution function (CDF) of Rician distribution is determined based on the fading threshold $\beta_n$, which is defined as:
\begin{align}
\mathcal{F}_{\Tilde{h}_{n}^f}(\beta_n) & = \int_{0}^{\beta_n} xe^{-\frac{x^2+b^2}{2}}I_0(xb) \,dx
= 1-Q_1(b,\beta_n).
\end{align}
Here, $Q_1(.,.)$ denotes the first-order Marcum $Q$-function. Let $|\boldsymbol{F}|$ be the cardinality of the set $\boldsymbol{F}$, which specifies the number of sub-channels. Accordingly, the transmission probability of a packet from the source node in a time slot over the Rician fading can be expressed as:
\begin{align} \label{murice}
\begin{aligned}
\mu_n(\beta_n) = 1-\mathcal{F}_{\Tilde{h}_n^{f^*}}(\beta_n) = 1-(1-Q_1(b,\beta_n) )^{|\boldsymbol{F}|}. 
\end{aligned}
\end{align}
A similar approach can be applied to the Rayleigh distribution for the NLoS channel by assuming $b = 0$ and the Rayleigh fading factor $\Omega = 2$ \cite{Ghazikor-2023-Exploring}.

\subsection{Queuing Analysis} \label{Queuing Analysis}
\subsubsection{Time Threshold Model}
The source node $n \in \boldsymbol{N}$ might transmit time-sensitive packets, such as C2 packets, to UAV nodes. Hence, it is important to ensure a reliable delivery of these packets to the destination before a predetermined timeout. Let $T_n$ represent the waiting time in the queue for the source node. Sometimes, the source node cannot transmit the packet due to poor channel conditions, such as low SINR; any packet with a waiting time $T_n$ exceeding the time threshold $T_n^{th}$ will be discarded \cite{Xiaoqing-2005-Congestion}.

Let $t_n$ be the number of time slots necessary for the source node to transmit a packet to the destination through the channel. Using exponential distribution and $\mu_n(\beta_n)$ derived from Eq. \eqref{murice}, the PDF of $ t_n$ can be approximated as (refer to \cite{Guan-2016-ToTransmit} for the accuracy of this approximation):
\begin{align} \label{approximated_pdf}
\begin{aligned}    
\mathrm{f}_{t_n} (x) & = (1 - \mu_n(\beta_n))^{x - 1}\mu_n(\beta_n)
\\ & \approx \mu_n(\beta_n)e^{-\mu_n(\beta_n)x}. 
\end{aligned}
\end{align}
Finally, the probability of dropping a packet, denoted as $\mathbb{P}_n^{dly} (\beta_n)$ due to exceeding the maximum queuing delay threshold $T_n^{th}$ can be represented as~\cite{Guan-2016-ToTransmit}:
\begin{align}
\mathbb{P}_n^{dly} (\beta_n) \triangleq \mathbb{P}(T_n > T_n^{th}) = e^{-(\frac{\mu_n(\beta_n)} {T^{slt}_n} - \lambda_n) T_n^{th}}, 
\end{align}
where $\lambda_n$ represents the average incoming packet rate following a Poisson distribution and $T^{slt}_n$ is the time slot duration. 
\noindent
\textbf{Upper Bound of $\beta_n $ and $\lambda_n$.} The upper bound of $\beta_n$ and $\lambda_n$ for Rician distribution can be determined by $\mathbb{P}_n^{dly} (\beta_n) \le 1$, and by solving the following inequality for $\beta_n$:
\begin{align}
Q_1(b,\beta_n) \ge 1-(1-\lambda_n T^{slt}_n)^{\frac{1}{|\boldsymbol{F}|}}. 
\end{align}
From this inequality, we can also derive the following upper bound for $\lambda_n$: 
\begin{align}
\lambda_n \le \frac{1-(1- Q_1(b,\beta_n) )^{|\boldsymbol{F}|}}{T_n^{slt}}.
\label{eq:lambda-upper-bound}
\end{align}
\noindent  We will use this upper bound to characterize the time complexity of our proposed algorithms in Appendix E.

\subsubsection{Buffer Overflow Model}
In addition to the time threshold model that addresses the time sensitivity of data packets, queues are assumed to have finite buffer sizes. In cases where the source node cannot transmit packets due to poor channel conditions, buffer overflow may occur. When the buffer reaches full capacity, any incoming packets will be dropped. Using the principles of queuing theory, the probability of exceeding the buffer capacity in a particular state $i$, where $i \ge 0$, is determined as \cite{Gross-2008-Fundamentals}:
\begin{align}
\overline{\mathbb{P}_{i,i+1}} = \mathbb{P}[x_1+...+x_{i+1} > b_n | x_1+...+x_i \le b_n].
\end{align}
Here, $b_n$ denotes the buffer capacity for the source node $n \in \boldsymbol{N}$, and $x$ is the packet length following an exponential random variable with a parameter $\eta_n$. 
By applying the Markov chain, the probability of buffer overflow, which represents the probability of packet loss, can be approximated as \cite{Ghazikor-2023-Exploring}:
\begin{align*}
\mathbb{P}_n^{ov}(\beta_n) & \approx \sum_{i=0}^\infty \overline{\mathbb{P}_{i,i+1}}\pi_i 
 = \frac{(1-\rho_n(\beta_n))e^{-\Tilde{b}_n(1-\rho_n(\beta_n))}}{1-\rho_n(\beta_n) e^{-\Tilde{b}_n(1-\rho_n(\beta_n))}},
\end{align*}
where $\rho_n(\beta_n) = \frac{\lambda_nT^{slt}_n}{\mu_n(\beta_n)}$ and $\Tilde{b}_n = b_n \eta_n$ are the offered load and normalized buffer capacity, respectively. Notably, the buffer overflow model resembles the M/M/1/K queue model by assuming $\Tilde{b}_n = K \ge 50$. Further details and proof are provided in Appendix~\ref{bufferoverflowmodel}.

\subsection{Interference Analysis} \label{Channel Analysis}
In this section, we evaluate the impact of interference on the main channel between the source node and its destination. We aim to determine the probability of transmission error due to high interference from other nodes operating in the same spectrum band. Assuming that $\gamma_{th}$ denotes the SINR threshold, a transmission error occurs when the SINR falls below the SINR threshold $\gamma_{th}$. Let $I_n^f(\boldsymbol{\beta}_{-n})$ represent the impact of interferer nodes on the destination, which is given by \cite{Guan-2016-ToTransmit}:
\begin{align} \label{interference}
I_n^f(\boldsymbol{\beta}_{-n}) = \sum_{m\in \boldsymbol{N} \backslash n} P_m(\hat{h}_{m}^f)^2(\Tilde{h}_{m}^f)^2\alpha_m^f(\beta_m), 
\end{align}
where $(\hat{h}_{m}^f)^2$ and $(\Tilde{h}_{m}^f)^2$ are the path loss and the square of fading coefficient between the interferer nodes and the destination, respectively. Furthermore, $P_m$ and $\boldsymbol{\beta}_{-n} \triangleq (\boldsymbol{\beta}_{\boldsymbol{m}})_{\boldsymbol{m} \subset \boldsymbol{N} \backslash n}$ are the transmission power and the fading threshold of the interferer nodes. In addition, $\alpha_m^f(\beta_m)$ equals one if interferer node $m$ transmits using channel $f$, and zero otherwise. Thus, the probability of transmission error $\mathbb{P}_n^{err}(\boldsymbol{\beta})$ is defined as:
\begin{align} \label{perr}
\mathbb{P}_n^{err}(\boldsymbol{\beta}) \triangleq \mathbb{P}(\gamma_n < \gamma_{th}) = \mathbb{P}\Big(\frac{P_n(\hat{h}_n^f)^2(\Tilde{h}_n^f)^2}{\sigma^2+I_n^f(\boldsymbol{\beta}_{-n})} < \gamma_{th} \Big),
\end{align}
where $P_n$ denotes the transmission power of the source node, and $\sigma^2  = kTW$ represents the thermal noise power where $k$, $T$, and $W$ are the Boltzmann constant, noise temperature, and bandwidth, respectively.

By adopting a classical stochastic geometry approach, the aggregated interference $I_n^f(\boldsymbol{\beta}_{-n})$ can be modeled using Log-normal distribution, which better fits the real distribution than Gamma distribution \cite{Tian-2016-Interference, Tian-2018-QoS}. Thus, PDF of $I_n^f(\boldsymbol{\beta}_{-n})$ is given by:
\begin{align}
    \zeta_n(x) = \mathrm{f}_{I_n^f(\boldsymbol{\beta}_{-n})} (x) = \frac{\text{exp}(-\frac{(\text{ln}x-\mu(\boldsymbol{\beta}_{-n}))^2}{2\sigma^2(\boldsymbol{\beta}_{-n})})}{x\sigma(\boldsymbol{\beta}_{-n})\sqrt{2\pi}},
\end{align}
where $\mu(\boldsymbol{\beta}_{-n})$ and 
$\sigma(\boldsymbol{\beta}_{-n})$ are the location and scale parameters in Log-normal distribution, respectively, which are derived in Appendix~\ref{log-normaldistribution}. Then, the complementary cumulative distribution function (CCDF) of $I_n^f(\boldsymbol{\beta}_{-n})$ can be expressed as:
\begin{equation}
\begin{aligned}
& v_n(x,\boldsymbol{\beta}_{-n}) \triangleq 1 - \mathcal{F}_{I_n^f(\boldsymbol{\beta}_{-n})} (x) =  \frac{1}{2} - \\
&\frac{1}{2} \text{erf}(\frac{\ln(x)-\mu(\boldsymbol{\beta}_{-n})}{\sqrt{2}\sigma(\boldsymbol{\beta}_{-n})}) = 1-\phi(\frac{\ln(x)-\mu(\boldsymbol{\beta}_{-n})}{\sigma(\boldsymbol{\beta}_{-n})}).
\end{aligned}
\end{equation}
Here, $\text{erf}(x) = \frac{2}{\sqrt{\pi}} \int_0^x e^{-t^2}dt$ is the error function and $\phi(x)$ denotes the CDF of the standard normal distribution. Eventually, $\mathbb{P}_n^{err}(\boldsymbol{\beta})$ using Eq. \eqref{perr} can be defined as \cite{Guan-2016-ToTransmit, Tian-2016-Interference}:
\begin{align*}
\mathbb{P}_n^{err}(\boldsymbol{\beta}) & = \mathbb{P}\Big(\frac{P_n(\hat{h}_n^f)^2(\Tilde{h}_n^f)^2}{\gamma_{th}}-\sigma^2 < I_n^f(\boldsymbol{\beta}_{-n}) \Big) \\ 
& = \int_{\beta_n}^{\infty} \mathrm{f}_{\Tilde{h}_n^f} (x) v_n\Big(\frac{P_n(\hat{h}_n^f)^2}{\gamma_{th}}x^2-\sigma^2,\boldsymbol{\beta}_{-n}\Big)dx,
\end{align*}
where $\mathrm{f}_{\Tilde{h}_n^f} (x)$ represents the PDF of Rician or Rayleigh distributions between the source node and its destination.

\subsection{Distributed Throughput Optimization}
In the queuing and interference analysis sections, we calculated the probability of time threshold $\mathbb{P}_n^{dly}(\beta_n)$ due to exceeding the maximum queuing delay, the probability of buffer overflow $\mathbb{P}_n^{ov}(\beta_n)$ due to full buffer capacity, and the probability of transmission error $\mathbb{P}_n^{err}(\boldsymbol{\beta})$ caused by SINR falling below the SINR threshold where the behavior of them with respect to different parameters are shown in Figs. \ref{Pdly_Tth}, \ref{Pov_Tslt} and \ref{Perr_Nodes}, respectively. Due to the negligible product of $\mathbb{P}_n^{dly}(\beta_n)$, $\mathbb{P}_n^{ov}(\beta_n)$ and $\mathbb{P}_n^{err}(\boldsymbol{\beta})$, the probability of overall loss $\mathbb{P}_n^{lss}(\boldsymbol{\beta})$ can be approximated by:
\begin{align*}
\small
\begin{aligned}
\mathbb{P}_n^{lss}(\boldsymbol{\beta}) & = \mathbb{P}_n^{ov}(\beta_n) + [1-\mathbb{P}_n^{ov}(\beta_n)]\mathbb{P}_n^{dly}(\beta_n) + [1-\mathbb{P}_n^{ov}(\beta_n)] \times \\ & [1-\mathbb{P}_n^{dly}(\beta_n)] \mathbb{P}_n^{err}(\boldsymbol{\beta})
\approx \mathbb{P}_n^{ov}(\beta_n) + \mathbb{P}_n^{dly}(\beta_n) + \mathbb{P}_n^{err}(\boldsymbol{\beta}).
\end{aligned}
\end{align*}

\noindent 
Given the probability of overall loss $\mathbb{P}_n^{lss}(\boldsymbol{\beta})$, the throughput $\mathbb R_n(\boldsymbol{\beta})$ is approximated as:
\begin{align} \label{Rn}
\begin{aligned}
\mathbb R_n(\boldsymbol{\beta}) & = \lambda_n [1-\mathbb{P}_n^{lss}(\boldsymbol{\beta})] 
\\ &
\approx  \lambda_n [1-\mathbb{P}_n^{dly}(\beta_n)-\mathbb{P}_n^{ov}(\beta_n)-\mathbb{P}_n^{err}(\boldsymbol{\beta})]. 
\end{aligned}
\end{align}
In Appendix C, we provide further justifications and numerical results in Fig. \ref{exact_approx} to demonstrate the accuracy of this approximation compared with the exact throughput  $\mathbb R_n(\boldsymbol{\beta})$.

\begin{theorem}
Given the provided expressions, $\mathbb R_n(\boldsymbol{\beta})$ is a concave function with respect to $\beta_n$.
\end{theorem}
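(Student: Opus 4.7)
The plan is to show concavity by a direct second-derivative analysis, exploiting the fact that in the expression
\[
\mathbb{R}_n(\boldsymbol{\beta}) \approx \lambda_n\bigl[1-\mathbb{P}_n^{dly}(\beta_n)-\mathbb{P}_n^{ov}(\beta_n)-\mathbb{P}_n^{err}(\boldsymbol{\beta})\bigr],
\]
the variables $\boldsymbol{\beta}_{-n}$ are treated as fixed constants. Hence it suffices to show that the aggregate loss $L_n(\beta_n)\triangleq \mathbb{P}_n^{dly}(\beta_n)+\mathbb{P}_n^{ov}(\beta_n)+\mathbb{P}_n^{err}(\boldsymbol{\beta})$ is a convex function of $\beta_n$ on the admissible interval determined by the upper bound derived earlier. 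Once convexity of $L_n$ is established, $\mathbb{R}_n$ is concave as an affine transformation of $L_n$ with negative slope.

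I would first prove the auxiliary lemma that $\mu_n(\beta_n)$ is strictly decreasing and convex on the admissible interval. Differentiating the Marcum $Q_1$ representation via Leibniz's rule gives $\partial Q_1(b,\beta_n)/\partial\beta_n = -\beta_n e^{-(\beta_n^2+b^2)/2}I_0(\beta_n b)<0$, which yields $\mu_n'(\beta_n)<0$; a further differentiation, combined with the chain $\mu_n=1-(1-Q_1)^{|\boldsymbol{F}|}$, produces the sign of $\mu_n''$. This monotonicity result drives the rest of the computation since both $\mathbb{P}_n^{dly}$ and $\mathbb{P}_n^{ov}$ depend on $\beta_n$ only through $\mu_n$.

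Next I would handle the three loss terms one at a time. For $\mathbb{P}_n^{dly}(\beta_n)=e^{-(\mu_n/T_n^{slt}-\lambda_n)T_n^{th}}$, I compose the convex, decreasing exponential with the function $\mu_n(\beta_n)$ and apply the chain rule: $(\mathbb{P}_n^{dly})''$ picks up the positive term $(T_n^{th}/T_n^{slt})^2(\mu_n')^2 e^{\cdots}$ together with $-(T_n^{th}/T_n^{slt})\mu_n'' e^{\cdots}$, both of which are nonnegative once the monotonicity lemma is applied. For $\mathbb{P}_n^{ov}(\beta_n)$, I set $\rho_n=\lambda_n T_n^{slt}/\mu_n$ and verify convexity in two composition steps: $\rho_n$ is a convex decreasing function of $\mu_n$, and the closed form of $\mathbb{P}_n^{ov}$ can be shown to be an increasing convex function of $\rho_n$ on $\rho_n\in(0,1)$ by examining the derivative of $\bigl[(1-\rho)e^{-\tilde b(1-\rho)}\bigr]/\bigl[1-\rho e^{-\tilde b(1-\rho)}\bigr]$. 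Finally, for $\mathbb{P}_n^{err}(\boldsymbol{\beta})$, Leibniz's rule gives $(\mathbb{P}_n^{err})'=-\mathbb{P}(\tilde h_n^f=\beta_n)\,v_n\bigl(P_n(\hat h_n^f)^2\beta_n^2/\gamma_{th}-\sigma^2,\boldsymbol{\beta}_{-n}\bigr)$, and a further differentiation combined with the fact that $v_n(\cdot,\boldsymbol{\beta}_{-n})$ is a strictly decreasing function of its argument (being a complementary CDF) and that its argument is increasing in $\beta_n$ yields the needed sign.

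The main obstacle, I expect, will be the $\mathbb{P}_n^{ov}$ term: its rational-exponential form makes a clean second-derivative bound algebraically heavy, and one must carefully justify the monotonicity of the ratio in $\rho_n$ using the normalized buffer capacity $\tilde{b}_n$. A secondary difficulty is that the Rician PDF factor inside $\mathbb{P}_n^{err}$ is not monotone over all of $\mathbb{R}_{\ge 0}$, so convexity of $\mathbb{P}_n^{err}$ must be argued over the admissible range of $\beta_n$ (bounded above by the $\mathbb{P}_n^{dly}\le 1$ condition) rather than globally; establishing that this admissible range lies to the right of the PDF mode would close the argument.
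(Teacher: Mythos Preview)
Your high-level reduction---show that $\mathbb{P}_n^{lss}$ is convex in $\beta_n$, hence $\mathbb{R}_n$ concave---matches the paper exactly, and your treatment of $\mathbb{P}_n^{err}$ via Leibniz's rule together with the decreasing CCDF $v_n$ is essentially what the paper does (there closed with the Bessel inequalities $I_0(\beta_n b)>I_1(\beta_n b)$ and $(\beta_n^2-1)I_0(\beta_n b)>\beta_n b\,I_1(\beta_n b)$ on the admissible range). Two substantive differences: the paper simply \emph{drops} $\mathbb{P}_n^{ov}$ at the outset (``ignored due to its negligible value''), so your composition chain through $\rho_n$ is extra work the paper never attempts; and for $\mathbb{P}_n^{dly}$ the paper computes the second derivative directly, invokes $|\boldsymbol{F}|T_n^{th}/T_n^{slt}\gg 1$, and ultimately defers the residual inequality to numerical verification rather than a closed-form bound.

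The genuine gap in your plan is the auxiliary lemma. You assert that $\mu_n$ is strictly decreasing and \emph{convex}, but the very next step requires the opposite curvature: in
\[
(\mathbb{P}_n^{dly})''=\Bigl(\tfrac{T_n^{th}}{T_n^{slt}}\Bigr)^{2}(\mu_n')^{2}\,\mathbb{P}_n^{dly}\;-\;\tfrac{T_n^{th}}{T_n^{slt}}\,\mu_n''\,\mathbb{P}_n^{dly},
\]
the second summand is nonnegative only when $\mu_n''\le 0$, i.e.\ $\mu_n$ \emph{concave}. The same issue recurs in your $\mathbb{P}_n^{ov}$ chain: $\rho_n=\lambda_n T_n^{slt}/\mu_n$ is convex decreasing in $\mu_n$, so getting $\rho_n(\beta_n)$ convex again needs $\mu_n$ concave, not convex. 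Worse, $\mu_n(\beta_n)=1-F(\beta_n)^{|\boldsymbol{F}|}$ is the survival function of the maximum of $|\boldsymbol{F}|$ Rician variables, and its curvature changes sign at the mode of that maximum; it is neither globally concave nor convex on $(0,\beta_n^{\max}]$. So the lemma as stated cannot hold unconditionally, and the clean composition route will not close without first pinning down which side of the inflection the admissible interval lies on---which is exactly the kind of parameter-regime argument the paper sidesteps by working with the explicit Rician-PDF derivatives and accepting a numerical check where the analytic bound does not close.
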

\begin{proof}
The proof is provided in Appendix~\ref{concave_beta}.
\end{proof}
With this theorem, we aim to develop a distributed algorithm by which different nodes adjust their fading thresholds $\beta_i$ to maximize the average throughput. In this case, distributed nodes should coordinate to converge to the optimal fading threshold or transmission policy, which is advantageous for all nodes. To solve this problem, we use \emph{consensus-based distributed optimization}, which tries to reach a consensus on the fading threshold $\boldsymbol{\beta}$ as different nodes collaborate~\cite{Berahas-2019-Balancing}. Our proposed Distributed Transmission Control (DTC) policy is presented in Algorithm \ref{alg:DTC} by which we maximize the average throughput by finding the optimal fading threshold $\boldsymbol{\beta}^{\star}$.

\begin{algorithm}[t]
\small
\caption{Distributed Transmission Control (DTC)} \label{alg:DTC}
\begin{algorithmic}[1]
\Function{DTC}{$\boldsymbol{M}$, $\boldsymbol{stp}$, $i_{max}$, $\boldsymbol{\beta}^{max}$}
    \State $ \boldsymbol{\beta} \gets \boldsymbol{\beta}^{max}$
    \For{$n$ \textbf{in} \textbf{range} $\boldsymbol{M}$}
        \State $\mathbb R^{\star} \gets \mathbb R_n(\boldsymbol{\beta})$, $\beta_n \gets \boldsymbol{\beta}[n]$, $\boldsymbol{m} \gets \boldsymbol{M} - \{n\}$
        \State $\boldsymbol{\beta}^{slf}[n] \gets \Call{LCS}{\beta_n, \boldsymbol{stp}, \mathbb R^{\star}, \beta^{max}_n}$
    \EndFor
    \State $\boldsymbol{\beta}^{\star} \gets \boldsymbol{\beta}^{slf}$
    \For{$i$ \textbf{in} \textbf{range} $i_{max}$}
        $\boldsymbol{\beta} \gets \boldsymbol{\beta}^{\star}$
        \For{$n$ \textbf{in} \textbf{range} $\boldsymbol{M}$}
            \State $\beta_n \gets \boldsymbol{\beta}[n]$, $\boldsymbol{m} \gets \boldsymbol{M} - \{n\}$
            \State $\mathbb R^{\star} \gets \textbf{R}[i][n] \gets \mathbb R_n(\boldsymbol{\beta})$
            \State $\boldsymbol{\beta}^{\star}[n] \gets \Call{LCS}{\beta_n, \boldsymbol{stp}, \mathbb R^{\star}, \beta^{max}_n}$
        \EndFor
            \If{$|\boldsymbol{\beta} - \boldsymbol{\beta}^{\star}| < \epsilon$}
            \State \textbf{break}
        \EndIf
    \EndFor
    \State \textbf{return} $\boldsymbol{\beta}^{\star}$
\EndFunction
\end{algorithmic}
\end{algorithm}

Initially, this algorithm sets the fading threshold $\boldsymbol{\beta}$ to the maximum value $\boldsymbol{\beta}^{max}$, which is obtained from the upper bound of the fading threshold. The algorithm then finds the selfish fading threshold $\boldsymbol{\beta}^{slf}$ for all nodes, where each node assumes that the other nodes are not transmitting. Then, it starts to discover the optimal fading threshold $\boldsymbol{\beta}^{\star}$ by the initialized selfish values. To do this, the DTC algorithm utilizes the Local Coordinate Search (LCS) function presented in Appendix~\ref{LCS} to determine the maximum throughput and associated best fading threshold for each node in each iteration. In the end, if the difference in the fading thresholds for two consecutive iterations is less than $\epsilon$, this algorithm returns the optimal fading threshold $\boldsymbol{\beta}^{\star}$. At each iteration, the calculated throughput $\mathbb R_n(\boldsymbol{\beta})$ will be stored in the throughput set $\textbf{R}$, where the first dimension denotes the index of iterations and the second dimension represents the index of the nodes. Furthermore, the time complexity of the proposed DTC algorithm is provided in Appendix \ref{LCS}. In the next section, we extend this analysis to incorporate video transmission control policies. 

\section{Joint Distributed Video Transmission and Encoder Control}
\label{Proposed Solution}
In the previous section, we presented the DTC algorithm, which aims to maximize performance through distributed transmission control. In this section, we further extend our setup by incorporating video encoding rate optimization. Therefore, the performance metric is expanded from focusing solely on throughput optimization to include PSNR optimization, which accounts for video distortion.

\subsection{Video Distortion and PSNR Analysis}
Given the probability of overall loss $\mathbb{P}_n^{lss}(\boldsymbol{\beta})$, we define the packet loss distortion as $\mathrm{D}_n^{lss} (\boldsymbol{\beta}) = s_n\mathbb{P}_{n}^{lss}(\boldsymbol{\beta})$, where $s_n$ is the sensitivity parameter of a video sequence to packet loss. Moreover, we assume $\mathrm{D}_n^{cmp}(\mathrm{E}_n) = D_{0} + \frac{\theta_0}{\mathrm{E}_n - E_{0}}$ represents the lossy video compression distortion. Here, $\mathrm{E}_n = \lambda_n L_n$ denotes the video encoding rate at the application layer, where $L_n$ is the average video packet length. Also, $D_{0}$, $\theta_0$, and $E_{0}$ are the parameters of video rate-distortion model, which can be determined using nonlinear regression techniques \cite{Xiaoqing-2005-Congestion}. According to the packet loss distortion $\mathrm{D}_n^{lss} (\boldsymbol{\beta})$ and the lossy video compression distortion $\mathrm{D}_n^{cmp}(\mathrm{E}_n)$, the overall video distortion $\mathrm{D}_n(\mathrm{E}_n, \boldsymbol{\beta})$ is given by \cite{Tian-2016-Interference}:
\begin{align}
\begin{aligned}
\mathrm{D}_n(\mathrm{E}_n, \boldsymbol{\beta}) & = \mathrm{D}_n^{cmp}(\mathrm{E}_n) + \mathrm{D}_n^{lss} (\boldsymbol{\beta})
\\ &
= D_{0} + \frac{\theta_0}{\mathrm{E}_n - E_{0}} + s_n\mathbb{P}_n^{lss}(\boldsymbol{\beta}).
\end{aligned}
\end{align}
\noindent
Given the overall video distortion, we calculate the PSNR, an important metric for video transmission quality measurement. The PSNR metric is defined by the widely used mean square error rate-distortion model. Therefore, given the overall video distortion $\mathrm{D}_n(\mathrm{E}_n, \boldsymbol{\beta})$, PSNR between the source node $n \in \boldsymbol{N}$ and its destination is calculated as \cite{Tian-2016-Interference, Tang-2021-QoE}:
\begin{align} \label{psnr}
    \mathcal{P}_n(\mathrm{E}_n, \boldsymbol{\beta}) = 10\log_{10} \Big(\frac{(2^p - 1)^2}{\mathrm{D}_n (\mathrm{E}_n, \boldsymbol{\beta})}\Big),
\end{align}
where $p$ is the bit-depth of a video pixel. In Eq. \eqref{psnr}, the encoding rate $\mathrm{E}_n$ and fading threshold $\boldsymbol{\beta}$ are jointly incorporated into our cross-layer formulation including packet loss distortion $\mathrm{D}_n^{lss} (\boldsymbol{\beta})$ at the PHY and MAC layers and lossy video compression distortion $\mathrm{D}_n^{cmp}(\mathrm{E}_n)$ at the application layer.

\begin{algorithm}[t]
\small
\caption{Joint Distributed Video Transmission and Encoder Control (JDVT-EC)} \label{alg:JDVT-EC}
\begin{algorithmic}[1]
\Function{JDVT-EC}{$\boldsymbol{\mathrm{E}}^{ini}$, $\boldsymbol{\beta}^{ini}$, $\boldsymbol{M}$, $\boldsymbol{stp}$, $i_{max}$, $\boldsymbol{\beta}^{max}$, $\boldsymbol{\lambda}^{max}$}
    \State $\boldsymbol{\mathrm{E}}^{\star} \gets \boldsymbol{\mathrm{E}}^{ini}$, $ \boldsymbol{\beta}^{\star} \gets \boldsymbol{\beta}^{ini}$
        \For{$ctr$ \textbf{in} \textbf{range} $i_{max}$} $\boldsymbol{\mathrm{E}}^{old} \gets \boldsymbol{\mathrm{E}}^{\star}$, $\boldsymbol{\beta}^{old} \gets \boldsymbol{\beta}^{\star}$ 
        \State $\boldsymbol{\beta}^{\star} \gets \Call{DVTC}{\boldsymbol{\mathrm{E}}^{\star}, \boldsymbol{\beta}^{\star}, \boldsymbol{M}, \boldsymbol{stp}, i_{max}, ctr, \boldsymbol{\beta}^{max}}$
        \State $\boldsymbol{\mathrm{E}}^{\star} \gets \Call{DVEC}{\boldsymbol{\mathrm{E}}^{\star}, \boldsymbol{\beta}^{\star}, \boldsymbol{M}, \boldsymbol{stp}, ctr, \boldsymbol{\lambda}^{max}}$
        \If{$|\boldsymbol{\beta}^{old} - \boldsymbol{\beta}^{\star}| < \epsilon$ \textbf{and} $|\boldsymbol{\mathrm{E}}^{old} - \boldsymbol{\mathrm{E}}^{\star}| < \epsilon$}
        \State \textbf{break}
        \EndIf
        \EndFor
    \State \textbf{return} $\boldsymbol{\mathrm{E}}^{\star}$, $\boldsymbol{\beta}^{\star}$
\EndFunction
\end{algorithmic}
\end{algorithm}

To characterize the PSNR in a 3D space, we define the \emph{spatial average PSNR} ($\Bar{\mathcal{P}}_n$) between the source node and its destination as the mean PSNR value calculated over discretized distances and elevation angles, as follows \cite{Wang-2021-Packet}:
\begin{align}
    \Bar{\mathcal{P}}_n = \frac{1}{|\textbf{d}||\Theta|}\sum_{j=\theta_{n}^{min}}^{\theta_{n}^{max}} \sum_{i=d_{n}^{min}}^{d_{n}^{max}} \mathcal{P}^{ij}_n(\mathrm{E}_n, \boldsymbol{\beta}),
\end{align}
where $d_n^{min}$ and $d_n^{max}$ are the minimum and maximum distances, which can be included in a discrete distance set \textbf{d} = \{$d_n^{min}, ..., d_n^{max}$\}. Similarly, $\theta_n = \arctan{(\frac{d_n^V}{d_n^H})}$ denotes the elevation angle where $\theta_n^{min}$ and $\theta_n^{max}$ are the minimum and maximum elevation angles given in a discrete elevation angle set $\Theta$ = \{$\theta_n^{min}, ..., \theta_n^{max}$\}.

\begin{theorem}
Given the expressions provided for PSNR in Eq. \eqref{psnr}, $\mathcal{P}_n(\mathrm{E}_n, \boldsymbol{\beta})$ is concave with respect to $E_n$ and $\beta_n$.
\end{theorem}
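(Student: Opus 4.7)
The plan is to reduce concavity of $\mathcal{P}_n$ to a log-convexity statement about the distortion $\mathrm{D}_n$. Writing
\begin{equation*}
\mathcal{P}_n(\mathrm{E}_n,\boldsymbol{\beta}) = 10\log_{10}\bigl((2^p-1)^2\bigr) - \tfrac{10}{\ln 10}\,\ln\mathrm{D}_n(\mathrm{E}_n,\boldsymbol{\beta}),
\end{equation*}
exhibits $\mathcal{P}_n$ as an affine image of $-\ln\mathrm{D}_n$ with a positive leading scalar, so concavity of $\mathcal{P}_n$ in $\mathrm{E}_n$ (respectively $\beta_n$) is equivalent to the pointwise inequality $\mathrm{D}_n\,\partial^{2}\mathrm{D}_n/\partial x^{2} \ge (\partial \mathrm{D}_n/\partial x)^{2}$ for $x=\mathrm{E}_n$ (respectively $x=\beta_n$), i.e., to $\mathrm{D}_n$ being log-convex in each argument separately.

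First I would dispatch the $\mathrm{E}_n$ case by direct computation. Since $\mathrm{E}_n$ enters $\mathrm{D}_n$ only through the term $\mathrm{D}_n^{cmp}(\mathrm{E}_n) = D_0 + \theta_0/(\mathrm{E}_n-E_0)$, grouping the remaining (constant in $\mathrm{E}_n$) part as $A(\boldsymbol{\beta}) = D_0 + s_n\mathbb{P}_n^{lss}(\boldsymbol{\beta})$ gives $\partial \mathrm{D}_n/\partial \mathrm{E}_n = -\theta_0/(\mathrm{E}_n-E_0)^2$ and $\partial^2 \mathrm{D}_n/\partial \mathrm{E}_n^2 = 2\theta_0/(\mathrm{E}_n-E_0)^3$, so
\begin{equation*}
\mathrm{D}_n\,\tfrac{\partial^{2}\mathrm{D}_n}{\partial \mathrm{E}_n^{2}} - \Bigl(\tfrac{\partial \mathrm{D}_n}{\partial \mathrm{E}_n}\Bigr)^{2} = \tfrac{2A\theta_0}{(\mathrm{E}_n-E_0)^{3}} + \tfrac{\theta_0^{2}}{(\mathrm{E}_n-E_0)^{4}} > 0,
\end{equation*}
which closes the first coordinate.

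For the $\beta_n$ case I would exploit preservation of log-convexity under addition: the constant-in-$\beta_n$ portion $D_0 + \theta_0/(\mathrm{E}_n-E_0)$ of $\mathrm{D}_n$ is trivially log-convex, so it suffices to show that each of $\mathbb{P}_n^{dly}$, $\mathbb{P}_n^{ov}$, and $\mathbb{P}_n^{err}$ is log-convex in $\beta_n$ (then $s_n\mathbb{P}_n^{lss}$ is log-convex, and the whole sum is log-convex). The delay term is immediate, since $\ln\mathbb{P}_n^{dly}(\beta_n) = -(\mu_n(\beta_n)/T_n^{slt}-\lambda_n)T_n^{th}$, which is convex in $\beta_n$ iff $\mu_n(\beta_n)$ is concave---an inequality that I would verify directly from $\mu_n(\beta_n) = 1-(1-Q_1(b,\beta_n))^{|\boldsymbol{F}|}$ using the derivative formulas already assembled for the proof of Theorem 1. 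For $\mathbb{P}_n^{ov}$ and $\mathbb{P}_n^{err}$, I would differentiate $\ln\mathbb{P}$ twice and verify $\mathbb{P}\cdot\mathbb{P}'' \ge (\mathbb{P}')^{2}$ termwise, reusing the derivative expressions from Theorem 1's appendix.

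The main obstacle is precisely this last step: Theorem 1 only delivers convexity of the loss probabilities in $\beta_n$, whereas the present theorem needs the strictly stronger property of log-convexity (recall that $1+x^{2}$ is convex but not log-convex). Strengthening convexity of $\mathbb{P}_n^{ov}$ and $\mathbb{P}_n^{err}$ to log-convexity is the nontrivial work: for $\mathbb{P}_n^{err}$ one has the integral representation $\int_{\beta_n}^{\infty} f(x)\,v_n(\cdot)\,dx$, whose log-convexity in the lower limit $\beta_n$ is delicate and likely requires the Cauchy--Schwarz inequality on the integrand together with the Log-normal tail structure of $v_n$; for $\mathbb{P}_n^{ov}$, the explicit rational--exponential form $\mathbb{P}_n^{ov} = (1-\rho_n)e^{-\tilde b_n(1-\rho_n)}/\bigl(1-\rho_n e^{-\tilde b_n(1-\rho_n)}\bigr)$ suggests taking logarithms and checking convexity of each factor in the composite $\rho_n(\beta_n) = \lambda_n T_n^{slt}/\mu_n(\beta_n)$, which again reduces to properties of $\mu_n$.
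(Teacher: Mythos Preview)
Your reduction to log-convexity of $\mathrm{D}_n$ is the right move, but the $\mathrm{E}_n$ computation has a genuine gap: you assert that ``$\mathrm{E}_n$ enters $\mathrm{D}_n$ only through the term $\mathrm{D}_n^{cmp}$,'' which is false in this model. Since $\mathrm{E}_n = \lambda_n L_n$, the arrival rate $\lambda_n = \mathrm{E}_n/L_n$ varies with $\mathrm{E}_n$, and $\lambda_n$ appears explicitly in $\mathbb{P}_n^{dly}(\beta_n) = \exp\bigl(-(\mu_n(\beta_n)/T_n^{slt}-\lambda_n)T_n^{th}\bigr)$ (and in $\mathbb{P}_n^{ov}$ through $\rho_n$). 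The paper's Appendix~D carries the resulting extra term $\tfrac{s_n T_n^{th}}{L_n}\mathbb{P}_n^{dly}$ in $\partial\mathrm{D}_n/\partial\mathrm{E}_n$ and a matching term in the second derivative; the log-convexity inequality then has cross terms that must be controlled using $\mathbb{P}_n^{lss} > \mathbb{P}_n^{dly}$, $\theta_0 \gg s_n$, and $T_n^{th}/L_n \ll 1$. Your one-line identity with $A(\boldsymbol{\beta})$ constant bypasses all of this and therefore does not establish the claim as stated.

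For the $\beta_n$ coordinate, your route genuinely differs from the paper's. The paper simply reuses the Theorem~1 computation verbatim: it proves convexity of $\mathbb{P}_n^{lss}$ in $\beta_n$ (ignoring $\mathbb{P}_n^{ov}$) and declares this sufficient. You correctly flag that concavity of $\mathcal{P}_n$ actually needs the stronger log-convexity of $\mathrm{D}_n$, and you propose to obtain it by showing each of $\mathbb{P}_n^{dly}$, $\mathbb{P}_n^{ov}$, $\mathbb{P}_n^{err}$ is log-convex and invoking closure of log-convexity under sums and positive constants. Your diagnosis is sharper than the paper's own argument here; the cost is that log-convexity of $\mathbb{P}_n^{err}$ (an integral with $\beta_n$ as lower limit) is, as you anticipate, delicate, and the paper's appendix gives you no leverage on it beyond the plain second-derivative sign.
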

\begin{proof}
The proofs are provided in Appendices \ref{concave_en} and \ref{concave_beta}. 
\end{proof}
\noindent 
Based on this result, next we propose a distributed algorithm to maximize the average PSNR values.

\subsection{Distributed PSNR Optimization}
The goal is to maximize the average PSNR for streamer nodes in the environment. To do this, two important parameters, including the encoding rate $\mathrm{E}_n$ and the fading threshold $\beta_n$, need to be optimized for each node.   Therefore, the optimization problem is divided into two sub-problems to solve each parameter individually.  To solve the distributed PSNR optimization, we propose \emph{Joint Distributed Video Transmission and Encoder Control (JDVT-EC)}, as described in Algorithm \ref{alg:JDVT-EC}. In each iteration, the JDVT-EC optimization algorithm finds the optimal fading threshold $\boldsymbol{\beta}^{\star}$ and the optimal encoding rate $\boldsymbol{\mathrm{E}}^{\star}$ using the DVTC and DVEC sub-algorithms and compares it with the previous fading threshold $\boldsymbol{\beta}^{old}$ and the previous encoding rate $\boldsymbol{\mathrm{E}}^{old}$. Finally, if the difference between the optimal and previous sets is smaller than $\epsilon$, the algorithm returns the optimal fading threshold $\boldsymbol{\beta}^{\star}$ and the optimal encoding rate $\boldsymbol{\mathrm{E}}^{\star}$. In this algorithm, $ctr$ counts the number of iterations, and $i_{max}$ specifies the maximum number of iterations.

\begin{algorithm}[t]
\small
\caption{Distributed Video Transmission Control (DVTC)} \label{alg:DVTC}
\begin{algorithmic}[1]
\Function{DVTC}{$\boldsymbol{\mathrm{E}}^{\star}$, $\boldsymbol{\beta}^{\star}$, $\boldsymbol{M}$, $\boldsymbol{stp}$, $i_{max}$, $ctr$, $\boldsymbol{\beta}^{max}$}
    \For{$i$ \textbf{in} \textbf{range} $i_{max}$} $\boldsymbol{\beta} \gets \boldsymbol{\beta}^{\star}$
        \For{$n$ \textbf{in} \textbf{range} $\boldsymbol{M}$}
            \State $\mathrm{E}_n \gets \boldsymbol{\mathrm{E}}^{\star}[n]$, $\beta_n \gets \boldsymbol{\beta}[n]$, $\boldsymbol{m} \gets \boldsymbol{M} - \{n\}$
            \State $\mathcal{P}^{\star} \gets \boldsymbol{\mathcal{P}}[2ctr][i][n] \gets \mathcal{P}_n(\mathrm{E}_n, \boldsymbol{\beta})$
            \State $\boldsymbol{\beta}^{\star}[n] \gets \Call{LCS}{\beta_n, \boldsymbol{stp}, \mathcal{P}^{\star}, \beta^{max}_n}$
        \EndFor
            \If{$|\boldsymbol{\beta} - \boldsymbol{\beta}^{\star}| < \epsilon$}
            \State \textbf{break}
        \EndIf
    \EndFor
    \State \textbf{return} $\boldsymbol{\beta}^{\star}$
\EndFunction
\end{algorithmic}
\end{algorithm}

In the \emph{Distributed Video Transmission Control (DVTC)} algorithm, like the DTC algorithm, ground-aerial nodes collaborate to achieve an optimal distributed transmission strategy that is beneficial to all nodes, assuming that each link can be a main link. Accordingly, multiple nodes coordinate to achieve a consensus on the fading threshold set $\boldsymbol{\beta}$ while keeping the encoding rate constant \cite{Berahas-2019-Balancing}. Each node contains its local information, iteratively communicating with adjacent nodes to determine the optimal fading threshold $\boldsymbol{\beta}^{\star}$. Algorithm \ref{alg:DVTC} presents the DVTC policy to determine the optimal fading threshold set $\boldsymbol{\beta}^{\star}$ for all nodes. In each iteration, if the difference between the updated fading threshold set $\boldsymbol{\beta}^{\star}$ and the preceding one $\boldsymbol{\beta}$ exceeds $\epsilon$, nodes share information about their fading thresholds with each other to specify the optimal fading threshold again. In this algorithm, $n$ and set $\boldsymbol{m}$ denote the source node and interferer nodes while set $\boldsymbol{M}$ contains both the source node and interferer nodes. Also, $\boldsymbol{stp}$ set denotes the step parameters, which are used in the LCS function, and $\boldsymbol{\beta}^{max}$ set represents the maximum value of the fading thresholds corresponding to the upper bounds. Moreover, the LCS function computes the best fading threshold $\beta_n$ for each node while having access to $\boldsymbol{\beta}_{m}$, which explores coordinates by step parameters until it identifies the best fading threshold $\beta_n$ associated with the maximum PSNR $\mathcal{P}_n$ \cite{Recht-2019-Choosing}. In each iteration, the computed PSNR $\mathcal{P}_n$ will be stored in the 3-dimensional $\boldsymbol{\mathcal{P}}$ set, in which the $1^{st}$, $2^{nd}$, and $3^{rd}$ dimensions are related to JDVT-EC algorithm counter, iteration number in the specific algorithm, and the source node index, respectively.

\begin{algorithm}[t]
\small
\caption{Distributed Video Encoder Control (DVEC)} \label{alg:DVEC}
\begin{algorithmic}[1]
\Function{DVEC}{$\boldsymbol{\mathrm{E}}^{\star}$, $\boldsymbol{\beta}^{\star}$, $\boldsymbol{M}$, $\boldsymbol{stp}$, $ctr$, $\boldsymbol{\lambda}^{max}$}
    \State $ \boldsymbol{\beta} \gets \boldsymbol{\beta}^{\star}$
        \For{$n$ \textbf{in} \textbf{range} $\boldsymbol{M}$}
            \State $\mathrm{E}_n \gets \boldsymbol{\mathrm{E}}^{\star}[n]$, $\beta_n \gets \boldsymbol{\beta}[n]$, $\boldsymbol{m} \gets \boldsymbol{M} - \{n\}$
            \State $\mathcal{P}^{\star} \gets \boldsymbol{\mathcal{P}}[2ctr+1][0][n] \gets \mathcal{P}_n(\mathrm{E}_n, \boldsymbol{\beta})$
            \State $\boldsymbol{\mathrm{E}}^{\star}[n] \gets \Call{LCS}{\frac{\mathrm{E}_n}{L_n}, \boldsymbol{stp}, \mathcal{P}^{\star}, \lambda^{max}_n} \times L_n$
        \EndFor
    \State \textbf{return} $\boldsymbol{\mathrm{E}}^{\star}$
\EndFunction
\end{algorithmic}
\end{algorithm}

\begin{table}
    \centering
    \caption{Key Simulation Parameters}
    \resizebox{\columnwidth}{!}{%
    \label{tab:sim_parameters}
    \begin{tabular}{lc}
        \toprule
        Definition & Notation \& Value \\
        \midrule
        Communication Area & $100 \times 100 \, \text{m}^2$ \\
        Environmental Parameters & $\zeta = 20$, $v = 3 \times 10^{-4}$, $\mu = 0.5$\\
        Path Loss Exponent & $\alpha_{L}=2$, $\alpha_{N}=3.5$\\
        Reference Distance & $d_0=10$ m \\
        Number of Sub-Channels & $|\boldsymbol{F}|=14$\\
        Rician Factor & $K_{L}=15$, $K_{N}=1$\\
        Time Threshold & $T^{th}_n=80$ ms\\
        Time Slot Duration & $T^{slt}_n=5$ ms\\
        Normalized Buffer Capacity & $\Tilde{b}_n=100$\\
        Transmission Power & $P_n=0.2$ W\\
        SINR Threshold & $\gamma_{th}=10$\\
        Operating Frequency & $f = 2.4$ GHz\\
        Noise Temperature & $T = 290$ K\\
        Bandwidth & $W = 100$ MHz \\
        Boltzmann Constant & $k = 1.38 \times 10^{-23}$ J/K \\
        Sensitivity Parameter & $s_n = 30$ \\
        Average Video Packet Length & $L_n = 3.04$ Kb \\
        Rate-Distortion Parameters & $D_{0} = 1.18$, $E_{0} = 0.67$, $\theta_0 = 858$\\
        Video Pixel Bit-Depth & $p = 8$ \\
        Fading Threshold Step & $\boldsymbol{stp}_{\beta} = \{ 0.5, 0.01 \}$ \\
        Incoming Packet Rate Step & $\boldsymbol{stp}_{\lambda} = \{ 0.5, 1 \}$ \\
        \bottomrule
    \end{tabular}
    }
\end{table}

To solve the second sub-problem, the \emph{Distributed Video Encoder Control (DVEC)} method, as described in Algorithm \ref{alg:DVEC} aims to find the optimal encoding rate set $\boldsymbol{\mathrm{E}}^{\star}$ for streamer nodes using the LCS algorithm in which, $\lambda_n = \frac{\mathrm{E}_n}{L_n}$ is limited by the maximum incoming packet rates $\lambda_n^{max}$ obtained from the upper bound of $\lambda_n$, as stated in Eq. \eqref{eq:lambda-upper-bound}. Unlike the DVTC algorithm, each source node's video encoding rate $\mathrm{E}_n$ does not impact other nodes' video encoding rates. Therefore, $\boldsymbol{\mathrm{E}}^{\star}$ does not need to be solved iteratively. In this algorithm,  each node in $\boldsymbol{M}$ can be a source node $n \in \boldsymbol{N}$, which can find the optimal video encoding rate $\mathrm{E}_n^{\star}$ and store it in the optimal encoding rate set $\boldsymbol{\mathrm{E}}^{\star}$ without changing $\boldsymbol{\beta}$. Finally, the algorithm returns the optimal encoding rates $\boldsymbol{\mathrm{E}}^{\star}$ for the JDVT-EC algorithm. In Appendix \ref{LCS}, we provide the time complexity of the proposed  JDVT-EC algorithm along with its constituent sub-algorithms.

\begin{table*}[t]
    \begin{minipage}{\linewidth}
    \centering
    \caption{Fading Threshold $\beta_i$ for Different Policies}
    \begin{tabular}{S|SSSSSSSSSS} \toprule
        {Policy} & {Streamer UAV} & {Interferer UAV} & {Node 3} & {Node 4} & {Node 5} & {Node 6} & {Node 7} & {Node 8} & {Node 9} & {Node 10} \\ \midrule 
        $\text{Random}$ & \text{3.16} & \text{3.53} & \text{3.01} & \text{1.48} & \text{2.54} & \text{1.56} & \text{2.49} & \text{1.57} & \text{4.06} & \text{4.08} \\ 
        $\text{Aggressive}$  & \text{3.18} & \text{3.35} & \text{2.03} & \text{2.03} & \text{1.44} & \text{1.88} & \text{1.79} & \text{1.46} & \text{3.43} & \text{4.16} \\
        $\text{Selfish}$  & \text{5.06} & \text{5.35} & \text{2.73} & \text{2.97} & \text{2.39} & \text{2.97} & \text{2.85} & \text{2.95} & \text{5.76} & \text{5.14} \\
        $\text{Fixed}$  & \text{4.00} & \text{4.00} & \text{2.00} & \text{2.00} & \text{2.00} & \text{2.00} & \text{2.00} & \text{2.00} & \text{4.00} & \text{4.00} \\ 
        $\textbf{Optimal}$  & \textbf{5.12} & \textbf{5.44} & \textbf{2.82} & \textbf{2.98} & \textbf{2.57} & \textbf{2.98} & \textbf{2.93} & \textbf{2.97} & \textbf{5.78} & \textbf{5.17} \\ 
        $\text{Conservative}$  & \text{5.33} & \text{5.94} & \text{2.93} & \text{3.06} & \text{2.95} & \text{3.07} & \text{3.20} & \text{3.04} & \text{6.07} & \text{5.36} \\ \bottomrule
    \end{tabular}
    \label{tab:3}
    \end{minipage}
\end{table*}

Furthermore, we note that the majority of prior works reviewed in Section \ref{Related Work} are designed based on centralized algorithms, especially for UAV networks. In this paper, we have developed the DTC and JDVT-EC algorithms, which are distributed optimization algorithms as each node optimizes its fading threshold and video encoding rate locally. This in turn enhances the application of our solutions for large-scale networks.

\section{Evaluation Results} \label{Numerical Results}
To evaluate the performance of the proposed algorithms, we consider a network in which $10$ ground-aerial nodes are distributed according to the Poisson point process (PPP) in the environment. The network includes $1$ streamer UAV (node index 1), $1$ interferer UAV (node index $2$), and $8$ ground nodes (nodes indices $3$ to $10$). In this case, $2$ ground nodes (nodes $10$ and $9$) are, respectively, paired with the streamer and interferer UAVs, and the other $6$ ground nodes (nodes $3$ to $8$) establish $3$ G2G pairs. We assume that the streamer and interferer UAVs are streaming videos to their associated ground nodes in the downlink, also 3 ground nodes are streaming to their paired ground nodes. 
On the other hand, 3 G2G pairs and 2 G2A pairs transmit short packets (e.g., C2 messages). The key simulation parameters are summarized in Table~\ref{tab:sim_parameters}.

\begin{figure}  
    \includegraphics[width=\linewidth, trim={70 0 90 0}, clip]{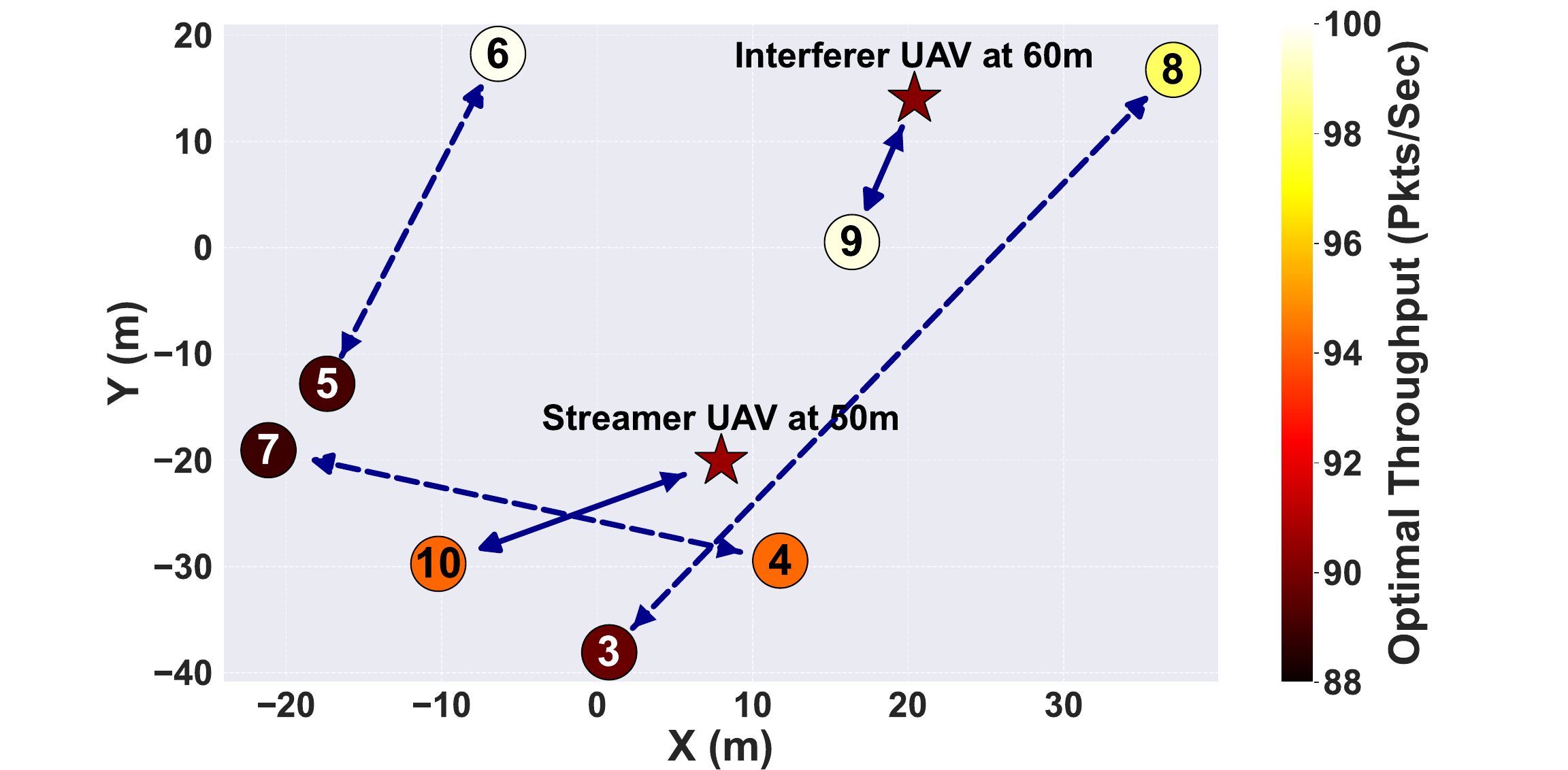}
   \centering
    \caption{Optimal throughput $\mathbb R_i^{\star}$ heatmap achieved by the DTC algorithm. In this simulation, ground and aerial nodes are randomly distributed  according to a PPP, establishing LoS (solid) and NLoS (dashed) links.}
    \label{Y-X}
\end{figure}

\subsection{Throughput Optimization Performance}

\subsubsection{DTC Algorithm Performance}
\begin{table*}[t]
    \centering
    \caption{$\mathbb R_i^{\star} - \beta_i^{\star}$ Values by DTC Algorithm}
    \scriptsize
    \begin{tabular}{S|SSSSSSSSSS} \toprule
        {$\gamma_{th}$} & {Streamer UAV} & {Interferer UAV} & {Node 3} & {Node 4} & {Node 5} & {Node 6} & {Node 7} & {Node 8} & {Node 9} & {Node 10} \\ \midrule 
        \text{5} & \text{96.60 - 5.08} & \text{99.87 - 5.26} & \text{98.88 - 2.77} & \text{88.94 - 2.98} & \text{99.90 - 2.45} & \text{89.46 - 2.97} & \text{95.87 - 2.90} & \text{90.65 - 2.96} & \text{94.40 - 5.73} & \text{93.38 - 5.14} \\ 
        \text{8} & \text{95.01 - 5.11} & \text{99.72 - 5.39} & \text{98.39 - 2.81} & \text{88.93 - 2.98} & \text{99.84 - 2.54} & \text{89.17 - 2.98} & \text{94.82 - 2.92} & \text{89.98 - 2.97} & \text{91.27 - 5.77} & \text{91.33 - 5.16} \\
        \text{10}  & \text{94.20 - 5.12} & \text{99.60 - 5.44} & \text{98.08 - 2.82} & \text{88.92 - 2.98} & \text{99.80 - 2.57} & \text{89.09 - 2.98} & \text{94.25 - 2.93} & \text{89.73 - 2.97} & \text{90.24 - 5.78} & \text{90.60 - 5.17} \\
        \text{12}  & \text{93.58 - 5.13} & \text{99.47 - 5.48} & \text{97.81 - 2.84} & \text{88.92 - 2.98} & \text{99.75 - 2.61} & \text{89.04 - 2.98} & \text{93.88 - 2.93} & \text{89.57 - 2.97} & \text{89.68 - 5.78} & \text{90.14 - 5.17} \\ 
        \text{15}  & \text{92.81 - 5.14} & \text{99.25 - 5.52} & \text{97.44 - 2.85} & \text{88.92 - 2.98} & \text{99.67 - 2.64} & \text{88.99 - 2.98} & \text{93.33 - 2.94} & \text{89.40 - 2.97} & \text{89.27 - 5.78} & \text{89.69 - 5.17}
    \end{tabular}
    \label{tab:2}
    \begin{tabular}{S|SSSSSSSSSS} \midrule
        {$|\boldsymbol{F}|$} & {Streamer UAV} & {Interferer UAV} & {Node 3} & {Node 4} & {Node 5} & {Node 6} & {Node 7} & {Node 8} & {Node 9} & {Node 10} \\ \midrule 
        \text{8}  & \text{85.42 - 4.90} & \text{97.64 - 5.32} & \text{94.20 - 2.62} & \text{82.77 - 2.72} & \text{99.03 - 2.45} & \text{82.79 - 2.72} & \text{86.95 - 2.70} & \text{82.92 - 2.72} & \text{82.84 - 5.52} & \text{83.44 - 4.91} \\ 
        \text{11}  & \text{90.99 - 5.03} & \text{99.15 - 5.40} & \text{96.85 - 2.74} & \text{86.58 - 2.87} & \text{99.61 - 2.53} & \text{86.64 - 2.87} & \text{91.65 - 2.83} & \text{87.01 - 2.86} & \text{87.07 - 5.67} & \text{87.78 - 5.06} \\
        \text{14}  & \text{94.20 - 5.12} & \text{99.60 - 5.44} & \text{98.08 - 2.82} & \text{88.92 - 2.98} & \text{99.80 - 2.57} & \text{89.09 - 2.98} & \text{94.25 - 2.93} & \text{89.73 - 2.97} & \text{90.24 - 5.78} & \text{90.60 - 5.17} \\
        \text{17}  & \text{96.11 - 5.19} & \text{99.78 - 5.47} & \text{98.73 - 2.89} & \text{90.53 - 3.06} & \text{99.87 - 2.61} & \text{90.83 - 3.06} & \text{95.89 - 3.00} & \text{91.67 - 3.05} & \text{92.75 - 5.85} & \text{92.64 - 5.25} \\ 
        \text{20}  & \text{97.25 - 5.24} & \text{99.86 - 5.50} & \text{99.11 - 2.93} & \text{91.71 - 3.13} & \text{99.91 - 2.64} & \text{92.12 - 3.13} & \text{96.87 - 3.05} & \text{93.13 - 3.11} & \text{94.65 - 5.91} & \text{94.22 - 5.31} \\ \bottomrule
    \end{tabular}
\end{table*}

Fig. \ref{Y-X} demonstrates the distribution of nodes in the 100$\times$100 $\text{m}^2$ area in which the star symbol marks the streamer and interferer UAVs at the altitudes of $50$ m and $60$ m, respectively; also, the circles denote the ground nodes. Double arrow lines display the main links between the nodes; the solid ones are the G2A or A2G LoS channels, and the dashed ones are the G2G NLoS channels. Using the heatmap, the optimal throughput $\mathbb R_i^{\star}$ obtained by the DTC algorithm is shown as source nodes transmit their packets to destination nodes. The results show that the destination nodes further away from the ``crowded area''  (such as nodes $6$, $8$, and $9$) experience a better throughput after reaching a consensus on the fading threshold $\beta_i$. However, the streamer and interferer UAVs suffer from A2A interference, resulting in a lower throughput than G2G communication pairs. It should be noted that the nodes are randomly distributed according to a defined PPP, and the proposed DTC algorithm optimizes their fading thresholds.

\begin{figure} [t]
\includegraphics[width=\linewidth, trim={20 0 30 0}, clip]{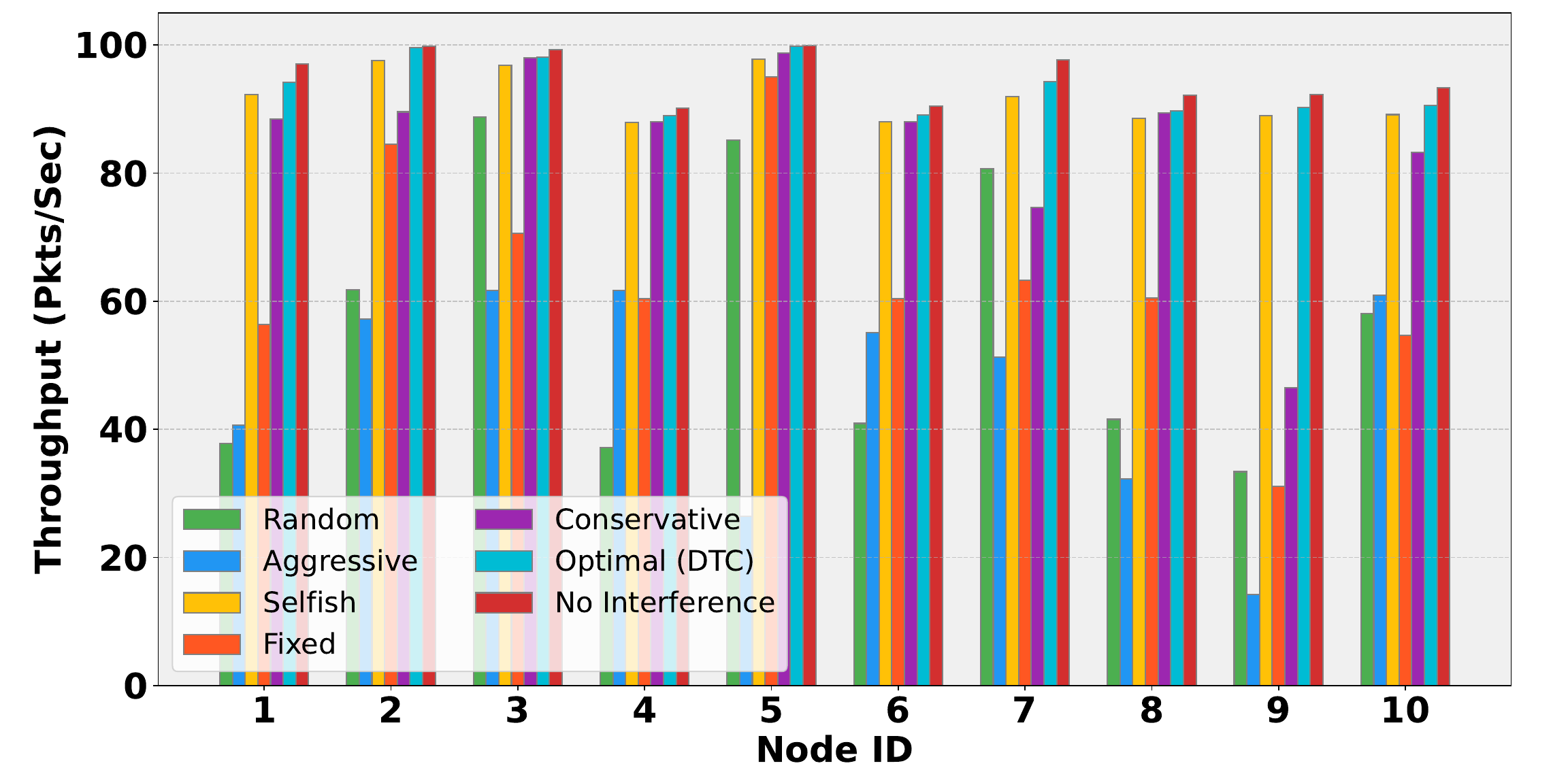}
    \centering
    \caption{$\mathbb R_i$ vs. Different Transmission Policies}
    \label{Exth-NodeID(Policy)}
\end{figure}

In Table~\ref{tab:3}, we compare the DTC algorithm performance with $5$ baselines to set the fading thresholds $\beta_i$, which are defined as: \emph{(1) Random Policy:} Nodes set the fading threshold $\beta_i$ randomly between zero and the upper bound. \emph{(2) Aggressive Policy:} Nodes set the fading threshold $\beta_i$ to low values to send their packets even under bad channel conditions and enqueue fewer packets. Thus, nodes experience more packet losses due to the channel conditions. \emph{(3) Selfish Policy:} Nodes set the fading threshold $\beta_i$ selfishly by assuming that the fading threshold of other nodes is equal to the upper bound (lines 3-6 in the DTC algorithm). Thus, nodes do not collaborate to find the optimal fading threshold $\beta_i^{\star}$ by consensus. \emph{(4) Fixed Policy:} Nodes set the fading threshold $\beta_i$ as a fixed value, which is assumed to be $\beta_i = 4$ for LoS channels and $\beta_i = 2$ for NLoS channels. \emph{(5) Conservative Policy:} Nodes set the fading threshold $\beta_i$ close to the upper bound, so they enqueue more packets instead of transmitting them over the channel. Therefore, nodes experience less packet loss in the channel, but more in the queue.
From the results in Table \ref{tab:3}, we observe that the DTC algorithm finds the optimal fading threshold $\beta_i^{\star}$ that is smaller than the conservative policy, but larger than the aggressive policy for all nodes since the algorithm balances the trade-offs between different packet loss probabilities. In addition, the optimal fading threshold $\beta_i^{\star}$ is slightly higher than the selfish policy since each node increases its fading threshold during collaboration to improve overall system performance. 
Similarly, Fig. \ref{Exth-NodeID(Policy)} compares optimal throughput $\mathbb R_i^{\star}$ with the above baselines, as well as the  ``no-interference'' scenario. The results show that the DTC algorithm outperforms all $5$ policies and improves the average throughput by 1.7\% to 51.65\% while achieving a close performance to the no-interference scenario, which denotes the maximum achievable throughput $\mathbb R_i$ in the environment. Thus, our proposed DTC algorithm balances the trade-off between packet loss due to poor channel condition (i.e., smaller $\beta_i$) vs. packet loss due to buffer overflow and time threshold model (i.e., larger $\beta_i$).

\begin{table*}[t]
    \centering
    \caption{$\mathcal{P}_i - \beta_i - E_i$ Comparison for Individual or Joint Optimized Parameters}
    \begin{tabular}{S|SSSSS|S} \toprule
        {Optimized} & {Streamer UAV} & {Interferer UAV} & {Node 3} & {Node 4} & {Node 5} & {Average $\mathcal{P}^{\star}$} \\ \midrule 
        $\text{$E_i^{\star}$}$ & \text{39.40 - 5.00 - 346.56} & \text{42.12 - 5.00 - 413.44} & \text{38.68 - 2.00 - 422.56} & \text{36.31 - 2.00 - 431.68} & \text{41.71 - 2.00 - 416.48} & \text{39.64 dB} \\ 
        $\text{$\beta_i^{\star}$}$  & \textbf{40.53 - 5.12 - 304.00} & \text{41.97 - 5.44 - 304.00} & \text{41.52 - 2.82 - 304.00} & \text{39.47 - 2.98 - 304.00} & \text{42.03 - 2.57 - 304.00} & \text{41.10 dB} \\
        $\text{$\beta_i^{\star}, E_i^{\star}$}$  & \text{40.37 - 5.12 - 310.08} & \textbf{42.51 - 5.14 - 407.36} & \textbf{41.70 - 2.67 - 373.92} & \textbf{39.50 - 3.02 - 279.68} & \textbf{42.62 - 2.29 - 410.40} & \text{\textbf{41.34} dB} \\ \bottomrule
    \end{tabular}
    \label{tab:6}
\end{table*}

\begin{table*}[t]
    \centering
    \caption{$\mathcal{P}_i - E_i$ Values for Different Encoding Rates}
    \begin{tabular}{S|SSSSS|S} \toprule
        {Encoding Rate} & {Streamer UAV} & {Interferer UAV} & {Node 3} & {Node 4} & {Node 5} & {Average $\mathcal{P}$} \\ \midrule 
        $\text{Low}$ & \text{38.93 - 158.08} & \text{40.67 - 203.68} & \text{38.90 - 179.36} & \text{38.92 - 191.52} & \text{40.05 - 167.20} & \text{39.49 dB} \\ 
        $\text{Medium}$  & \text{40.35 - 325.28} & \text{41.86 - 297.92} & \text{41.42 - 307.04} & \text{39.11 - 331.36} & \text{41.68 - 273.60} & \text{40.88 dB} \\
        $\textbf{Optimal}$  & \textbf{40.37 - 310.08} & \textbf{42.51 - 407.36} & \textbf{41.70 - 373.92} & \textbf{39.50 - 279.68} & \textbf{42.62 - 410.40} & \text{\textbf{41.34} dB} \\
        $\text{High}$  & \text{37.00 - 431.68} & \text{42.34 - 443.84} & \text{41.30 - 425.60} & \text{35.66 - 407.36} & \text{42.42 - 449.92} & \text{39.74 dB} \\ \bottomrule
    \end{tabular}
    \label{tab:5}
\end{table*}

\subsubsection{DTC Algorithm Behavior}
Table~\ref{tab:2} shows the optimal throughput $\mathbb R_i^{\star}$ with different SINR thresholds $\gamma_{th}$ for all nodes. By increasing the SINR threshold $\gamma_{th}$, the optimal throughput $\mathbb R_i^{\star}$ decreases since packets cannot be decoded at the destination with low SINR. Moreover, the optimal fading thresholds $\beta_i^{\star}$ are expressed as a function of different SINR thresholds $\gamma_{th}$. As the SINR threshold $\gamma_{th}$ increases, all nodes increase their optimal fading thresholds $\beta_i^{\star}$ generated by the DTC algorithm since the destination node cannot decode the packets and the source node would prefer to enqueue more packets instead of transmitting to the destination. As mentioned, nodes 3 to 8 establish G2G channels, so their fading threshold ranges are lower than the G2A or A2G channels. Table~\ref{tab:2} also reports the optimal throughput $\mathbb R_i^{\star}$ as the number of sub-channels $|\boldsymbol{F}|$ changes. Increasing the number of sub-channels $|\boldsymbol{F}|$ increases the optimal throughput $\mathbb R_i^{\star}$ as the nodes have more sub-channels available to select. Furthermore, the optimal fading threshold $\beta_i^{\star}$ is demonstrated as the number of sub-channels $|\boldsymbol{F}|$ changes. As the number of sub-channels $|\boldsymbol{F}|$ increases, the optimal fading threshold $\beta_i^{\star}$ increases. This is due to the fact that the upper bound (feasible range) of the fading threshold increases by increasing the number of sub-channels $|\boldsymbol{F}|$. 

\begin{figure}
   \includegraphics[width=\linewidth, trim={40 0 100 60}, clip]{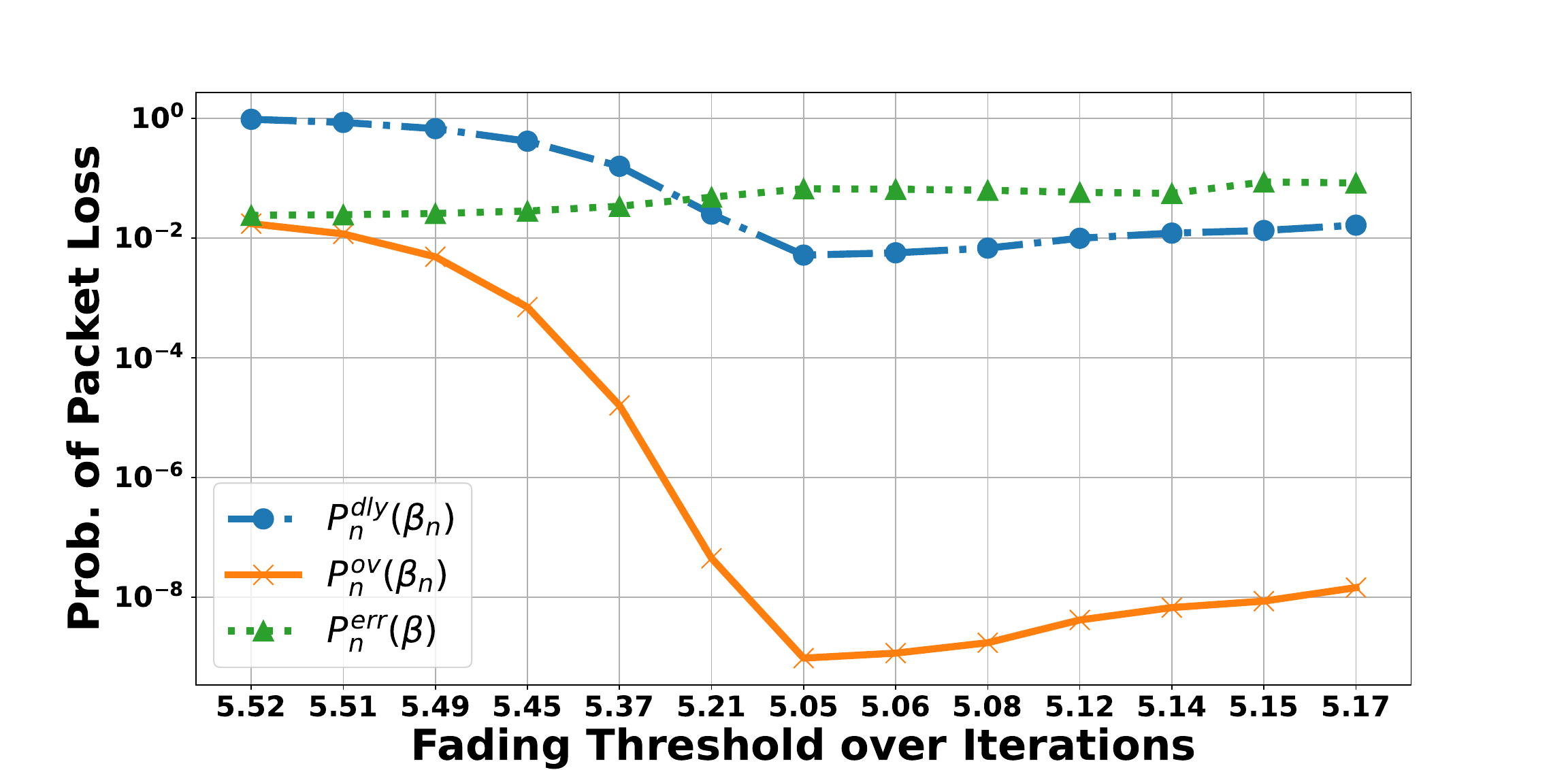}
   \centering
    \caption{Probability of Packet Loss vs. $\beta_n$ for Node 10}
    \label{Plss(node10)-Beta}
\end{figure}

\begin{figure}
   \includegraphics[width=\linewidth, trim={70 0 90 0}, clip]{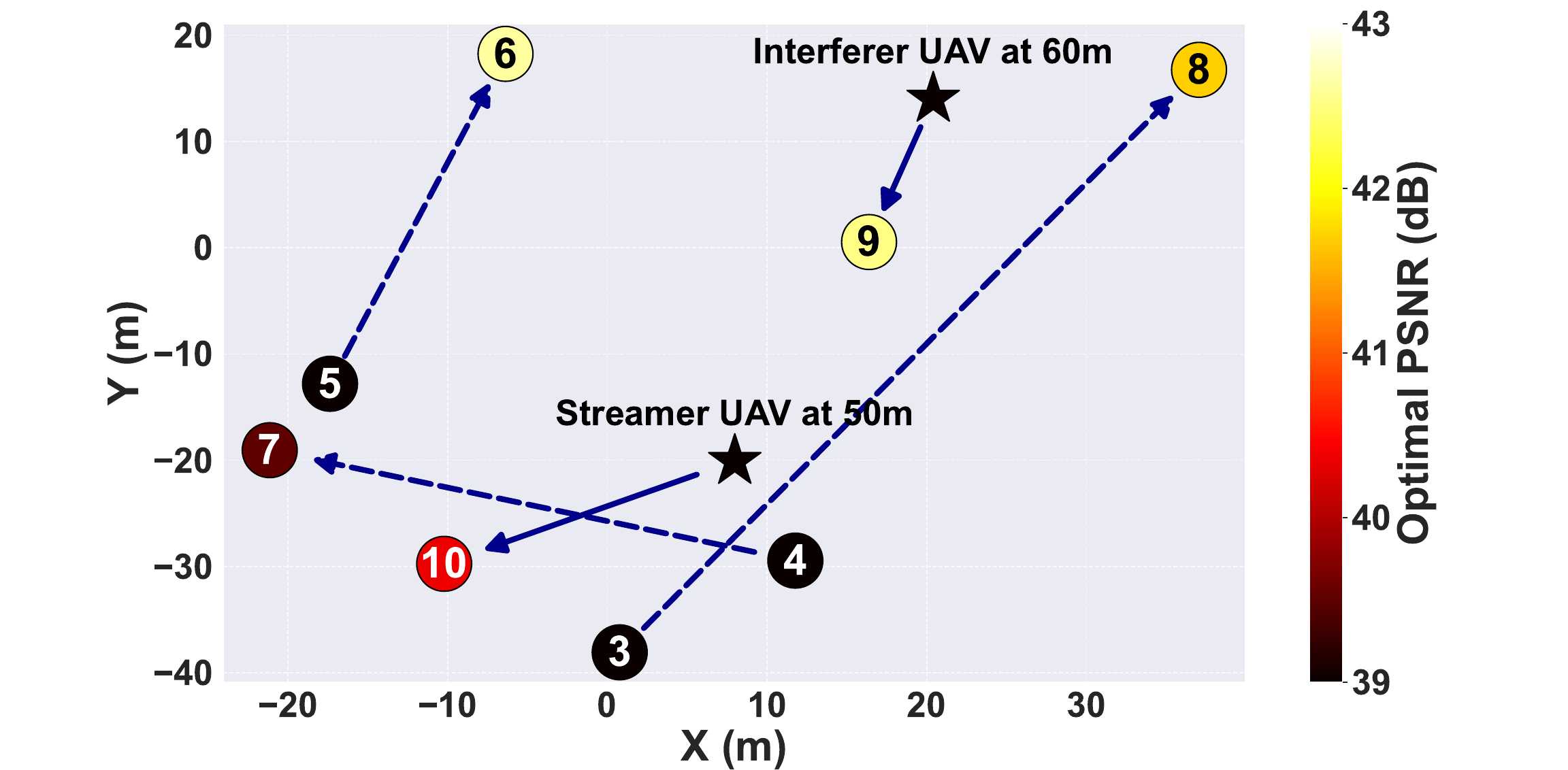}
   \centering
    \caption{Optimal PSNR $\mathcal{P}_i^{\star}$ heatmap achieved by the JDVT-EC algorithm. Ground and aerial nodes are randomly distributed according to a PPP, establishing LoS (solid) and NLoS (dashed) links.}
    \label{PSNR_Dis}
\end{figure}

In Fig. \ref{Plss(node10)-Beta}, the probability of packet loss including $\mathbb{P}_n^{dly}$, $\mathbb{P}_n^{ov}$ and $\mathbb{P}_n^{err}$ are represented for different fading thresholds $\beta_n$ (last one is the optimal fading threshold $\beta_n^{\star}$) when node 10 hits the higher throughput as the DTC algorithm iterates. At the beginning of iterations, node 10 experiences a huge packet loss due to the time threshold $\mathbb{P}_n^{dly}$, and the fading threshold $\beta_n$ is closer to the upper bound. As illustrated, when the fading threshold $\beta_n$ decreases, packet losses in the queue and channel decrease and increase, respectively. However, there are some anomaly points, such as $\beta_n = 5.14$, when node $10$ communicates with other nodes and receives the optimal fading threshold set $\boldsymbol{\beta}^{\star}$.

\subsection{PSNR Optimization Performance}

\subsubsection{JDVT-EC Algorithm Performance}
\begin{table*}
    \centering
    \caption{$\beta_i^{\star} - E_i^{\star}$ Values by JDVT-EC Algorithm}
    \scriptsize
    \begin{tabular}{S|SSSSS|SSSSS} \toprule
        {Node} & {\text{$s_n$ = 20}} & {\text{$s_n$ = 30}} & {\text{$s_n$ = 40}} & {\text{$s_n$ = 50}} & {\text{$s_n$ = 60}} & {\text{$\gamma_{th}$ = 5}} & {\text{$\gamma_{th}$ = 8}} & {\text{$\gamma_{th}$ = 10}} & {\text{$\gamma_{th}$ = 12}} & {\text{$\gamma_{th}$ = 15}} \\ \midrule 
        \text{1} & \text{5.07-337.44} & \text{5.12-310.08} & \text{5.16-288.80} & \text{5.19-273.60} & \text{5.21-261.44} & \text{5.03-334.40} & \text{5.10-316.16} & \text{5.12-310.08} & \text{5.14-304.00} & \text{5.16-297.92} \\
        \text{2}  & \text{5.10-422.56} & \text{5.14-407.36} & \text{5.17-395.20} & \text{5.20-386.08} & \text{5.21-380.00} & \text{4.92-413.44} & \text{5.07-410.40} & \text{5.14-407.36} & \text{5.19-404.32} & \text{5.26-398.24} \\
        \text{3}  & \text{2.62-395.20} & \text{2.67-373.92} & \text{2.72-352.64} & \text{2.74-343.52} & \text{2.77-328.32} & \text{2.62-380.00} & \text{2.64-376.96} & \text{2.67-373.92} & \text{2.70-367.84} & \text{2.73-361.76} \\
        \text{4}  & \text{2.97-310.08} & \text{3.02-279.68} & \text{3.06-258.40} & \text{3.10-237.12} & \text{3.13-221.92} & \text{3.02-279.68} & \text{3.02-279.68} & \text{3.02-279.68} & \text{3.02-279.68} & \text{3.02-279.68} \\
        \text{5}  & \text{2.26-425.60} & \text{2.29-410.40} & \text{2.31-401.28} & \text{2.34-392.16} & \text{2.35-386.08} & \text{2.08-413.44} & \text{2.26-410.40} & \text{2.29-410.40} & \text{2.32-410.40} & \text{2.37-407.36} \\ \bottomrule
    \end{tabular}
    \label{tab:4}
\end{table*}

Fig. \ref{PSNR_Dis} depicts the distribution of all nodes in the environment. The results show the optimal PSNR $\mathcal{P}_i^{\star}$ obtained from the JDVT-EC algorithm for each streamer node. As mentioned, the dashed and solid arrow lines are the main NLoS and LoS links, respectively. In addition, streamer nodes (nodes 1 to 5) are shown by black circles and stars in the environment. It can be seen that nodes 7 and 10 experience a lower optimal PSNR $\mathcal{P}_i^{\star}$ due to being in a ``crowded area'' with high interference. Similar to Fig. \ref{Y-X}, we highlight that the nodes are randomly distributed according to a defined PPP, and the proposed JDVT-EC algorithm optimizes their fading thresholds and video encoding rates.

Table~\ref{tab:6} compares PSNR $\mathcal{P}_i$, fading threshold $\beta_i$, and encoding rate $E_i$ for individual ($E_i^{\star}$ or $\beta_i^{\star}$) or joint ($\beta_i^{\star}, E_i^{\star}$) optimized parameters. In individual optimization, the fading threshold $\beta_i$ and encoding rate $E_i$ are set to 5 and 2 for LoS and NLoS channels and 304 Kbps for streamer nodes, respectively. As mentioned, our objective is to maximize the average $\mathcal{P}^{\star}$ in the joint optimization (JDVT-EC algorithm), not PSNR $\mathcal{P}_i$ for the individual nodes. Accordingly, the JDVT-EC algorithm outperforms the individual optimization in terms of the average $\mathcal{P}^{\star}$ and achieves up to 1.7 dB improvement. However, in terms of individual performance, streamer UAV experiences a higher optimal PSNR $\mathcal{P}^{\star}_i$ in the individual optimized $\beta_i$ compared to the JDVT-EC algorithm. Moreover, the values of PSNR $\mathcal{P}_i$ by different encoding rates $E_i$ with optimal fading threshold $\beta_i^{\star}$ are demonstrated for streamer nodes in Table~\ref{tab:5}. We define different encoding rates according to the tolerance of the transmission queue as follows: \emph{(1) Low Encoding Rate:} Nodes set the encoding rates $E_i$ randomly between 152 Kbps to 212.8 Kbps. \emph{(2) Medium Encoding Rate:} Nodes set the encoding rates $E_i$ randomly between 273.6 Kbps to 334.4 Kbps. \emph{(3) High Encoding Rate:} Nodes set the encoding rates $E_i$ randomly between 395.2 Kbps to 456 Kbps. From the results in Table~\ref{tab:5}, we note that the optimal PSNR $\mathcal{P}_i^{\star}$ achieved by the JDVT-EC algorithm (which uses the optimal encoding rate $E_i^{\star}$) outperforms the baselines with low, medium, and high encoding rates $E_i$ and improves the average PSNR $\mathcal{P}$ up to 1.85 dB.

\subsubsection{JDVT-EC Algorithm Behavior}
In Table~\ref{tab:4}, the optimal fading threshold $\beta_i^{\star}$ for 2 LoS (nodes 1 and 2: streamer and interferer UAVs) and 3 NLoS channels (nodes 3, 4, and 5: ground streamers) is represented by different sensitivity parameters $s_n$. The sensitivity parameter $s_n$ controls the sensitivity of the overall distortion $\mathrm{D}_n$ to the packet loss distortion $\mathrm{D}_n^{lss}$, which balances the weight between the packet loss distortion $\mathrm{D}_n^{lss}$ and the lossy video compression distortion $\mathrm{D}_n^{cmp}$. As shown, increasing the sensitivity parameter $s_n$ increases the optimal fading threshold $\beta_i^{\star}$ since the impact of packet loss distortion $\mathrm{D}_n^{lss}$ on the overall distortion $\mathrm{D}_n$ increases. Therefore, streamer nodes try to act more conservatively in terms of packet loss. Moreover, the optimal encoding rate $E_i^{\star}$ is reported as a function of different sensitivity parameters $s_n$. We note that increasing the sensitivity parameter $s_n$ decreases the optimal encoding rate $E_i^{\star}$ since the JDVT-EC algorithm tries to offset the impact of increasing packet loss distortion $\mathrm{D}_n^{lss}$ by decreasing the optimal encoding rate $E_i^{\star}$ to have a lower probability of overall loss $\mathbb{P}_n^{lss}$. Similarly, Table~\ref{tab:4} shows the optimal fading threshold $\beta_i^{\star}$ by different SINR thresholds $\gamma_{th}$ for streamer nodes. By increasing the SINR threshold $\gamma_{th}$, 
streamer nodes behave more conservatively and increase the optimal fading threshold $\beta_i^{\star}$ to reduce the impact of transmission errors. Therefore, they attempt to decrease the drop in the optimal PSNR $\mathcal{P}_i^{\star}$. In addition, the optimal encoding rate $E_i^{\star}$ for different SINR thresholds $\gamma_{th}$ is reported. From the results, we note that as the SINR threshold $\gamma_{th}$ increases, streamer nodes decrease their optimal encoding rate $E_i^{\star}$ to decrease packet loss. They try to decrease the drop in the optimal PSNR $\mathcal{P}_i^{\star}$ by reducing network load. However, modifying the SINR threshold $\gamma_{th}$ has minimum impact on the optimal encoding rate and fading threshold of node 4.

\begin{figure*}[t]
    \centering
    \begin{minipage}{.33\linewidth}
        \includegraphics[width=6cm, trim={70 10 20 10}, clip]{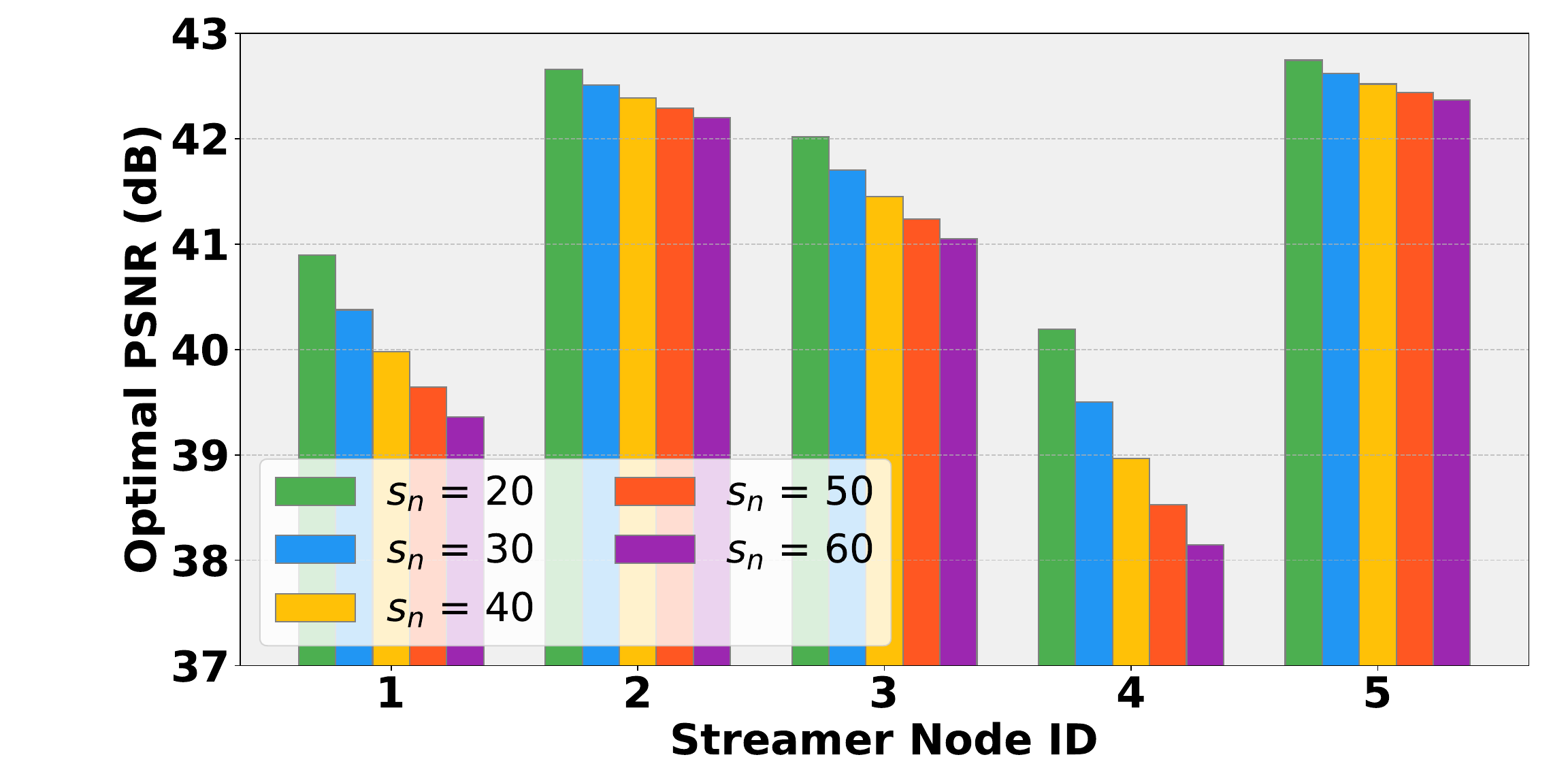}
        \caption{$\mathcal{P}_i^{\star}$ vs. $s_n$ by JDVT-EC}
        \label{PSNR_Sen}
    \end{minipage}%
    \begin{minipage}{.33\linewidth}
        \includegraphics[width=6cm, trim={70 10 20 10}, clip]{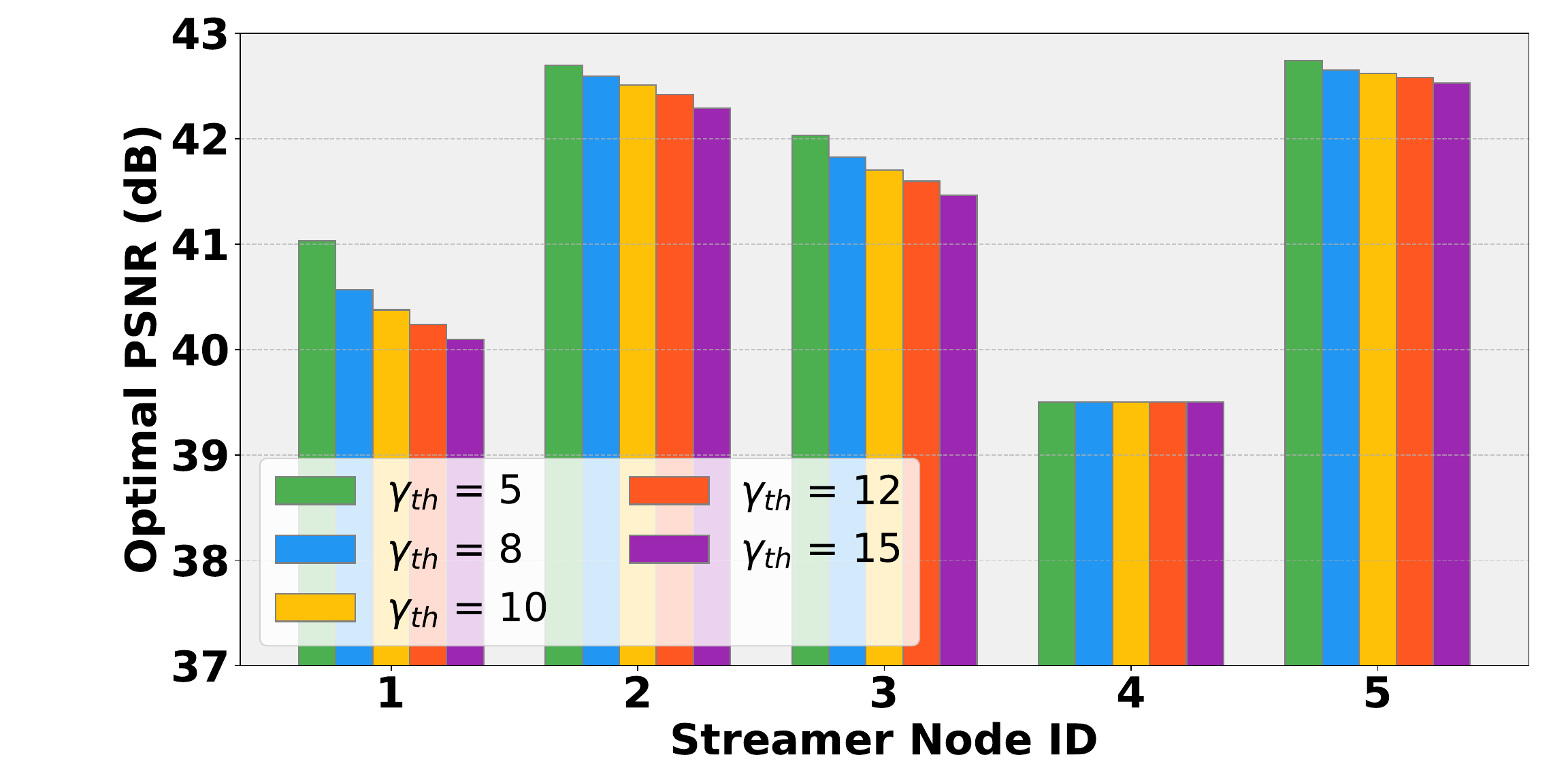}
        \caption{$\mathcal{P}_i^{\star}$ vs. $\gamma_{th}$ by JDVT-EC}
        \label{PSNR_SINRth}
    \end{minipage}%
    \begin{minipage}{.33\linewidth}
        \includegraphics[width=6cm, trim={10 0 30 0}, clip]{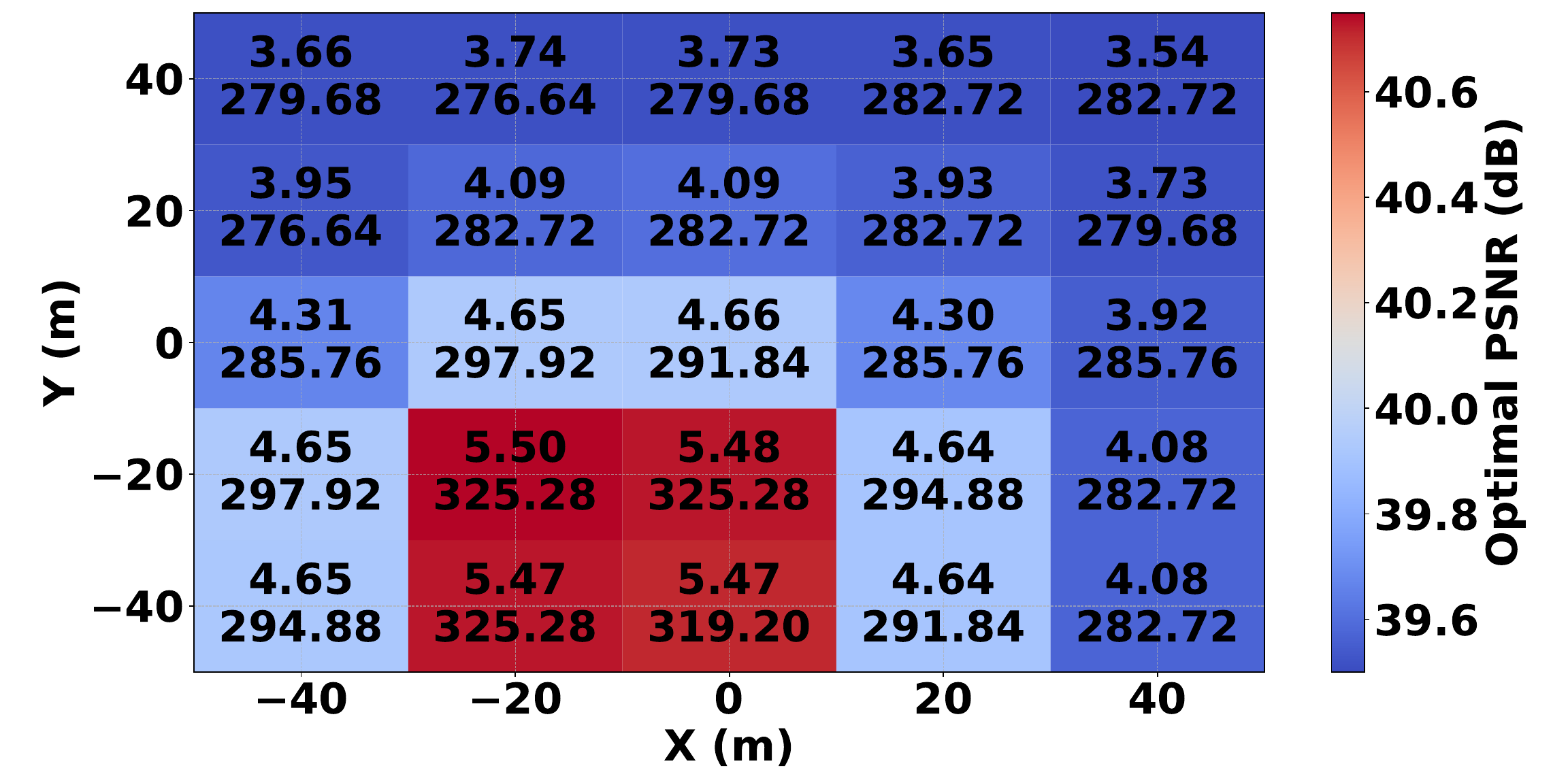}
        \caption{$\mathcal{P}_n^{\star}$ Heatmap for Streamer UAV}
        \label{PSNR_Heatmap}
    \end{minipage}%
\end{figure*}

\begin{figure*}[t]
    \centering
    \begin{minipage}{.33\linewidth}
        \includegraphics[width=6cm, trim={90 10 10 10}, clip]{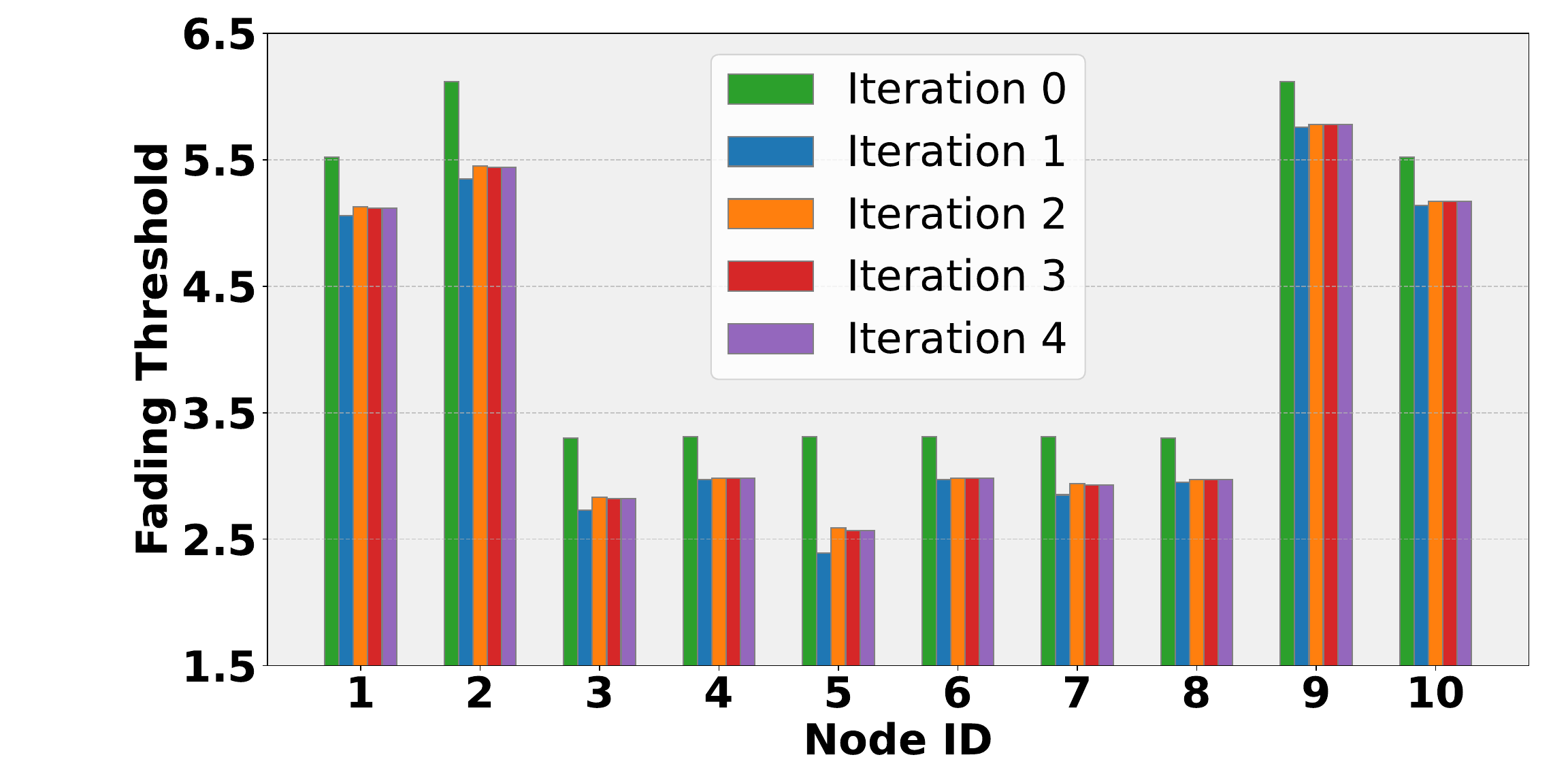}
        \caption{DTC algorithm convergence for $\boldsymbol{\beta}$.}
        \label{dtc_perf}
    \end{minipage}%
    \begin{minipage}{.33\linewidth}
        \includegraphics[width=6cm, trim={90 10 10 10}, clip]{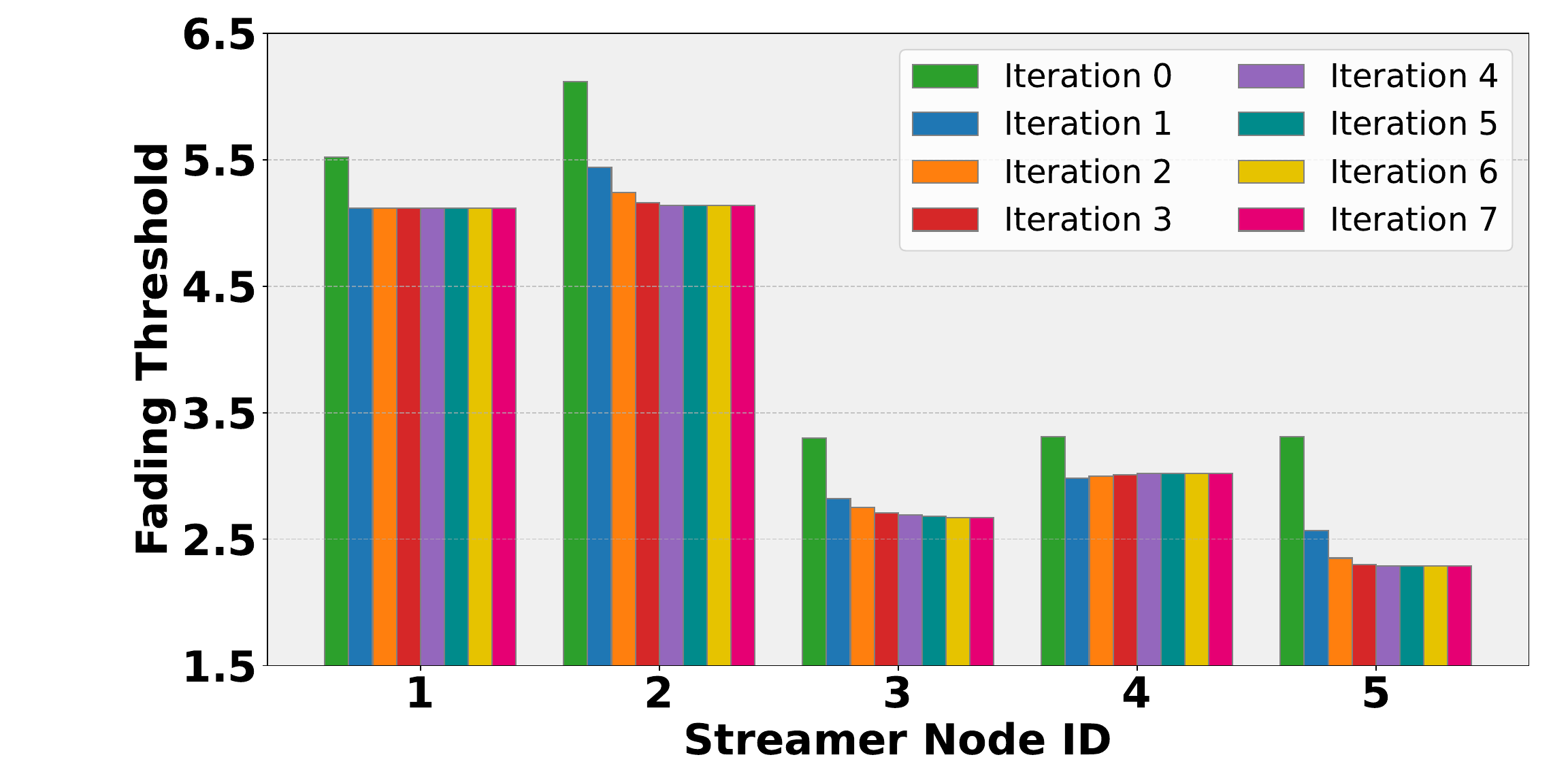}
        \caption{JDVT-EC algorithm convergence for $\boldsymbol{\beta}$.}
        \label{jdvt_ec_beta_perf}
    \end{minipage}%
    \begin{minipage}{.33\linewidth}
        \includegraphics[width=6cm, trim={90 10 10 10}, clip]{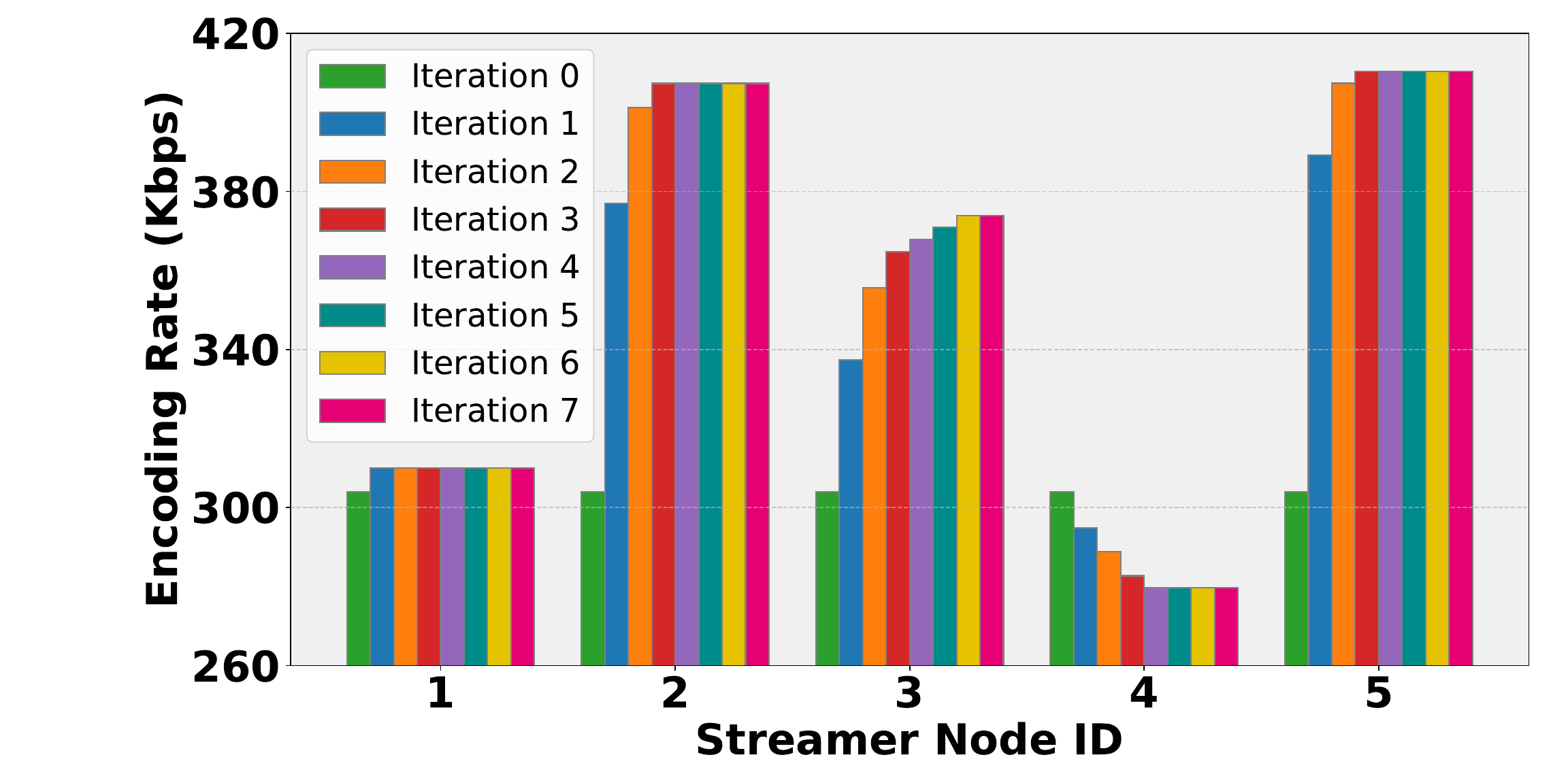}
        \caption{JDVT-EC algorithm convergence for $\boldsymbol{\mathrm{E}}$.}
        \label{jdvt_ec_enc_perf}
    \end{minipage}%
\end{figure*}

\begin{figure*}[t]
    \centering
    \begin{minipage}{.33\linewidth}
        \includegraphics[width=6cm, trim={20 0 100 40}, clip]{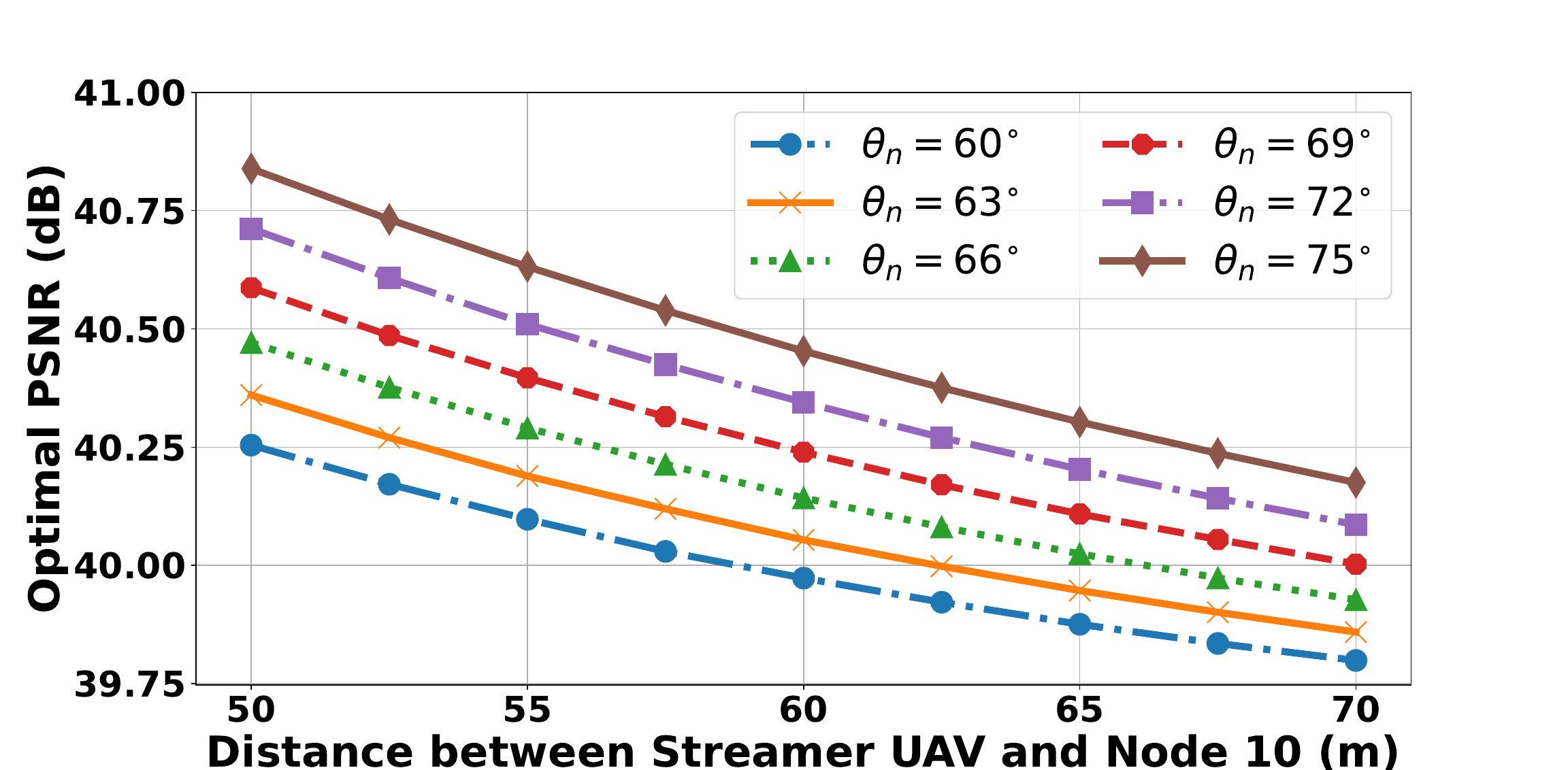}
        \caption{$\mathcal{P}_n^{\star}$ vs. $\theta_n$ for Streamer UAV}
        \label{OptPSNR_Dis}
    \end{minipage}%
    \begin{minipage}{.33\linewidth}
        \includegraphics[width=6cm, trim={20 0 100 40}, clip]{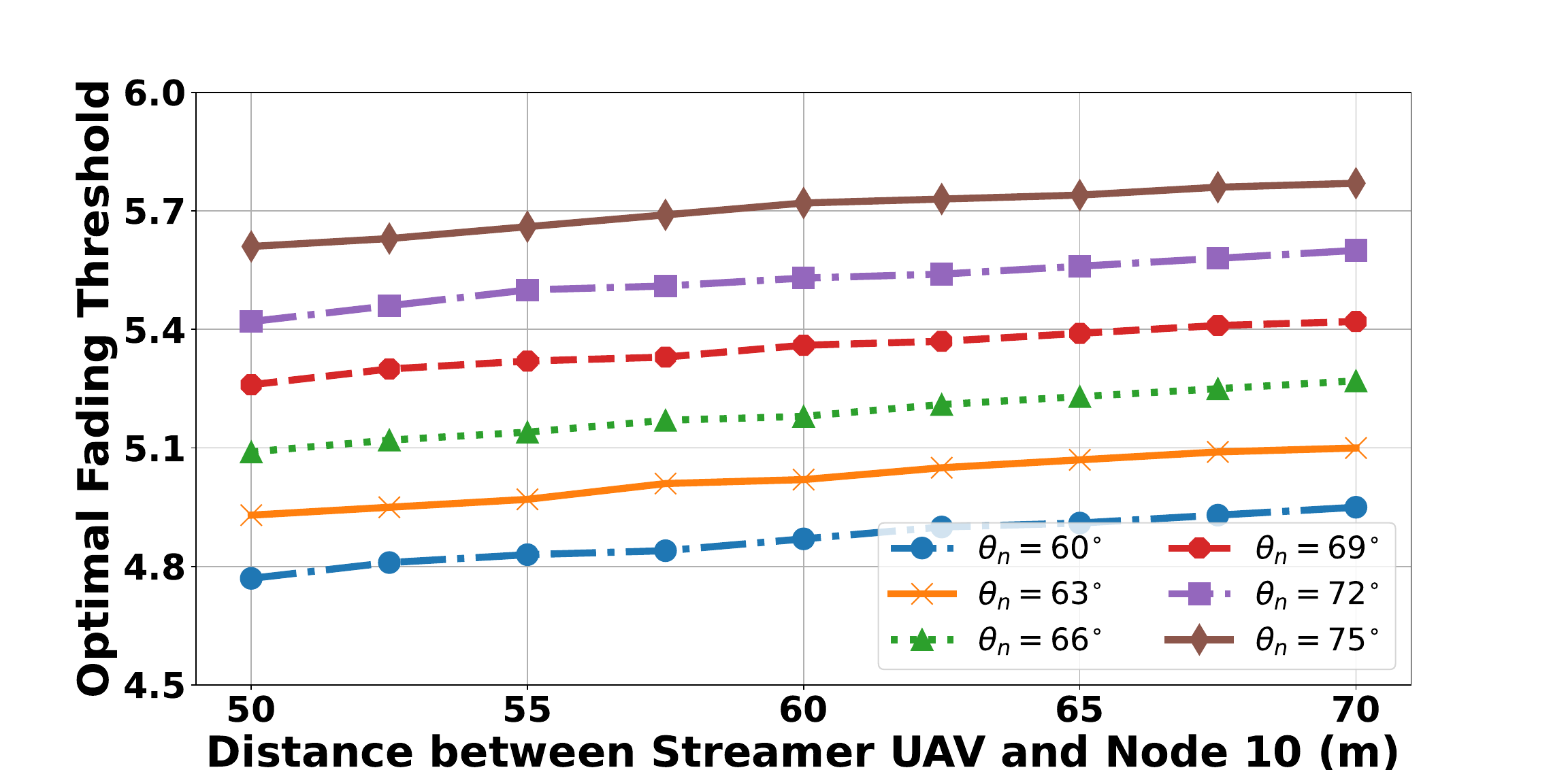}
        \caption{$\beta_n^{\star}$ vs. $\theta_n$ for Streamer UAV}
        \label{OptBeta_Dis}
    \end{minipage}%
    \begin{minipage}{.33\linewidth}
        \includegraphics[width=6cm, trim={20 0 100 40}, clip]{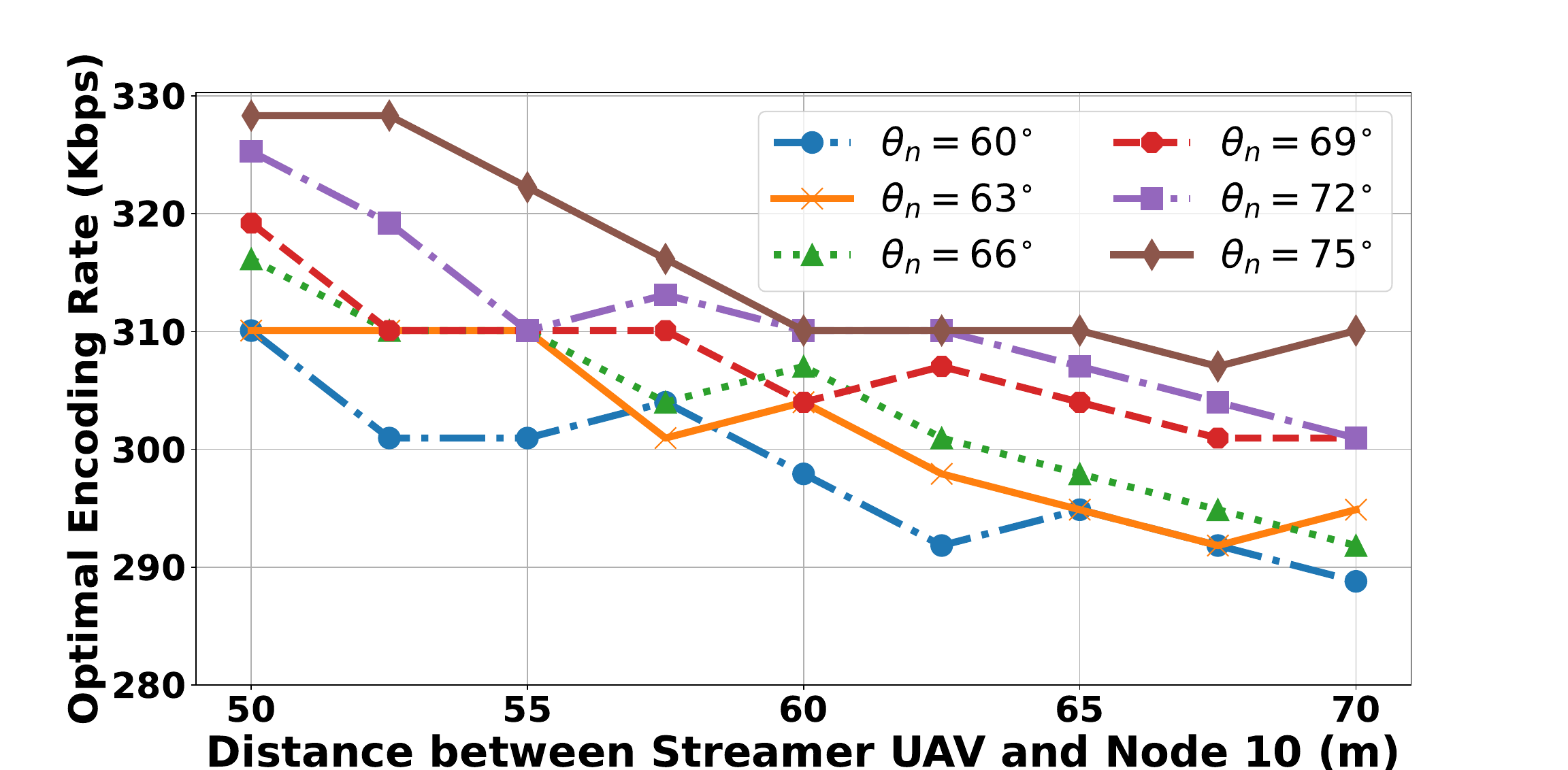}
        \caption{$E_n^{\star}$ vs. $\theta_n$ for Streamer UAV}
        \label{OptLambda_Dis}
    \end{minipage}%
\end{figure*}

Fig. \ref{PSNR_Sen} shows the optimal PSNR $\mathcal{P}_i^{\star}$ from the JDVT-EC algorithm for streamer nodes (nodes 1 to 5) by different sensitivity parameters $s_n$. Clearly, as the sensitivity parameter $s_n$ increases for streamer nodes, the optimal PSNR $\mathcal{P}_i^{\star}$ decreases since the impact of the packet loss distortion $\mathrm{D}_n^{lss}$ increases on the overall distortion $\mathrm{D}_n$. Also, the degradation of the optimal PSNR $\mathcal{P}_i^{\star}$ for streamer nodes with lower optimal PSNR $\mathcal{P}_i^{\star}$ (nodes 1 and 4) is higher than other streamers (nodes 2, 3, and 5). Similarly, the optimal PSNR $\mathcal{P}_i^{\star}$ by optimizing fading threshold $\beta_i$ and encoding rate $E_i$ is represented for different SINR thresholds $\gamma_{th}$ in Fig. \ref{PSNR_SINRth}. Indeed, as the SINR threshold $\gamma_{th}$ increases, the streamer nodes suffer a higher transmission error, and thus the packet loss distortion $\mathrm{D}_n^{lss}$ increases. As a result, the optimal PSNR $\mathcal{P}_i^{\star}$ from the JDVT-EC algorithm degrades by increasing the SINR threshold $\gamma_{th}$. Moreover, the impact of changing SINR threshold $\gamma_{th}$ is negligible on node 4 as an NLoS ground streamer with low optimal PSNR $\mathcal{P}_i^{\star}$, which streams in a crowded area.

\subsection{Algorithm Convergence Performance}
In Fig.~\ref{dtc_perf}, the convergence performance of the DTC algorithm is shown. The results demonstrate that how the fading threshold \( \boldsymbol{\beta} \) is refined until  the optimal value \( \boldsymbol{\beta}^{\star} \) is found (iteration 4). In particular, the algorithm starts from the maximum fading threshold \( \boldsymbol{\beta}^{\max} \) (iteration 0), and then obtains the selfish threshold values \( \boldsymbol{\beta}^{\text{slf}} \) in iteration 1 before finding the optimal threshold \( \boldsymbol{\beta}^{\star} \). Similarly, Figs. \ref{jdvt_ec_beta_perf} and \ref{jdvt_ec_enc_perf}, respectively, illustrate the convergence of fading threshold $\boldsymbol{\beta}$ and encoding rate $\boldsymbol{\mathrm{E}}$ achieved by the JDVT-EC algorithm for streamer nodes. We note that the fading threshold $\boldsymbol{\beta}$ starts at the maximum fading threshold $\boldsymbol{\beta}^{max}$ (iteration 0), and the encoding rate $\boldsymbol{\mathrm{E}}$ starts (iteration 0) at $304$ Kbps, which corresponds to $\lambda_n = 100$. From the results, we note that all streamer nodes reach a consensus (optimal values) on the fading threshold $\boldsymbol{\beta}$ and encoding rate $\boldsymbol{\mathrm{E}}$ within seven iterations.

\subsection{Spatial Performance}
In this part, we investigate the performance of our algorithm as the location of the streamer UAV changes. Due to space constraints, this subsection focuses solely on the streamer UAV's location, while all other nodes remain at fixed positions. In Fig. \ref{PSNR_Heatmap}, the optimal PSNR $\mathcal{P}_n^{\star}$ heatmap for 25 different locations in 100$\times$100 $\text{m}^2$ area is represented for streamer UAV at $50$ m altitude while the optimal fading threshold $\beta_n^{\star}$ and optimal encoding rate $E_n^{\star}$ are shown inside of each squared location. As the streamer UAV gets closer to its associated ground node 10 (-10.22 m, -29.74 m), it can efficiently serve node 10 in the presence of interferer nodes and experiences a higher optimal PSNR $\mathcal{P}_n^{\star}$. Moreover, the streamer UAV increases its optimal encoding rate $E_n^{\star}$ due to good channel condition and optimal fading threshold $\beta_n^{\star}$ since the range of fading threshold extends near node 10. 

In Fig. \ref{OptPSNR_Dis}, the optimal PSNR $\mathcal{P}^{\star}_n$ by the JDVT-EC algorithm is indicated for different distances \textbf{d} = \{50, 52.5, 55, 57.5, 60, 62.5, 65, 67.5, 70\} and elevation angles $\Theta$ = \{60, 63, 66, 69, 72, 75\} between the streamer UAV and node 10. Increasing the distance between the streamer UAV and its ground node (i.e., node 10) gradually degrades the optimal PSNR $\mathcal{P}^{\star}_n$. By increasing the elevation angle $\theta_n$, the streamer UAV establishes a better LoS channel with node 10 since the LoS probability $\mathbb{P}_{L}(d_i)$ increases; thus, the optimal PSNR $\mathcal{P}^{\star}_n$ increases. Overall, the spatial average PSNR $\Bar{\mathcal{P}}_n$ for all distances \textbf{d} and elevation angles $\Theta$ when the streamer UAV serving in this specific spatial area would be $\Bar{\mathcal{P}}_n = 40.22$ dB. Figs. \ref{OptBeta_Dis} and \ref{OptLambda_Dis} illustrate the optimal fading threshold $\beta_n^{\star}$ and optimal encoding rate $E_n^{\star}$ for the streamer UAV by changing the distance $d_n$ and elevation angle $\theta_n$ between the streamer UAV and node 10. 
As the distance increases, the optimal fading threshold $\beta_n^{\star}$ increases. Conversely, the optimal encoding rate $E_n^{\star}$ generally drops due to degraded channel conditions.
 
\section{Conclusion} \label{Conclusion}
In this paper, we investigated the problem of distributed transmission control and video streaming optimization for UAVs operating in unlicensed spectrum bands. We developed an analytical framework that jointly considers cross-layer parameters, including the channel parameters at the PHY layer, queuing parameters at the MAC layer, and video encoding rate at the application layer. Using this framework, we studied the throughput and PSNR according to the overall packet loss $\mathbb{P}_n^{lss}(\boldsymbol{\beta})$ and lossy video compression distortion $\mathrm{D}_n^{cmp}(\mathrm{E}_n)$. In our proposed solution, we introduced two algorithms, namely the DTC algorithm for distributed transmission control and the JDVT-EC algorithm for joint distributed video transmission and encoder control to optimize the video encoding rate $\mathrm{E}_n$ and the fading threshold $\beta_n$ for each source node. The objective function is to maximize the average throughput and PSNR over all nodes. Through extensive numerical evaluations, we demonstrated the efficacy of our algorithms and verified that they consistently achieved optimal solutions. As future work, we will extend our framework to incorporate other parameters, including transmit power and sub-channel allocation. Furthermore, we will investigate the scenarios in which several non-cooperative ground-aerial nodes operate in partially-observable network conditions.

\section*{Acknowledgment}
The material is based upon work supported by NASA under award No(s) 80NSSC20M0261 and NSF grants 1948511, 1955561, 2212565, and 2323189. Any opinions, findings, conclusions, or recommendations expressed in this material are those of the author(s) and do not necessarily reflect the views of NASA and NSF.

\appendices

\section{Proof of Buffer Overflow Model}
\label{bufferoverflowmodel}
\noindent
As mentioned, the probability of exceeding the buffer capacity in a certain state $i$ can be defined as:
\begin{align}
\begin{aligned}
\overline{\mathbb{P}_{i,i+1}} & = \mathbb{P}[x_1+...+x_{i+1} > b_n | x_1+...+x_i \le b_n]
\\ &
= \frac{\int_0^{b_n}\mathbb{P}[x_{i+1} > b_n - x]f_{x_1+...+x_i}(x)dx}{\mathbb{P}[x_1+...+x_i \le b_n]},
\end{aligned}
\end{align}
where $f_{x_1+...+x_i}(x)$ represents the PDF of an i-Erlang distribution. Consequently, the complement of $\overline{\mathbb{P}_{i,i+1}}$ without occurring buffer overflow, can be expressed as:
\begin{align}
\mathbb{P}_{i,i+1} = 1 - \overline{\mathbb{P}_{i,i+1}} = \frac{1 - \sum_{j=0}^i \frac{\Tilde{b}_n^j}{j!}e^{-\Tilde{b}_n}}{1 - \sum_{j=0}^{i-1} \frac{\Tilde{b}_n^j}{j!}e^{-\Tilde{b}_n}}.
\end{align}
Based on the Markov chain, the local balance equation is given by $\pi_{i+1} = \rho_n(\beta_n) \mathbb{P}_{i,i+1}\pi_{i}$,
where $\rho_n(\beta_n) = \frac{\lambda_nT^{slt}_n}{\mu_n(\beta_n)}$ denotes the offered load. Then, $\pi_i$ can be derived as:
\begin{align*}
\pi_i = \rho_n^i(\beta_n) \Big(\prod_{j=0}^{i-1}\mathbb{P}_{j,j+1}\Big)\pi_0 = \rho_n^i(\beta_n) \Big(1-\sum_{j=0}^{i-1}\frac{\Tilde{b}_n^j}{j!}e^{-\Tilde{b}_n}\Big)\pi_0. 
\end{align*}
Now, utilizing $\sum_{j=0}^\infty \pi_j = 1$ and taking into account that $\rho_n(\beta_n) < 1$, $\pi_0$ can be calculated as:
\begin{align}
\pi_0 = \frac{1-\rho_n(\beta_n)}{1-\rho_n(\beta_n) e^{-\Tilde{b}_n(1-\rho_n(\beta_n))}}. 
\end{align}
Finally, the probability of buffer overflow is approximated as:
\begin{align*}
\mathbb{P}_n^{ov}(\beta_n) \approx \sum_{i=0}^\infty \overline{\mathbb{P}_{i,i+1}}\pi_i = \frac{(1-\rho_n(\beta_n))e^{-\Tilde{b}_n(1-\rho_n(\beta_n))}}{1-\rho_n(\beta_n) e^{-\Tilde{b}_n(1-\rho_n(\beta_n))}}. 
\end{align*}

\section{Parameters of Log-normal Distribution}
\label{log-normaldistribution}
\noindent
As mentioned, $\mu(\boldsymbol{\beta}_{-n})$ and $\sigma(\boldsymbol{\beta}_{-n})$ are the location and scale parameters in Log-normal distribution, which are given by:
\begin{align*}
& \mu(\boldsymbol{\beta}_{-n}) = \ln(\Tilde{\mathbb{E}}[I_n^f(\boldsymbol{\beta}_{-n})]) - \frac{1}{2} \ln\Big(1 + \frac{\Tilde{\mathbb{D}}[I_n^f(\boldsymbol{\beta}_{-n})]}{(\Tilde{\mathbb{E}}[I_n^f(\boldsymbol{\beta}_{-n})])^2} \Big),
\\ & 
\sigma(\boldsymbol{\beta}_{-n}) = \sqrt{\ln\Big(1 + \frac{\Tilde{\mathbb{D}}[I_n^f(\boldsymbol{\beta}_{-n})]}{(\Tilde{\mathbb{E}}[I_n^f(\boldsymbol{\beta}_{-n})])^2}\Big)},
\end{align*}
where $\Tilde{\mathbb{E}}[I_n^f(\boldsymbol{\beta}_{-n})]$ and $\Tilde{\mathbb{D}}[I_n^f(\boldsymbol{\beta}_{-n})]$ are the first and second order moments of $I_n^f(\boldsymbol{\beta}_{-n})$, which are shown as:
\begin{align*}
& \Tilde{\mathbb{E}}[I_n^f(\boldsymbol{\beta}_{-n})] = \sum_{m\in \boldsymbol{N}\backslash n} \mathbb{E}\bigl[P_m(\hat{h}_{m}^f)^2(\Tilde{h}_{m}^f)^2\alpha_m^f(\beta_m)\bigl] 
\\ &
= \sum_{m\in \boldsymbol{N}\backslash n} \frac{P_m(\hat{h}_{m}^f)^2 \mu_m(\beta_m)}{|\boldsymbol{F}|} \int_{\beta_m}^\infty x^2 \mathrm{f}_{\Tilde{h}_{m}^f} (x) dx,
\\
& \Tilde{\mathbb{D}}[I_n^f(\boldsymbol{\beta}_{-n})] = \sum_{m\in \boldsymbol{N}\backslash n} \Big(\frac{P_m(\hat{h}_{m}^f)^2 \mu_m(\beta_m)}{|\boldsymbol{F}|}\Big)^2 \times
\\ &
\int_{\beta_m}^\infty x^4 \mathrm{f}_{\Tilde{h}_{m}^f} (x) dx + \sum_{m_1 \neq m_2} \frac{P_{m_1}(\hat{h}_{m_1}^f)^2 \mu_{m_1}(\beta_{m_1})}{|\boldsymbol{F}|} \times
\\ &
\int_{\beta_{m_1}}^\infty  x^2 \mathrm{f}_{\Tilde{h}_{m_1}^f} (x) dx \frac{P_{m_2}(\hat{h}_{m_2}^f)^2 \mu_{m_2}(\beta_{m_2})}{|\boldsymbol{F}|} \times 
\\ &
\int_{\beta_{m_2}}^\infty  x^2 \mathrm{f}_{\Tilde{h}_{m_2}^f} (x) dx - (\Tilde{\mathbb{E}}[I_n^f(\boldsymbol{\beta}_{-n})])^2.
\end{align*}
Here, $\mu_m(\beta_m)$ and $\mathrm{f}_{\Tilde{h}_{m}^f} (x)$ depend on the type of channel between the interferer node $m$ and the destination.

\section{Concavity of $\mathbb R_n$ and $\mathcal{P}_n$ with Respect to $\beta_n$} \label{concave_beta}
\noindent
In this section, we provide proofs of concavity of $\mathbb R_n$ and $\mathcal{P}_n$ with respect to $\beta_n$. It should be noted that while our distributed optimization problem is solved numerically, establishing such concavity results guarantees the convergence to the global optimum. To this end, we note that based on Eq.~\eqref{Rn} and Fig.~\ref{exact_approx}, the probability of overall loss $\mathbb{P}_n^{lss}(\boldsymbol{\beta})$ can be approximated by:
\begin{align*}
\small
\begin{aligned}
\mathbb{P}_n^{lss}(\boldsymbol{\beta}) \approx \mathbb{P}_n^{ov}(\beta_n) + \mathbb{P}_n^{dly}(\beta_n) + \mathbb{P}_n^{err}(\boldsymbol{\beta}).
\end{aligned}
\end{align*}
\begin{figure}[t]
\includegraphics[width=\linewidth, trim={40 0 100 50}, clip]{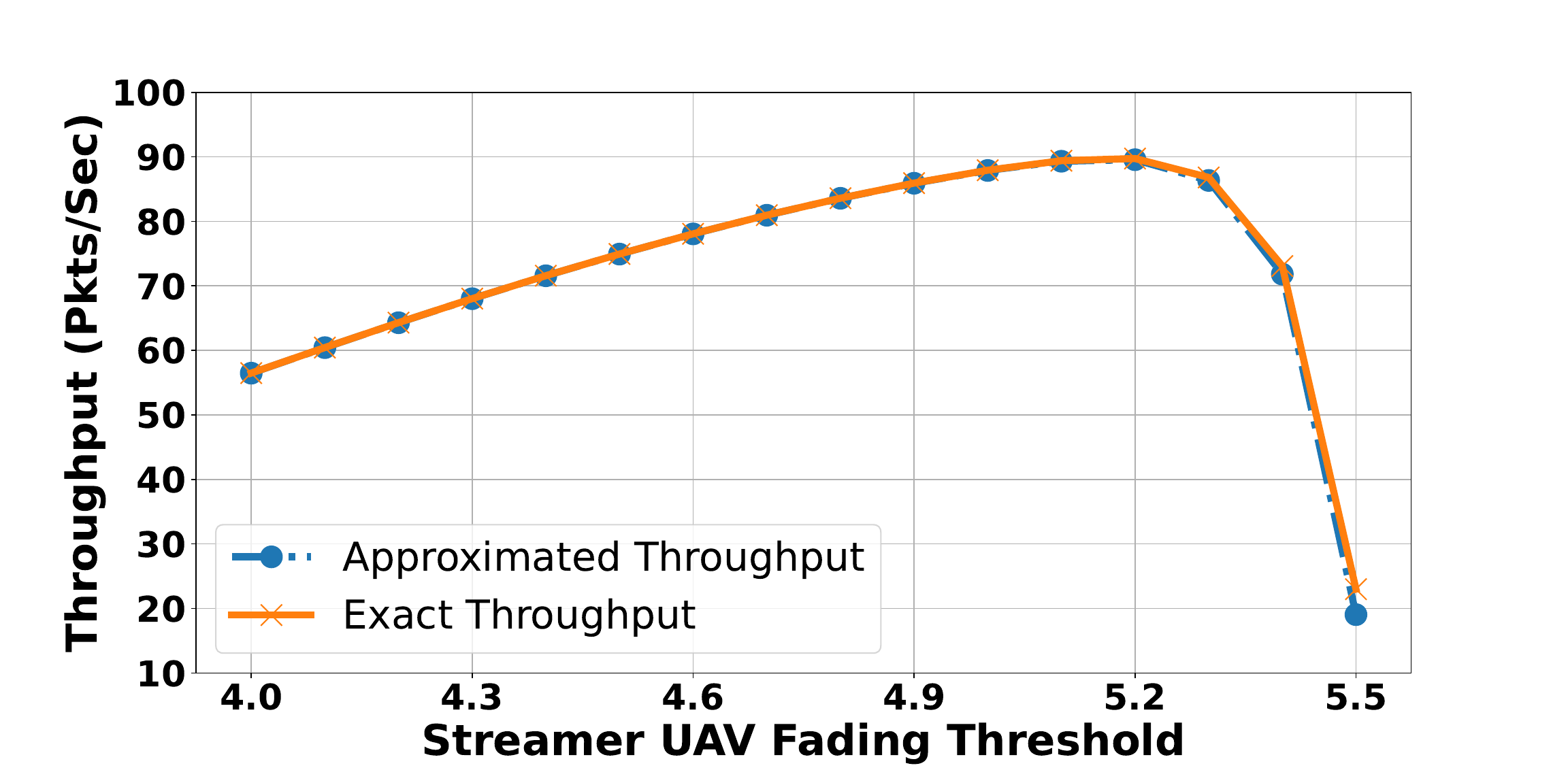}
\centering
\caption{Approximated and exact throughput $\mathbb R_n(\boldsymbol{\beta})$ values by different streamer UAV fading thresholds.}
\label{exact_approx}
\end{figure}
Accordingly, we only need to prove the convexity of $\mathbb{P}_n^{lss}$ with respect to $\beta_n$ i.e. $\frac{\partial^2 \mathbb{P}_n^{lss}}{\partial \beta_n^2} > 0$. Thus, we have:
\begin{equation}
\frac{\partial \mathbb{P}^{lss}_n}{\partial \beta_n} = \frac{\partial \mathbb{P}^{err}_n}{\partial \beta_n} + \frac{\partial \mathbb{P}^{dly}_n}{\partial \beta_n}.
\end{equation}
For the sake of simplicity, $\mathbb{P}_n^{ov}(\beta_n)$ is ignored due to its negligible value as indicated in Fig. \ref{Plss(node10)-Beta}. Hence, $\frac{\partial \mathbb{P}_n ^{lss}}{\partial \beta_n}$ can be expressed as:
\begin{align*}
\frac{\partial \mathbb{P}_n ^{lss}}{\partial \beta_n} & = -\mathrm{f}_{\Tilde{h}_n^f} (\beta_n) v_n(X,\boldsymbol{\beta}_{-n}) 
\\ &
+ \frac{|\boldsymbol{F}|T_n^{th}}{T^{slt}_n} 
\mathbb{P}_n^{dly}(\beta_n) 
\mathcal{F}_{\Tilde{h}_{n}^f} (\beta_n)^{|\boldsymbol{F}|-1} \mathrm{f}_{\Tilde{h}_n^f} (\beta_n),
\end{align*}
where $X = \frac{P_n \hat{h}_n^2}{\gamma_{th}} \beta_n^2 - \sigma^2$. Then,
\begin{equation}
\frac{\partial^2 \mathbb{P}^{lss}_n}{\partial \beta_n^2} = \frac{\partial^2 \mathbb{P}^{err}_n}{\partial \beta_n^2} + \frac{\partial^2 \mathbb{P}^{dly}_n}{\partial \beta_n^2}.
\end{equation}
Here, $\frac{\partial^2 \mathbb{P}_n^{err}}{\partial \beta_n^2}$ and $\frac{\partial^2 \mathbb{P}_n ^{dly}}{\partial \beta_n^2}$ can be calculated as:
\begin{align*}
\frac{\partial^2 \mathbb{P}_n^{err}}{\partial \beta_n^2} & = e^{-\frac{\beta_n^2+b^2}{2}} \Big((\beta_n^2 - 1)I_0(\beta_n b) - \beta_n bI_1(\beta_nb) \Big)
\\ &
\times v_n(X,\boldsymbol{\beta}_{-n}) - \mathrm{f}_{\Tilde{h}_{n}^f} (\beta_n) \frac{\partial v_n(X,\boldsymbol{\beta}_{-n})}{\partial\beta_n},
\end{align*}
where $\frac{\partial v_n(X,\boldsymbol{\beta}_{-n})}{\partial\beta_n} < 0$ and $0 \le v_n(X,\boldsymbol{\beta}_{-n}) \le 1$. Using the upper bound of fading threshold and $I_0(\beta_nb) > I_1(\beta_nb)$, $(\beta_n^2 - 1)I_0(\beta_n b) > \beta_n bI_1(\beta_nb)$, we have $\frac{\partial^2 \mathbb{P}_n^{err}}{\partial \beta_n^2} > 0$. Then,
\begin{align*}
& \frac{\partial^2 \mathbb{P}_n ^{dly}}{\partial \beta_n^2} = \frac{\partial \mathbb{P}_n ^{dly}}{\partial \beta_n} \frac{|\boldsymbol{F}|T_n^{th}}{T^{slt}_n} 
\mathcal{F}_{\Tilde{h}_{n}^f} (\beta_n)^{|\boldsymbol{F}|-1} \mathrm{f}_{\Tilde{h}_{n}^f} (\beta_n) + 
\\ &
\mathbb{P}_n^{dly}(\beta_n) \frac{|\boldsymbol{F}|T_n^{th}}{T^{slt}_n}
(|\boldsymbol{F}|-1) \mathcal{F}_{\Tilde{h}_{n}^f} (\beta_n)^{|\boldsymbol{F}|-2} \mathrm{f}_{\Tilde{h}_{n}^f} (\beta_n)^2
\\ &
+ \frac{\partial \mathrm{f}_{\Tilde{h}_n^f} (\beta_n)}{\partial\beta_n} \frac{|\boldsymbol{F}|T_n^{th}}{T^{slt}_n} 
\mathbb{P}_n^{dly}(\beta_n) 
\mathcal{F}_{\Tilde{h}_{n}^f} (\beta_n)^{|\boldsymbol{F}|-1},
\end{align*}
where $\frac{\partial \mathbb{P}_n ^{dly}}{\partial \beta_n} > 0$. Assuming $\frac{|\boldsymbol{F}|T_n^{th}}{T^{slt}_n} \gg 1$ to prove $\frac{\partial^2 \mathbb{P}_n ^{dly}}{\partial \beta_n^2} > 0$, need to demonstrate:
\begin{align*}
\frac{\partial \mathbb{P}_n^{dly}}{\partial \beta_n} \Big(1 + \frac{1-\beta_n^2}{\beta_n} + \frac{bI_1(\beta_nb)}{I_0(\beta_nb)} \Big) + |\boldsymbol{F}| - 1 > 0.
\end{align*}
This inequality can be numerically verified as long as the upper bound of fading threshold holds. 

\section{Concavity of $\mathcal{P}_n$ with Respect to $\mathrm{E}_n$}
\label{concave_en}
\noindent
To prove the concavity of $\mathcal{P}_n$ with respect to $E_n$, it is sufficient to prove the convexity of $\ln \mathrm{D}_n$ with respect to $E_n$ as follows:
\begin{equation} \label{ineq}
\frac{\partial^2 \mathrm{D}_n}{\partial \mathrm{E}_n^2} \mathrm{D}_n - 2\Big(\frac{\partial \mathrm{D}_n}{\partial \mathrm{E}_n}\Big)^2 \ge 0,
\end{equation}
where the first and second derivatives of $\mathrm{D}_n$ are provided as: 
\begin{align*}
& \frac{\partial \mathrm{D}_n}{\partial \mathrm{E}_n} = -\frac{\theta_0}{(\mathrm{E}_n - E_{0})^2} + \frac{s_n T_n^{th}}{L_n} \mathbb{P}_{n}^{dly}(\beta_n),
\\ & 
\frac{\partial^2 \mathrm{D}_n}{\partial \mathrm{E}_n^2}  = \frac{2\theta_0}{(\mathrm{E}_n - E_{0})^3} + s_n (\frac{ T_n^{th}}{L_n})^2 \mathbb{P}_{n}^{dly}(\beta_n).
\end{align*}
Here, it can be obtained:
\begin{align*}
& \frac{\partial^2 \mathrm{D}_n}{\partial \mathrm{E}_n^2} \mathrm{D}_n = \frac{2\theta_0 D_0}{(\mathrm{E}_n - E_{0})^3} + s_n D_0 (\frac{ T_n^{th}}{L_n})^2 \mathbb{P}_{n}^{dly}(\beta_n) +
\\ &
\frac{2\theta_0^2}{(\mathrm{E}_n - E_{0})^4} + \frac{s_n \theta_0}{\mathrm{E}_n - E_{0}} (\frac{T_n^{th}}{L_n})^2 \mathbb{P}_{n}^{dly}(\beta_n) + 
\\ &
\frac{2 s_n \theta_0 \mathbb{P}_{n}^{lss}(\boldsymbol{\beta})}{(\mathrm{E}_n - E_{0})^3} + (\frac{s_n T_n^{th}}{L_n})^2 \mathbb{P}_{n}^{dly}(\beta_n) \mathbb{P}_{n}^{lss}(\boldsymbol{\beta}),
\\ &
2\Big(\frac{\partial \mathrm{D}_n}{\partial \mathrm{E}_n}\Big)^2 = \frac{2\theta_0^2}{(\mathrm{E}_n - E_{0})^4} + 2 (\frac{s_n T_n^{th}}{L_n})^2 \mathbb{P}_{n}^{dly}(\beta_n)^2 
\\ &
- \frac{4 s_n \theta_0 T_n^{th} \mathbb{P}_{n}^{dly}(\beta_n)}{(\mathrm{E}_n - E_{0})^2 L_n},
\end{align*}
since $\mathbb{P}_{n}^{lss}(\boldsymbol{\beta}) > \mathbb{P}_{n}^{dly}(\beta_n)$, we can get the following inequality:
\begin{align*}
& \frac{2\theta_0 D_0}{(\mathrm{E}_n - E_{0})^3} + s_n D_0 (\frac{ T_n^{th}}{L_n})^2 \mathbb{P}_{n}^{dly}(\beta_n) +
\\ &
\frac{s_n \theta_0}{\mathrm{E}_n - E_{0}} (\frac{T_n^{th}}{L_n})^2 \mathbb{P}_{n}^{dly}(\beta_n) + \frac{2 s_n \theta_0 \mathbb{P}_{n}^{lss}(\boldsymbol{\beta})}{(\mathrm{E}_n - E_{0})^3}
\\ &
+ \frac{4 s_n \theta_0 T_n^{th} \mathbb{P}_{n}^{dly}(\beta_n)}{(\mathrm{E}_n - E_{0})^2 L_n} - (\frac{s_n T_n^{th}}{L_n})^2 \mathbb{P}_{n}^{dly}(\beta_n)^2 \ge 0,
\end{align*}
where $\theta_0 \gg s_n$, $\mathbb{P}_{n}^{dly}(\beta_n) > \mathbb{P}_{n}^{dly}(\beta_n)^2$, and $\frac{T_n^{th}}{L_n} \ll 1$. Thus, this inequality can be proved.

\begin{algorithm}[t]
\caption{Local Coordinate Search (LCS)}
\label{alg:LCS}
\begin{algorithmic}[1]
\Function{LCS}{$x$, $\boldsymbol{stp}$, $\mathtt{f}^{\star}$, $x^{max}$}
\State $bool \gets true$, $stp_x \gets \tau_x \times \omega_x$, $flg \gets 0$
\State \textbf{global} $x, stp_x, flg, \mathtt{f}^{\star}, bool$
\While{$bool$}
\State \Call{FBS}{$flg$, 1, $\boldsymbol{stp}$, $x^{max}$} $\#$ForwardSearch
\State \Call{FBS}{$flg$, -1, $\boldsymbol{stp}$, $x^{max}$} $\#$BackwardSearch
\EndWhile
\State \textbf{return} $x$
\EndFunction
\Function{FBS}{$k$, $sgn$, $\boldsymbol{stp}$, $x^{max}$}
    \While{$flg = k$ \textbf{and} $bool$} $stp_x \gets stp_x \times \omega^{-1}_{x}$
        \State $x^{old} \gets x$, $x \gets \Call{clip}{x + stp_x \times sgn, 0, x^{max}}$
        \If{$\mathtt{f}(x) > \mathtt{f}^{\star}$} $\mathtt{f}^{\star} \gets \mathtt{f}(x)$
        \Else: $flg \gets k + 1$, $x \gets x^{old}$
        \EndIf
    \EndWhile
    \While{$flg = k + 1$ \textbf{and} $bool$} $stp_x \gets stp_x \times \omega_x$
        \If{$stp_x < \tau_x$} $bool \gets false$
            \State \textbf{break}
        \EndIf
        \State $x^{old} \gets x$, $x \gets \Call{clip}{x + stp_x \times sgn, 0, x^{max}}$
        \If{$\mathtt{f}(x) > \mathtt{f}^{\star}$} $\mathtt{f}^{\star} \gets \mathtt{f}(x)$
        \Else: $x \gets x^{old}$
        \If{$stp_x = \tau_x$} 
            \State $flg \gets k + 2 \times sgn$, $stp_x \gets stp_x \times \omega_x$
        \EndIf
        \EndIf
    \EndWhile
\EndFunction
\Function{clip}{$x$, $x^{min}$, $x^{max}$}
\If{$x^{min} < x < x^{max}$} \textbf{return} $x$
\ElsIf{$x \ge x^{max}$} \textbf{return} $x^{max}$
\ElsIf{$x \le x^{min}$} \textbf{return} $x^{min}$
\EndIf
\EndFunction
\end{algorithmic}
\end{algorithm}

\section{Local Coordinate Search (LCS) Algorithm}
\label{LCS}
\noindent
The Local Coordinate Search (LCS) algorithm aims to determine the optimal value of $x = \{\beta_n, \lambda_n\}$ for each node in each iteration. In this algorithm, the LCS function uses the Forward-Backward Search (FBS) function, which handles the coordinate search direction ($sgn$), step size ($stp_x$), and where to terminate the algorithm ($bool$). The CLIP function limits the decision variable $x$ to a range determined by $x^{min}$ and $x^{max}$. Furthermore, $\boldsymbol{stp} = \{\boldsymbol{stp}_{\beta}, \boldsymbol{stp}_{\lambda} \} = \{\{\omega_{\beta}, \tau_{\beta}\}, \{\omega_{\lambda}, \tau_{\lambda}\} \}$ includes the step divider $\omega_x$, which changes the step size proportionally, and step accuracy $\tau_x$, which controls the accuracy of the decision variable and stops the algorithm according to the step size.

In Table \ref{time_complexity}, the time complexity of the proposed algorithms is presented, with the LCS algorithm serving as the core component. Therefore, we first analyze the complexity of the LCS algorithm, after which the complexity of the other algorithms can be readily determined. To this end, we note that the while 
loop $flg = k$ (line 11 in the LCS algorithm) terminates at most when $x$ meets $x^{max}$ or 0. Furthermore, the while loop $flg = k + 1$ (in line 17) terminates when by proportionally changing $stp_x$, the condition $stp_x < \tau_x$ is met for $\omega_x \in (0, 1)$. By combining these two, the worst-case complexity of the LCS algorithm can be derived as $\mathcal{O}(\log_{\omega^{-1}_x}(\frac{x^{max}}{\tau_x})) + \mathcal{O}(\log_{\omega^{-1}_x}(\frac{stp^{max}_x}{\tau_x})) = \mathcal{O}(\log_{\omega^{-1}_x}(\frac{stp^{max}_x \times x^{max}}{\tau_x^2}))$ where $stp^{max}_x = \frac{\tau_x}{\omega_x^{t}}$ (maximum initial step in $flg = k + 1$) in which $t$ is the smallest value satisfying $t \ge \frac{\ln (1 - \frac{x^{max}(1 - \frac{1}{\omega_x})}{\tau_x})}{|\ln \omega_x|}$. Given the time complexity of the LCS algorithm, we derive the worst-case performance of the other algorithms, as reported in Table~\ref{time_complexity} in which $i_{max}$ and $|\boldsymbol{M}|$ are the maximum number of iterations and number of nodes, respectively.

\begin{table}
    \centering
    \caption{Time Complexity of Algorithms}
    \resizebox{\columnwidth}{!}{
    \begin{tabular}{lc}
        \toprule
        Algorithm & Time Complexity \\
        \midrule
            LCS (Alg. \ref{alg:LCS}) & $\mathcal{O}(\log_{\omega^{-1}_x}(\frac{stp^{max}_x \times x^{max}}{\tau_x^2}))$\\
        DTC (Alg. \ref{alg:DTC}) & $\mathcal{O}(i_{max}.|\boldsymbol{M}|.\log_{\omega^{-1}_x}(\frac{stp^{max}_x \times x^{max}}{\tau_x^2}))$ \\
        DVTC (Alg. \ref{alg:DVTC}) & $\mathcal{O}(i_{max}.|\boldsymbol{M}|.\log_{\omega^{-1}_x}(\frac{stp^{max}_x \times x^{max}}{\tau_x^2}))$\\
        DVEC (Alg. \ref{alg:DVEC}) & $\mathcal{O}(|\boldsymbol{M}|.\log_{\omega^{-1}_x}(\frac{stp^{max}_x \times x^{max}}{\tau_x^2}))$ \\
        JDVT-EC (Alg. \ref{alg:JDVT-EC}) & $\mathcal{O}(i_{max}^2.|\boldsymbol{M}|.\log_{\omega^{-1}_x}(\frac{stp^{max}_x \times x^{max}}{\tau_x^2}))$\\
        \bottomrule
    \end{tabular}
    }
    \label{time_complexity}
\end{table}

{\small
\bibliographystyle{IEEEtranN}
\bibliography{ref}

\begin{thebibliography}{44}
\providecommand{\natexlab}[1]{#1}
\providecommand{\url}[1]{#1}
\csname url@samestyle\endcsname
\providecommand{\newblock}{\relax}
\providecommand{\bibinfo}[2]{#2}
\providecommand{\BIBentrySTDinterwordspacing}{\spaceskip=0pt\relax}
\providecommand{\BIBentryALTinterwordstretchfactor}{4}
\providecommand{\BIBentryALTinterwordspacing}{\spaceskip=\fontdimen2\font plus
\BIBentryALTinterwordstretchfactor\fontdimen3\font minus \fontdimen4\font\relax}
\providecommand{\BIBforeignlanguage}[2]{{%
\expandafter\ifx\csname l@#1\endcsname\relax
\typeout{** WARNING: IEEEtranN.bst: No hyphenation pattern has been}%
\typeout{** loaded for the language `#1'. Using the pattern for}%
\typeout{** the default language instead.}%
\else
\language=\csname l@#1\endcsname
\fi
#2}}
\providecommand{\BIBdecl}{\relax}
\BIBdecl

\bibitem[Yin et~al.(2015)Yin, Zhong, Yu, Zhang, Wong, and Chen]{yin2015joint}
R.~Yin, C.~Zhong, G.~Yu, Z.~Zhang, K.~K. Wong, and X.~Chen, ``Joint spectrum and power allocation for {D2D} comm. underlaying cellular networks,'' \emph{IEEE Trans. on Vehicular Tech.}, 2015.

\bibitem[{Nishiyama, Hiroki and Kawamoto, Yuichi and Takaishi, Daisuke}(2017)]{Nishiyama-2017-OFDM}
{Nishiyama, Hiroki and Kawamoto, Yuichi and Takaishi, Daisuke}, ``{On {OFDM}-Based Resource Allocation in {LTE} Radio Management System for Unmanned Aerial Vehicles ({UAVs})},'' \emph{{2017 IEEE 86th Vehicular Technology Conf. (VTC-Fall)}}, 2017.

\bibitem[Chintareddy et~al.(2023)Chintareddy, Roach, Cheung, and Hashemi]{Chintareddy-2023-Collaborative}
S.~R. Chintareddy, K.~Roach, K.~Cheung, and M.~Hashemi, ``Collaborative wideband spectrum sensing and scheduling for networked {UAV}s in {UTM} systems,'' \emph{GLOBECOM 2023 - 2023 IEEE Global Communications Conference}, 2023.

\bibitem[Scutari et~al.(2008)Scutari, Palomar, and Barbarossa]{scutari2008optimal}
G.~Scutari, D.~P. Palomar, and S.~Barbarossa, ``Optimal linear precoding strategies for wideband noncooperative systems based on game theory—part i: Nash equilibria,'' \emph{IEEE Transactions on Signal Processing}, 2008.

\bibitem[Yang et~al.(2016)Yang, Li, Semasinghe, Hossain, Perlaza, and Han]{yang2016distributed}
C.~Yang, J.~Li, P.~Semasinghe, E.~Hossain, S.~M. Perlaza, and Z.~Han, ``Distributed interference and energy-aware power control for ultra-dense d2d networks: A mean field game,'' \emph{IEEE Transactions on Wireless Communications}, 2016.

\bibitem[Guan et~al.(2011)Guan, Melodia, Yuan, and Pados]{guan2011distributed}
Z.~Guan, T.~Melodia, D.~Yuan, and D.~A. Pados, ``Dis. spectrum management and relay selection in interference-limited cooperative wireless networks,'' in \emph{Proceedings of the 17th annual international conf. on Mobile computing and networking}, 2011.

\bibitem[Khan et~al.(2024)Khan, Ahmad, Wakeel, Kaleem, Rashid, and Khalid]{Khan-2024-Efficient}
N.~Khan, A.~Ahmad, A.~Wakeel, Z.~Kaleem, B.~Rashid, and W.~Khalid, ``Efficient {UAV}s deployment and resource allocation in {UAV}-relay assisted public safety networks for video transmission,'' \emph{IEEE Access}, 2024.

\bibitem[Yuan et~al.(2017)Yuan, Jin, Chen, Sun, and Muntean]{Yuan-2017-ComProSe}
Z.~Yuan, J.~Jin, J.~Chen, L.~Sun, and G.-M. Muntean, ``Comprose: Shaping future public safety communities with prose-based {UAV}s,'' \emph{IEEE Communications Magazine}, 2017.

\bibitem[Sharma et~al.(2023)Sharma, Liu, Farhat, Sehad, Hamidouche, and Debbah]{Mohit-2023-UAV}
M.~K. Sharma, C.-F. Liu, I.~Farhat, N.~Sehad, W.~Hamidouche, and M.~Debbah, ``{UAV} immersive video streaming: A comprehensive survey, benchmarking, and open challenges,'' \emph{arXiv:2311.00082}, 2023.

\bibitem[Badnava et~al.(2022)Badnava, Reddy~Chintareddy, and Hashemi]{Badnava-2022-QoE}
B.~Badnava, S.~Reddy~Chintareddy, and M.~Hashemi, ``{QoE}-centric multi-user mmwave scheduling: A beam alignment and buffer predictive approach,'' \emph{2022 IEEE International Symposium on Information Theory (ISIT)}, 2022.

\bibitem[Bithas and Moustakas(2023)]{Bithas-2023-Generalized}
P.~S. Bithas and A.~L. Moustakas, ``Generalized {UAV} selection with distributed trans. policies,'' \emph{IEEE Trans. on Comm.}, 2023.

\bibitem[Bithas et~al.(2020)Bithas, Nikolaidis, Kanatas, and Karagiannidis]{Bithas-2020-Channel}
P.~S. Bithas, V.~Nikolaidis, A.~G. Kanatas, and G.~K. Karagiannidis, ``{UAV}-to-ground communications: Channel modeling and {UAV} selection,'' \emph{IEEE Trans. on Communications}, 2020.

\bibitem[Cui et~al.(2019)Cui, Ding, Deng, and Nallanathan]{Cui-2019-Data}
J.~Cui, Z.~Ding, Y.~Deng, and A.~Nallanathan, ``Model-free based automated trajectory optimization for {UAV}s toward data trans.'' \emph{2019 IEEE Global Comm. Conf. (GLOBECOM)}, 2019.

\bibitem[{Zhou Wang and Bovik, A.C. and Sheikh, H.R. and Simoncelli, E.P.}(2004)]{Zhou-2004-Image}
{Zhou Wang and Bovik, A.C. and Sheikh, H.R. and Simoncelli, E.P.}, ``{Image quality assessment: from error visibility to structural similarity},'' \emph{{IEEE Transactions on Image Processing}}, 2004.

\bibitem[Zhu et~al.(2005)Zhu, Setton, and Girod]{Xiaoqing-2005-Congestion}
X.~Zhu, E.~Setton, and B.~Girod, ``Congestion–distortion optimized video transmission over ad hoc networks,'' \emph{Signal Processing: Image Communication}, 2005.

\bibitem[Tian et~al.(2016)Tian, Zhang, Wu, and Yuan]{Tian-2016-Interference}
J.~Tian, H.~Zhang, D.~Wu, and D.~Yuan, ``Interference-aware cross-layer design for dist. video trans. in wireless networks,'' \emph{IEEE Trans. on Circuits and Systems for Video Tech.}, 2016.

\bibitem[{Badnava, Babak and Chakareski, Jacob and Hashemi, Morteza}(2025)]{BadnavaCH:24a}
{Badnava, Babak and Chakareski, Jacob and Hashemi, Morteza}, ``{Neural-Enhanced Rate Adaptation and Computation Distribution for Emerging {mmWave} Multi-User {3D} Video Streaming Systems},'' \emph{{IEEE Transactions on Multimedia}}, 2025.

\bibitem[Guan et~al.(2016)Guan, Melodia, and Scutari]{Guan-2016-ToTransmit}
Z.~Guan, T.~Melodia, and G.~Scutari, ``To transmit or not to transmit? distributed queueing games in infrastructureless wireless networks,'' \emph{IEEE/ACM Trans. on Networking}, 2016.

\bibitem[{Yi, Changyan and Cai, Jun}(2018)]{Yi-2018-Transmission}
{Yi, Changyan and Cai, Jun}, ``{Transmission Management of Delay-Sensitive Medical Packets in Beyond Wireless Body Area Networks: A Queueing Game Approach},'' \emph{{IEEE Transactions on Mobile Computing}}, 2018.

\bibitem[{Tian, Jie and Zhang, Haixia and Wu, Dalei and Yuan, Dongfeng}(2018)]{Tian-2018-QoS}
{Tian, Jie and Zhang, Haixia and Wu, Dalei and Yuan, Dongfeng}, ``{{QoS}-Constrained Medium Access Probability Optimization in Wireless Interference-Limited Networks},'' \emph{{IEEE Transactions on Communications}}, 2018.

\bibitem[{Ye, Chuang and Gursoy, M. Cenk and Velipasalar, Senem}(2018)]{Ye-2018-Quality}
{Ye, Chuang and Gursoy, M. Cenk and Velipasalar, Senem}, ``{Quality-Driven Resource Allocation for Wireless Video Trans. Under Energy Eff. and Delay Constraints},'' \emph{{IEEE Access}}, 2018.

\bibitem[Hellaoui et~al.(2019)Hellaoui, Chelli, Bagaa, Taleb, and Pätzold]{Hellaoui-2019-Towards}
H.~Hellaoui, A.~Chelli, M.~Bagaa, T.~Taleb, and M.~Pätzold, ``Towards eff. control of mobile network-enabled {UAV}s,'' \emph{2019 IEEE Wireless Comm. and Networking Conf. (WCNC)}, 2019.

\bibitem[{Shen, Chao and Chang, Tsung-Hui and Gong, Jie and Zeng, Yong and Zhang, Rui}(2020)]{Shen-2020-Multi}
{Shen, Chao and Chang, Tsung-Hui and Gong, Jie and Zeng, Yong and Zhang, Rui}, ``{Multi-{UAV} Interference Coordination via Joint Trajectory and Power Control},'' \emph{{IEEE Transactions on Signal Processing}}, 2020.

\bibitem[Kim and Lee(2019)]{Kim-2019-Impact}
M.~Kim and J.~Lee, ``Impact of an interfering node on unmanned aerial vehicle comm.'' \emph{IEEE Trans. on Vehicular Tech.}, 2019.

\bibitem[{Mei, Weidong and Zhang, Rui}(2020)]{Mei-2020-Cooperative}
{Mei, Weidong and Zhang, Rui}, ``{Cooperative Downlink Interference Transmission and Cancellation for Cellular-Connected {UAV}: A Divide-and-Conquer Approach},'' \emph{{IEEE Transactions on Communications}}, 2020.

\bibitem[Kim and Lee(2018)]{Kim-2018-Outage}
M.~Kim and J.~Lee, ``Outage probability of {UAV} communications in the presence of interference,'' \emph{2018 IEEE Global Communications Conference (GLOBECOM)}, 2018.

\bibitem[{He, Chao and Xie, Zhidong and Tian, Chang}(2019)]{He-2019-QoE}
{He, Chao and Xie, Zhidong and Tian, Chang}, ``{A {QoE}-Oriented Uplink Alloc. for Multi-{UAV} Video Streaming},'' \emph{{Sensors}}, 2019.

\bibitem[{Kesong Wu and Xianbin Cao and Peng Yang and Haijun Zhang and Tony Q. S. Quek and Dapeng Oliver Wu}(2024)]{Kesong-2024-Joint}
{Kesong Wu and Xianbin Cao and Peng Yang and Haijun Zhang and Tony Q. S. Quek and Dapeng Oliver Wu}, ``{Joint Source-Channel Optimization for {UAV} Video Coding and Transmission},'' \emph{{arXiv:2408.06667}}, 2024.

\bibitem[{Shen, Hang and Tong, Ziyuan and Wang, Tianjing and Bai, Guangwei}(2024)]{Shen-2024-UAV}
{Shen, Hang and Tong, Ziyuan and Wang, Tianjing and Bai, Guangwei}, ``{{UAV}-Relay-Assisted Live Layered Video Multicast for Cell-Edge Users in {NOMA} Networks},'' \emph{{IEEE Transactions on Broadcasting}}, 2024.

\bibitem[{Yu, Tao and Imada, Shunya and Araki, Kiyomichi and Sakaguchi, Kei}(2021)]{Yu-2021-Multi}
{Yu, Tao and Imada, Shunya and Araki, Kiyomichi and Sakaguchi, Kei}, ``{Multi-{UAV} Full-Duplex Comm. Systems for Joint Video Trans. and Flight Control},'' \emph{{2021 IEEE 11th Annual Computing and Comm. Workshop and Conf. (CCWC)}}, 2021.

\bibitem[Zhan et~al.(2021)Zhan, Hu, Sui, Liu, Wang, and Wang]{Zhan-2021-Joint}
C.~Zhan, H.~Hu, X.~Sui, Z.~Liu, J.~Wang, and H.~Wang, ``Joint resource allocation and {3D} aerial trajectory design for video streaming in {UAV} communication systems,'' \emph{IEEE Transactions on Circuits and Systems for Video Technology}, 2021.

\bibitem[Liao et~al.(2022)Liao, Zhan, Yang, and Zeng]{Liao-2022-QoE}
J.~Liao, C.~Zhan, Y.~Yang, and B.~Zeng, ``{QoE} maximization for multi-antenna {UAV}-enabled video streaming,'' \emph{GLOBECOM 2022 - 2022 IEEE Global Communications Conference}, 2022.

\bibitem[Tang et~al.(2021)Tang, Huang, and Hu]{Tang-2021-QoE}
X.-W. Tang, X.-L. Huang, and F.~Hu, ``{QoE}-driven {UAV}-enabled pseudo-analog wireless video broadcast: A joint optimization of power and trajectory,'' \emph{IEEE Transactions on Multimedia}, 2021.

\bibitem[Ghazikor et~al.(2024)Ghazikor, Roach, Cheung, and Hashemi]{ghazikor2024interference}
M.~Ghazikor, K.~Roach, K.~Cheung, and M.~Hashemi, ``Interference-aware queuing analysis for distributed transmission control in {UAV} networks,'' \emph{ICC 2024 - IEEE International Conference on Communications}, 2024.

\bibitem[Ghazikor et~al.(2023)Ghazikor, Roach, Cheung, and Hashemi]{Ghazikor-2023-Exploring}
------, ``Exploring the interplay of interference and queues in unlicensed spectrum bands for {UAV} networks,'' \emph{2023 57th Asilomar Conference on Signals, Systems, and Computers}, 2023.

\bibitem[Mohammed et~al.(2021)Mohammed, Collings, and Hanly]{Mohammed-2021-Line}
I.~Mohammed, I.~B. Collings, and S.~V. Hanly, ``Line of sight probability prediction for {UAV} comm.'' \emph{2021 IEEE International Conf. on Comm. Workshops (ICC Workshops)}, 2021.

\bibitem[Ren et~al.(2011)Ren, Wang, Chen, and Li]{Ren-2011-Modelling}
Z.~Ren, G.~Wang, Q.~Chen, and H.~Li, ``Modelling and sim. of rayleigh fading, path loss, and shadowing fading for wireless mobile networks,'' \emph{Sim. Modelling Practice and Theory}, 2011.

\bibitem[Azari et~al.(2018)Azari, Rosas, Chen, and Pollin]{Azari-2018-Ultra}
M.~M. Azari, F.~Rosas, K.-C. Chen, and S.~Pollin, ``Ultra reliable {UAV} communication using altitude and cooperation diversity,'' \emph{IEEE Transactions on Communications}, 2018.

\bibitem[Kumar et~al.(2023)Kumar, Bhattacharyya, Darshi, Majhi, Almohammedi, and Shailendra]{Kumar-2023-Outage}
P.~Kumar, S.~Bhattacharyya, S.~Darshi, S.~Majhi, A.~A. Almohammedi, and S.~Shailendra, ``Outage analysis using probabilistic channel model for drone assisted multi-user coded cooperation system,'' \emph{IEEE Trans. on Vehicular Tech.}, 2023.

\bibitem[Zhou et~al.(2018)Zhou, Yang, Zhou, and Zhang]{Zhou-2018-Coverage}
L.~Zhou, Z.~Yang, S.~Zhou, and W.~Zhang, ``Coverage probability analysis of {UAV} cellular networks in urban environments,'' \emph{2018 IEEE International Conference on Communications Workshops (ICC Workshops)}, 2018.

\bibitem[Gross et~al.(2008)Gross, Shortle, Thompson, and Harris]{Gross-2008-Fundamentals}
D.~Gross, J.~F. Shortle, J.~M. Thompson, and C.~M. Harris, \emph{Fundamentals of Queueing Theory}.\hskip 1em plus 0.5em minus 0.4em\relax Wiley-Interscience, 2008.

\bibitem[Berahas et~al.(2019)Berahas, Bollapragada, Keskar, and Wei]{Berahas-2019-Balancing}
A.~S. Berahas, R.~Bollapragada, N.~S. Keskar, and E.~Wei, ``Balancing communication and computation in distributed optimization,'' \emph{IEEE Transactions on Automatic Control}, 2019.

\bibitem[Wang et~al.(2021)Wang, Pan, Ren, Xu, Zhang, and Nallanathan]{Wang-2021-Packet}
K.~Wang, C.~Pan, H.~Ren, W.~Xu, L.~Zhang, and A.~Nallanathan, ``Packet error prob. and effective throughput for ultra-reliable and low-latency {UAV} comm.'' \emph{IEEE Trans. on Comm.}, 2021.

\bibitem[Recht and Fridovich-Keil(2019)]{Recht-2019-Choosing}
B.~Recht and S.~Fridovich-Keil, ``Choosing the step size: Intuitive line search algo. with eff. convergence,'' \emph{OPT2019: 11th Annual Workshop on Optimization for Machine Learning}, 2019.

\end{thebibliography}
}

\end{document}